\documentclass[journal,twoside,web]{ieeecolor}
\usepackage{lcsys}
\pagestyle{empty}

\usepackage{cite}
\usepackage{algorithmic}
\usepackage{graphicx,nicefrac}
\usepackage{textcomp}
\usepackage{latexsym,amsfonts,amssymb,amsmath,amscd,mathrsfs}
\usepackage[utf8]{inputenc}
\usepackage{times} 
\usepackage{xargs} 
\usepackage{xspace}
\usepackage{setspace}
\usepackage{lineno} 
\modulolinenumbers[5] 
\usepackage{subcaption}
\usepackage{float}
\usepackage[x11names,table]{xcolor}
\usepackage{tikz}
\usetikzlibrary{arrows}
\usepackage{pgfplots}
\usepackage{pstricks}
\usepackage{pst-node}
\usepackage[mathscr]{euscript}
\usepackage[normalem]{ulem}
\usepackage[ruled,lined,linesnumbered,commentsnumbered]{algorithm2e}
\usepackage{booktabs}
\usepackage{longtable}
\usepackage{colortbl}
\usepackage{multirow}
\usepackage{pifont}
\usepackage{array}
\usepackage{xcolor}
\usepackage{mathtools}

\usepackage{soul}
\usepackage{lscape}

\usepackage[font=small]{caption}
\usepackage{changepage}

\usepackage[figuresright]{rotating}

\newtheorem{theorem}{Theorem}

\newtheorem{definition}{Definition}
\newtheorem{proposition}[theorem]{Proposition}

\newtheorem{lemma}[theorem]{Lemma}

\newtheorem{remark}{Remark}

\newtheorem{assumption}{Assumption}

\usetikzlibrary{positioning}

\title{\LARGE \bf Failure and Success in Single-Drug Control of Antimicrobial Resistance}
\author{Alejandro Anderson$^{1}$, Rami Katz$^{1}$,  Francesca Calà Campana$^{1}$, Giulia Giordano$^{1}$
\thanks{$^{1}$ Industrial Engineering Department, University of Trento, Italy. {Email: \{alejandro.anderson,ram.kats,f.calacampana,giulia.giordano\}@unitn.it}}
\thanks{Work supported by the European Union through the Next Generation EU, Mission 4, Component 2, PRIN 2022 grant PRIDE (2022LP77J4, CUP E53D23000720006) and the ERC INSPIRE grant (101076926).}
}

\pgfplotsset{compat=1.18}

\begin{document}
\date{}
\maketitle
\thispagestyle{empty}
\pagestyle{empty}

\begin{abstract}
We propose a mathematical model of Antimicrobial Resistance in the host to predict the failure of two antagonists of bacterial growth: the immune response and a single-antibiotic therapy. After characterising the initial bacterial load that cannot be cleared by the immune system alone, we define the success set of initial conditions for which an infection-free equilibrium can be reached by a viable single-antibiotic therapy, and we provide a rigorously defined inner approximation of the set.
For initial conditions within the success set, we propose an optimal control framework to design single-drug therapies.
\end{abstract}

\begin{IEEEkeywords}
Antimicrobial Resistance, Biological systems, Biomedical Systems, Optimal Therapy.
\end{IEEEkeywords}

\section{Introduction and Motivation}

\IEEEPARstart{A}{ntibiotics} are crucial in the treatment of bacterial infections \cite{hutchings2019antibiotics}. However, bacterial populations exposed to antibiotic selective pressure can develop mechanisms to become insensitive to their action, and thus avoid extinction. This leads to the emergence of Antimicrobial Resistance (AMR): the decline of sensitive strains is paralleled by the surge of resistant ones, for which the treatment is ineffective.
AMR is a major threat to human health worldwide and causes globally 4.95 million deaths/year \cite{murray2022global}.

Deterministic ODE-based models for AMR in a host (see for instance \cite{Garber1987,fujikawa2003new,Dagata08,Gehring2010,Gomes2013} and the recent surveys \cite{Blanquart2019,Tetteh2020}) that capture the evolution of bacterial populations under antibiotic pressure, emergence of resistance due to natural selection and immune system response \cite{Dagata08,Pugliese2008,smith2011mathematical,Gomes2013},
are fundamental to design optimal drug administration strategies to contrast or prevent AMR \cite{hernandez2021switching,tetteh2023scheduling,Katriel2024}.
Also pharmacokinetic (PK) and pharmacodynamic (PD) considerations \cite{regoes2004pharmacodynamic,Nielsen2013,udekwu2018pharmacodynamic,Katriel2024} need to be embedded in the model, to faithfully describe the evolution of the drug concentration in the host.

Building upon this rich literature, we propose an in-host mathematical model of AMR that captures the evolution of bacterial populations and the emergence of resistance by including a PK/PD description of antibiotic effects, drug-induced mutations and the immune system response, and we rigorously define a clinically relevant set of viable therapies that capture the periodic nature of drug administration (Section~\ref{sec:KeyAssumptions}).
In the absence of antibiotics, we characterise initial conditions for which the immune response alone cannot clear the infection and a therapy is required (Section~\ref{sec:immunefailure}). Then, in Section~\ref{sec:antibiotic}, we study the asymptotic behavior of the system in the presence of antibiotics, and rigorously define a success set of initial conditions for which there exists a viable therapy that clears the infection, and its complementary failure set for which the use of combinations of more drugs \cite{anderson2025invariant,hernandez2021switching,tetteh2023scheduling} is recommended to contrast the emergence of resistance and PK/PD limitations; we show that both sets are non-empty and we propose a constructive inner approximation of the success set, which we then compute numerically (Section~\ref{sec:numeric}). For initial conditions within the success set, Section~\ref{sec:optimal} formulates an optimal control problem to look for viable single-drug therapies that clear the infection while minimising a desired performance metric.

\section{PK/PD Model for AMR in the
Host}\label{sec:KeyAssumptions}

Given a drug and a microorganism, the \textit{Minimum Inhibitory Concentration} (MIC) is the lowest drug concentration that inhibits the visible microorganism growth and can be determined in standard phenotype test systems \cite{kowalska2021minimum,kahlmeter2003european,andrews2001determination}.

We consider fixed $u_M$ and $u_{\max}$, where $u_M >0$ is the MIC for susceptible bacteria and $u_{\max} > u_M$ is the maximum tolerated drug concentration. The MIC for resistant bacteria is above $u_{\max}$, hence no tolerated drug concentration can suppress their growth.
We consider a therapy with a single antibiotic administered so that its concentration $u(t) \in \mathcal{U} = \{\upsilon \colon 0 \leq \upsilon \leq u_{\max}\}$ evolves as follows \cite{regoes2004pharmacodynamic}:
\begin{equation}\label{eq:udynamic}
\begin{cases}
\dot{u}(t) = -\delta u(t), ~\tau_j \leq t < \tau_{j+1} \\
u(\tau_j) = w_j + u(\tau_j^-), ~j \in \mathbb{N}_0,
\end{cases}
\end{equation}
where $u(\tau_j^-) = \lim_{t \to \tau_j^-} u(t)$.
The therapy starts at time $t=0$. Drug administration occurs at an increasing sequence of times $\tau_j$, $j \in \mathbb{N}_0$, and $w_j \in \mathcal{W}=\{\upsilon \colon 0 < \upsilon \le u_{\max}-u_M\}$ is the dose administered at time $\tau_j$. Between administration times, the drug concentration decays exponentially.

\begin{definition}\label{def:viableu} \textbf{(Viable therapy.)}
A \textit{viable therapy} $u(t)$ evolves according to \eqref{eq:udynamic}, with $w_j\in \mathcal{W}$ $\forall \tau_j$, so that $u(t)\in \{\upsilon \colon u_M \leq \upsilon \leq u_{\max}\}$ $\forall t$.
\end{definition}

In particular, we consider a class of therapies with constant dose $w_j \equiv w\in \mathcal{W}$ and constant administration period $\Delta \tau=\tau_{j+1}-\tau_j>0$, $\forall$ $j \in \mathbb{N}_0$; see e.g. Fig.~\ref{fig:s_r_udynamics}a.

\begin{definition}\label{def:periodicu} \textbf{(Periodic therapy.)}
A \textit{periodic therapy} $u(t)$, $t \geq 0$, with $0 \leq \tau_0 \leq \Delta \tau$, evolves according to \eqref{eq:udynamic} with $w_j \equiv w \in \mathcal{W}$ and $\tau_{j+1}-\tau_j=\Delta \tau$ $\forall j \in \mathbb{N}_0$ for $t \geq \tau_0$.
Moreover, if $\tau_0=0$, $u(\tau_0)= w+u_M$; if $\tau_0>0$, $u(t)=u_M e^{-\delta (t-\tau_0)}$ for $0 \leq t < \tau_0$.
\end{definition}

This amounts to considering a possible initial adjustment period $[0, \tau_0)$ in which an initial test dose larger than $u_M$ is given at time $t=0$ to check the patient's reaction and rule out adverse effects; then, when the drug concentration reaches $u_M$ (at $t=\tau_0$), a periodic therapy starts.

The set of viable therapies is non-empty. In fact, if we consider a periodic therapy, given $w$, we can always choose $\Delta \tau$ so that the therapy is viable, as shown next.

\begin{proposition}\label{propo:impulsive_condition}
Denote $\ell_{\text{max}} \doteq u_{\text{max}}/ u_M > 1$ and fix $\ell$, with $0<\ell \leq \ell_{\text{max}}-1$, so that $w=\ell \, u_M \in \mathcal{W}$.
A periodic therapy $u(t)$, with constant dose $w$ and period $\Delta \tau>0$, is viable if $\Delta \tau = \delta^{-1}\ln{(\ell+1)}$.
\end{proposition}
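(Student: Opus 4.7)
The strategy is a direct computation: integrate the exponential decay between administration times, exploit periodicity to identify the steady‑state concentration at each $\tau_j$, and then verify that the two viability bounds $u_M \leq u(t) \leq u_{\max}$ hold on every interval.

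First I would focus on $t \geq \tau_0$, where the therapy is already periodic with dose $w=\ell u_M$ and period $\Delta\tau$. From \eqref{eq:udynamic}, on $[\tau_j,\tau_{j+1})$ one has $u(t) = u(\tau_j) e^{-\delta(t-\tau_j)}$, so the map between consecutive just‑after‑dose values is $u(\tau_{j+1}) = w + u(\tau_j)e^{-\delta\Delta\tau}$. With the proposed choice $\Delta\tau = \delta^{-1}\ln(\ell+1)$, the decay factor is $e^{-\delta\Delta\tau} = 1/(\ell+1)$. Using the boundary conditions in Definition~\ref{def:periodicu} (either $u(\tau_0)=w+u_M$ when $\tau_0=0$, or $u(\tau_0^-)=u_M$ from the initial test‑dose branch when $\tau_0>0$, so that $u(\tau_0)= w + u_M = (\ell+1)u_M$ in both cases), one gets $u(\tau_j) = (\ell+1)u_M$ for all $j \in \mathbb{N}_0$ by induction: indeed $u(\tau_{j+1}) = \ell u_M + (\ell+1)u_M \cdot (\ell+1)^{-1} = (\ell+1)u_M$.

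Next I would verify the viability bounds on each $[\tau_j,\tau_{j+1})$. Since $u(t)$ is monotonically decreasing on this interval, its maximum is $u(\tau_j) = (\ell+1)u_M \leq \ell_{\max} u_M = u_{\max}$, using the hypothesis $\ell \leq \ell_{\max}-1$; its minimum is $u(\tau_{j+1}^-) = (\ell+1)u_M \cdot (\ell+1)^{-1} = u_M$. Hence $u_M \leq u(t) \leq u_{\max}$ for every $t \geq \tau_0$.

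Finally I would handle the adjustment interval $[0,\tau_0)$ when $\tau_0>0$. Here $u(t) = u_M e^{-\delta(t-\tau_0)} = u_M e^{\delta(\tau_0-t)}$ is monotonically decreasing from $u(0) = u_M e^{\delta\tau_0}$ down to $u(\tau_0^-) = u_M$. The lower bound $u(t)\geq u_M$ is immediate. For the upper bound, since Definition~\ref{def:periodicu} imposes $\tau_0 \leq \Delta\tau = \delta^{-1}\ln(\ell+1)$, I get $u(0) \leq u_M (\ell+1) \leq u_M \ell_{\max} = u_{\max}$. Thus $u(t)\in[u_M,u_{\max}]$ on the whole time axis, and the therapy is viable. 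The only step requiring care — and the one I would double‑check — is the matching of the value $u(\tau_0)=(\ell+1)u_M$ from the two cases $\tau_0=0$ and $\tau_0>0$, so that the induction starting the periodic regime has the correct initial value in both branches.
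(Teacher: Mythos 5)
Your proof is correct and follows essentially the same route as the paper: explicit integration of the exponential decay on each dosing interval, verification that the post-dose level is $(\ell+1)u_M$ and the pre-dose level is exactly $u_M$, plus the separate check of the adjustment interval $[0,\tau_0)$ using $\tau_0\leq\Delta\tau$. The only difference is that you spell out the induction on $u(\tau_j)$, which the paper leaves implicit in writing $u(t)=(w+u_M)e^{-\delta(t-\tau_j)}$ on every interval.
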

\begin{proof}
For $0 \leq t < \tau_0$ (assuming that $\tau_0>0$), the lower bound on $u(t)$ holds trivially. The upper bound holds since $u(0) = u_Me^{\delta \tau_0}\leq u_Me^{\delta \Delta\tau} = u_M (\ell+1) \leq u_{\text{max}}$. 

Now consider $t \geq \tau_0$. The drug concentration evolves as $u(t) = (w+u_M) e^{-\delta (t-\tau_j)}$ for $\tau_j \leq t< \tau_j+\Delta{\tau}$, $j \in \mathbb{N}_0$.
Since $\Delta{\tau} = \delta^{-1}\ln{(\ell+1)}$ and $w=\ell \, u_M$, we have $u((\tau_j+\Delta \tau)^-)= (\ell+1) u_M e^{-\ln (\ell+1)} = u_M$ for all $j \in \mathbb{N}_0$. 
Hence, $u(t)\ge u_M$ $\forall$ $t\ge \tau_0$. 
Since $w \leq u_{\max}-u_M$, the chosen $\Delta{\tau}$ also guarantees $u(t)\le u_{\max}$ $\forall$ $t\ge \tau_0$.
\end{proof}

We denote as $\mathcal{V}$ the \textbf{set of viable periodic therapies as in Proposition~\ref{propo:impulsive_condition}}, on which we focus throughout this work.

\begin{remark}\label{rem:viable}
Given two doses $u_{0,1} > u_{0,2}$, the corresponding therapies $u_{1},u_{2} \in \mathcal{V}$ do not satisfy $u_{1}(t) \geq u_{2}(t)$ $\forall t$, hence a standard order relation does not apply.
\end{remark}

For the death rate $D(u)$ of sensitive bacteria exposed to a drug concentration $u$, consider the Hill function \cite{udekwu2018pharmacodynamic,regoes2004pharmacodynamic}
\begin{equation}\label{eq:HillFunction}
\begin{array}{r}
D(u) = E_{\max} \frac{(u/EC_{50})^k}{1 + (u/EC_{50})^k},
\end{array}
\end{equation}
with Hill coefficient $k$, maximum drug-mediated death rate $E_{\max}$, and drug concentration $EC_{50}$ yielding half-maximal death rate, $D(EC_{50})=E_{\max}/2$;
see e.g. Fig.~\ref{fig:s_r_udynamics}b.

We now introduce the ordinary-differential-equation model capturing the evolution of bacterial populations in the host, considering both susceptible bacteria $s$ and the total bacterial population $b = s+r$, where $r$ denotes resistant bacteria:
\begin{equation}\label{eq:originalModel}
\begin{cases}
\dot b = \alpha b \left(1 -\frac{b}{N}\right) -  I(b) b - D(u)s\\
\dot s = \alpha s \left(1 -\frac{b}{N}\right) - I(b) s -  D(u) s - M(u) s
\end{cases}
\end{equation}
By \eqref{eq:originalModel}, $\dot r = \alpha r \left(1 -\frac{b}{N}\right) - I(b) r + M(u)  s$.
Both susceptible and resistant bacteria have a net growth rate $\alpha>0$, affected by a logistic term with carrying capacity $N>0$ to account for resource limitations \cite{fujikawa2003new}.
Term $I(b)$ captures the host immune response effect, modelled as \cite{Dagata08,Pugliese2008,Gomes2013}
\begin{equation}\label{eq:inmuneresponse}
\begin{array}{r}
    I(b) = \beta \frac{K}{K+b},
\end{array}
\end{equation}
where $\beta>\alpha$ is the maximum killing rate and $K<N$ is the total bacterial load yielding half-maximal rate. Since $\beta > \alpha$, without treatment, the immune system clears the infection ($\dot b<0$) if $b$ is sufficiently small.
Recall that $u_M$ is the MIC for sensitive bacteria, which undergo mutations conferring resistance to the drug at rate
\begin{equation}\label{eq:mu}
M(u) = \mu H(u-u_M)
\end{equation}
where $\mu>0$ is a constant mutation rate \cite{Drake1998} and $H$ is a Heaviside step function.
The drug-mediated death rate $D(u)$, with $u\in \mathcal{U}$, only affects sensitive bacteria and is such that
\begin{equation}\label{eq:suscepCondition}
D(u) \ge \alpha \quad \forall u \ge u_M.
\end{equation}
Hence, if $s>0$, $\dot s < 0$ when $u \geq u_M$ (which is true $\forall u \in \mathcal{V}$).

Fig.~\ref{fig:s_r_udynamics}c shows an example of time evolution of $s(t)$, $r(t)$, $b(t)$ according to system~\eqref{eq:originalModel}, with $u \in \mathcal{V}$. 

\begin{remark}\label{Rem:DiscontofRHS}
Since the evolution of $u$ is discontinuous, system \eqref{eq:originalModel} has a discontinuous right-hand side in $t\in \mathbb{R}$. Still, due to \eqref{eq:udynamic}, for any initial time $t_0$ and any initial condition $(b',s')$ of \eqref{eq:originalModel}, there exists $\mathcal{I}_\varepsilon=[t_0,t_0+\varepsilon)$ with $\varepsilon>0$ on which $u$ is continuous. On $\mathcal{I}_\varepsilon$ the right-hand side of \eqref{eq:originalModel} is continuous in $t$ and Lipschitz continuous in $(b,s)$.
By the Picard-Lindelöf theorem \cite{teschl2012ordinary}, unique solutions of \eqref{eq:originalModel} exist on $\mathcal{I}_\varepsilon$, and are extended in time to $+\infty$ as follows: given $\tau_j$, one has a unique solution of \eqref{eq:originalModel} with initial condition $(s(\tau_j),b(\tau_j))$ on $[\tau_j,\tau_{j+1})$ and defines 
$(b(\tau_{j+1}),s(\tau_{j+1}))=\lim_{t\to \tau_{j+1}^-}(b(t),s(t))$. These solutions are continuous and piecewise continuously-differentiable on $[0,\infty)$.
\end{remark}

\begin{figure}[tb]
    \centering  \includegraphics[width=0.4\textwidth]{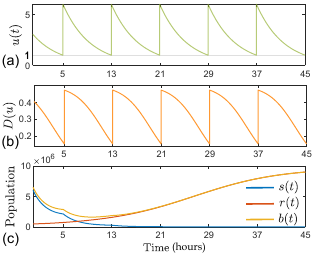}
    \caption{Single-antibiotic dynamics. Time evolution of: \textbf{(a)} $u \in \mathcal{V}$ with $u_M = 1$, $u_{\max} = 6$, $w=\ell \, u_M$, $\ell=5$, $\delta =  0.224$, $\tau_0=5 h$, $\Delta{\tau}=8h \approx \delta^{-1} \ln(\ell+1)$; \textbf{(b)} $D(u)$ as in \eqref{eq:HillFunction} with $E_{\max} = 0.5$, $EC_{50} = 1.5$, $k = 2$; \textbf{(c)} $b(t)$, $s(t)$ as in \eqref{eq:originalModel} and $r(t)=b(t)-s(t)$ with $N=10^7$, $\alpha = 0.15$, $\beta = 50$, $K = 1000$, $\mu = 10^{-5}$, for the given $D(u)$, with $b(0) = 6.5\cdot 10^6$, $s(0) = 6\cdot 10^6$, $r(0) = 0.5\cdot 10^6$.}
    \label{fig:s_r_udynamics}
\vspace{-5mm}
\end{figure}

\section{Failure and Success in Infection Clearance}\label{sec:main}

For suitable parameters, system~\eqref{eq:originalModel} admits not only an infection-free equilibrium, but also pathogenic equilibria with a non-zero bacterial load. We define as \textit{failure} the convergence of the total bacterial population to a pathogenic equilibrium, and we identify initial conditions from which failure is avoided, first by the immune response alone, in the absence of antibiotics ($u \equiv 0$), and then by a therapy $u \in \mathcal{V}$.

System~\eqref{eq:originalModel} evolves in the domain
\begin{equation}\label{eq:domain}
\mathbb{D} = \{(b,s)\in\mathbb{R}^2 \colon 0 \leq b\leq N \mbox{ and } 0 \leq s\le b\},
\end{equation}
which is invariant. In fact, $(0,0)$ is an equilibrium; when $s=0$ and $0 < b \leq N$, $\dot s =0$; when $b=N$ and $0 \leq s \leq N$, $\dot b < 0$; when $0 < s=b \leq N$, in light of \eqref{eq:suscepCondition}, $\dot s<\dot b<0$, and hence the vector field points strictly inside $\mathbb{D}$.

\subsection{Failure and success of the immune response}\label{sec:immunefailure}

Consider system~\eqref{eq:originalModel} with $u \equiv 0$; since $\frac{d}{dt}{(\frac{s}{b})}=\frac{\dot s b - \dot b s}{b^2}=0$, $s(t)=\frac{s(0)}{b(0)}b(t)$, and it is enough to focus on the dynamics of $b$. Define the parameters $q = \frac{N}{K}>1$ (since $N>K$) and $n = \frac{\beta}{\alpha}>1$ (since $\beta > \alpha$). Then,
\begin{equation}\label{eq:btotal}
\begin{array}{r}
\dot b =  \alpha b \left(1 -\frac{b}{N}\right) -  \frac{n\alpha b}{1 + \frac{q}{N} b}=\alpha b \left(1 -\frac{b}{N} - \frac{n}{1 + \frac{q}{N} b}\right).
\end{array}
\end{equation}
\begin{proposition}\label{prop:equilibria}
Let $\mathcal{D}=(q-1)^2-4q(n-1)$. System \eqref{eq:btotal} always admits the asymptotically stable equilibrium $b_0^*=0$, which is unique if $\mathcal{D}<0$. If $\mathcal{D}=0$, the system admits one nonzero equilibrium $b^* < N$, which is unstable. If $\mathcal{D}>0$, the system admits two nonzero equilibria $0<b^*_1<b^*_2<N$, with $b_1^*$ unstable and $b_2^*$ asymptotically stable.
\end{proposition}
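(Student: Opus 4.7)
The plan is to reduce the analysis to a one-dimensional scalar ODE and a downward-opening quadratic. First I would factor out $b$ in \eqref{eq:btotal} by multiplying through by $(1+(q/N)b)$, so that for $b \in [0,N]$ the sign of $\dot b$ coincides with the sign of $b \cdot g(b/N)$, where
\[
g(x) \doteq (1-x)(1+qx) - n = -qx^2 + (q-1)x + (1-n).
\]
Thus the nonzero equilibria of \eqref{eq:btotal} correspond bijectively, via $x = b/N$, to roots of $g$ in $(0,1]$, and the discriminant of the quadratic $qx^2 - (q-1)x + (n-1)=0$ is precisely $\mathcal{D}=(q-1)^2-4q(n-1)$.

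Next I would localise the roots. Since $g$ opens downward with $g(0)=1-n<0$, whenever real roots exist, by Vieta they have sum $(q-1)/q>0$ and product $(n-1)/q>0$ (using $q,n>1$), so both roots are strictly positive. For the upper bound, I would verify $g(1) = -q(n-1)-q+q = -qn < 0$ (or equivalently check that $[(q-1)\pm\sqrt{\mathcal{D}}]/(2q)<1$ reduces to $\mathcal{D}<(q+1)^2$, which always holds since $n>0$). This places both roots inside $(0,1)$, giving $b^*, b_1^*, b_2^* \in (0,N)$ whenever they exist.

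Stability then follows from the sign chart of $g$ on $[0,1]$, transferred to $\dot b$ on $[0,N]$. For $b_0^*=0$, linearisation gives $\dot b \approx \alpha(1-n)b$ with $1-n<0$, so it is asymptotically stable; this is the only equilibrium when $\mathcal{D}<0$ since $g<0$ everywhere. When $\mathcal{D}>0$, $g>0$ on $(x_1^*,x_2^*)$ and $g<0$ outside, giving $\dot b<0$ on $(0,b_1^*)$, $\dot b>0$ on $(b_1^*,b_2^*)$, and $\dot b<0$ on $(b_2^*,N]$, from which $b_1^*$ is unstable and $b_2^*$ asymptotically stable.

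The main subtlety is the borderline case $\mathcal{D}=0$, where $g$ is tangent to zero at the double root $x^*=(q-1)/(2q)$ and $g(x)\le 0$ everywhere. Here $\dot b<0$ on both sides of $b^*=Nx^*$, so trajectories starting just above $b^*$ return to it while trajectories starting just below escape towards $0$; this makes $b^*$ semi-stable and, in particular, not Lyapunov stable, which I would state as instability in the sense of the proposition. With the three cases of $\mathcal{D}$ covered, the claim follows.
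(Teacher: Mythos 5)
Your proof is correct and takes essentially the same route as the paper: the nonzero equilibria are the roots of the same quadratic (your $g(x)$ is the paper's quadratic $-\tfrac{1}{K}b^2+(q-1)b-N(n-1)$ rescaled via $x=b/N$), with discriminant $\mathcal{D}$, and stability in all three cases is read off the sign of $\dot b$ between equilibria, including the semi-stable double root when $\mathcal{D}=0$, which is unstable exactly as you argue. Only a trivial arithmetic slip: $g(1)=-n$, not $-qn$, which leaves the sign conclusion (and the argument) unchanged.
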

\begin{proof}
From the equilibrium condition $\dot b = 0$, $b_0^* = 0$ is always an equilibrium, and there can be at most other two equilibria given by the positive real solutions to the equation $-\frac{1}{K}b^2 + (q-1)b-N(n-1)=0$, with discriminant $\mathcal{D}$.

If $\mathcal{D}<0$, $b_0^*=0$ is the only equilibrium and it is globally asymptotically stable, since $\dot b < 0$ for all $b>0$.

If $\mathcal{D}=0$, there is one nonzero equilibrium $b^* = \frac{K(q-1)}{2}$. Equilibrium $b_0^*=0$ is asymptotically stable and $b^*<N$ is unstable, since $\dot b <0$ for both $0 < b < b^*$ and $b > b^*$.

If $\mathcal{D}>0$, there are two nonzero equilibria
\begin{equation}\label{eq:b12}
\begin{array}{r}
    b^*_{1,2} = \frac{K}{2} \left[(q-1)\pm \sqrt{(q-1)^2-4q(n-1)}\right],
\end{array}
\end{equation}
with $0<b^*_1<b^*_2<N$. Equilibria $b_0^* = 0$ and $b^*_2$ are asymptotically stable, while $b^*_1$ is unstable, because $\dot b <0$ for $0 < b < b_1^*$, while $\dot b >0$ for $b_1^* < b < b_2^*$ and $\dot b < 0$ for $b > b_2^*$, as is also visualised in Fig.~\ref{fig:dotbvsb}a.
\end{proof}

We are interested in the existence of a stable pathogenic equilibrium. Thus, henceforth we always assume $\mathcal{D}>0$. Equivalently, we have the following standing assumption.

\begin{assumption}\label{ass:onsetinfection}
Parameters $q = \frac{N}{K}>1$ and $n = \frac{\beta}{\alpha}>1$ are such that $n < 1+ \frac{(q-1)^2}{4q}$.
\end{assumption}

Then, the failure set of all initial conditions from which the solution of \eqref{eq:btotal} does not converge to the infection-free equilibrium can be fully characterised.

\begin{theorem}\label{propos:Thresholdforbeta}
Consider the solution $b(t)$ of system \eqref{eq:btotal} under Assumption~\ref{ass:onsetinfection} emanating from the initial condition $b_0>0$.
We have $\lim_{t\to\infty} b(t) \neq 0$ if and only if $b_0 \geq b_1^*$, with $b_1^*=\frac{K}{2}[(q-1)-\sqrt{(q-1)^2-4q(n-1)}]$.
\end{theorem}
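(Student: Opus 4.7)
The plan is to exploit the fact that \eqref{eq:btotal} is a \emph{scalar} autonomous ODE, so every trajectory is monotone, bounded, and must converge to an equilibrium. Under Assumption~\ref{ass:onsetinfection} we have $\mathcal{D}>0$ and the three equilibria $0<b_1^*<b_2^*<N$ are already classified in Proposition~\ref{prop:equilibria}, together with the sign pattern of $\dot b$: negative on $(0,b_1^*)$, positive on $(b_1^*,b_2^*)$, negative on $(b_2^*,N]$ (with $\dot b<0$ at $b=N$ confirming invariance of the domain $[0,N]$).

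I would then split the analysis by the position of $b_0$ with respect to the equilibria. If $0<b_0<b_1^*$, then $b(t)$ is strictly decreasing while positive and bounded below by $0$; the limit must be an equilibrium in $[0,b_0)\subset[0,b_1^*)$, which forces $\lim_{t\to\infty}b(t)=0$. If $b_0=b_1^*$, then $b(t)\equiv b_1^*\neq 0$. If $b_1^*<b_0<b_2^*$, then $b(t)$ is strictly increasing and bounded above by $b_2^*$, so it must converge to the equilibrium $b_2^*$. If $b_0=b_2^*$, trivially $b(t)\equiv b_2^*$. Finally, if $b_2^*<b_0\le N$, $b(t)$ is strictly decreasing and bounded below by $b_2^*$, hence converges to $b_2^*$. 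Combining these cases gives $\lim_{t\to\infty}b(t)=0$ exactly when $b_0<b_1^*$, which is the claimed equivalence.

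The argument needs no delicate estimate: the only slightly non-trivial step is justifying that the monotone bounded limit of an autonomous scalar ODE must be an equilibrium (otherwise one would have $\dot b$ bounded away from zero in a neighborhood of the limit, contradicting convergence). Everything else follows from the sign chart of the right-hand side of \eqref{eq:btotal} obtained in the proof of Proposition~\ref{prop:equilibria}. Thus I expect no real obstacle; the proof is essentially a case distinction driven by the three equilibria and the monotonicity of scalar dynamics.
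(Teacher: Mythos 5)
Your proof is correct and follows essentially the same route as the paper: the paper's argument also reduces to the sign chart of $\dot b$ from the proof of Proposition~\ref{prop:equilibria} ($\dot b<0$ on $(0,b_1^*)$, $\dot b>0$ on $(b_1^*,b_2^*)$, $\dot b<0$ beyond $b_2^*$) and the resulting case distinction on $b_0$. Your explicit justification that a bounded monotone trajectory of a scalar autonomous ODE must converge to an equilibrium is a detail the paper leaves implicit, but it is the same argument.
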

\begin{proof}
It follows from the proof of Proposition~\ref{prop:equilibria}. For $0<b<b^*_1$ (green segment in Fig.~\ref{fig:dotbvsb}a), $\dot b < 0$, and hence $\lim_{t\to\infty}b(t)=b_0^*=0$.
For $b=b^*_i$, with $i=1,2$, $\dot b = 0$, hence $\lim_{t\to\infty} b(t)=b^*_i \neq 0$.
For $b^*_1< b <b^*_2$ (yellow segment in Fig.~\ref{fig:dotbvsb}a), $\dot b > 0$ and, for $b > b^*_2$ (red segment in Fig.~\ref{fig:dotbvsb}a), $\dot b < 0$; so, for all $b>b^*_1$, $\lim_{t\to\infty}b(t)=b_2^* \neq 0$.
\end{proof}

Therefore, if the total bacterial load is $b_0 < b^*_1$, the immune system alone can clear the infection and no therapy is needed. Conversely, if $b_0\geq b^*_1$, the immune system alone cannot cope with the infection and antibiotics are required.

\begin{figure}[tb]
    \centering
    \includegraphics[width=0.48\textwidth]{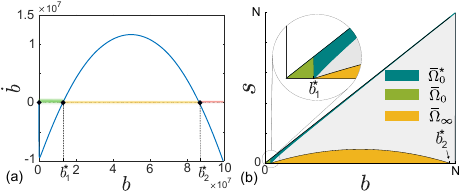}
    \caption{\textbf{(a)} Without antibiotics, $\dot b$ in \eqref{eq:btotal} as a function of $b$ under Assumption~\ref{ass:onsetinfection}, with $N=10^8$, $K = 10^4$, $\alpha = 0.88$ and $\beta = 100$. The system equilibria are $b^*_0=0$ (stable), $b^*_1$ (unstable), $b^*_2$ (stable). \textbf{(b)} Invariant domain $\mathbb{D}$ in \eqref{eq:domain} where we consider the solutions of system \eqref{eq:originalModel}, shaded in gray. Success subset $\bar\Omega_0$ in \eqref{eq:safesubset}, green; set $\bar\Omega_0^* = \Omega_0^* \setminus \bar \Omega_0$ in \eqref{eq:expandedsafesubset}, teal; failure subset $\bar\Omega_{\infty}$ in \eqref{eq:failuresubset}, yellow.}
    \label{fig:dotbvsb}
\vspace{-5mm}
\end{figure}

\subsection{Failure and success of single-antibiotic therapies}\label{sec:antibiotic}

Consider system~\eqref{eq:originalModel} evolving in $\mathbb{D}$ in \eqref{eq:domain} when $u \in \mathcal{V}$ is administered, and let $\phi(t;b_0,s_0,u) = (b(t),s(t))$ be the solution emanating from initial condition $(b_0,s_0) \in \mathbb{D}$.

For all $u \in \mathcal{V}$, $u(t) \geq u_M$ for all $t \geq 0$. This guarantees that $M(u)=\mu>0$ by \eqref{eq:mu} and the non-constant function $D(u) \ge \alpha>0$ by \eqref{eq:suscepCondition}. Therefore, $s(t)>0$ implies $\dot s(t) < 0$ $\forall$ $t$, and all the equilibria $(\bar b, \bar s)$ must be such that $\bar s=0$. Hence, the equilibrium values for $b$ are the ones we derived for \eqref{eq:btotal}, and the system~\eqref{eq:originalModel} has three equilibria: $(0,0)$, $(b^*_1,0)$ and $(b^*_2,0)$, with $b^*_1$ and $b^*_2$ as in \eqref{eq:b12}; see Fig.~\ref{fig:dotbvsb}b.

Given $(b_0,s_0)\in\mathbb{D}$, success of the single-antibiotic therapy occurs if there exists $u \in \mathcal{V}$ such that $\lim_{t \to \infty} \phi(t;b_0,s_0,u)= (0,0)$. Failure occurs otherwise.
Then, we can define a success set $\Omega_0\subset \mathbb{D}$ of initial conditions from which $\phi(t;b_0,s_0,u)$ converges to the infection-free equilibrium, and a complementary failure set $\Omega_{\infty}\subset \mathbb{D}$.

\begin{definition}\textbf{(Success/Failure sets.)}
The success set is
$\Omega_0 \doteq \{(b_0,s_0)\in\mathbb{D} \colon \lim_{t \to \infty} \phi(t;b_0,s_0,u)=(0,0) \mbox{ for some }u \in \mathcal{V}\}$.
The failure set is $\Omega_{\infty} \doteq \mathbb{D}\setminus \Omega_0$.
\end{definition}

We define a success subset $\bar\Omega_0$ and a failure subset $\bar\Omega_{\infty}$.

\begin{definition}\label{def:FailureSafeSubset}\textbf{(Success/Failure subsets.)} The set
\begin{equation}\label{eq:safesubset}
\bar\Omega_0 = \{(b,s)\in\mathbb{D} \colon b<b^*_1\}
\end{equation}
is denoted as success subset, while the failure subset is
\begin{equation}\label{eq:failuresubset}
\bar\Omega_{\infty} = \{(b,s)\in\mathbb{D} \colon \dot b > 0 \text{ for }u \equiv u_{\max}\}.
\end{equation}
\end{definition}

Fig.~\ref{fig:dotbvsb}b shows the set $\bar \Omega_0$ in green and the set $\bar \Omega_\infty$ in yellow, for a parameter choice satisfying Assumption~\ref{ass:onsetinfection}.

We study convergence of the system trajectories and prove that $\bar\Omega_0$ and $\bar\Omega_{\infty}$ are subsets of $\Omega_0$ and $\Omega_{\infty}$, respectively.

By Theorem~\ref{propos:Thresholdforbeta}, the trajectories emanating from initial conditions in $\bar\Omega_0$ converge to zero for $u \equiv 0$.

A characterisation of $\bar\Omega_{\infty}$ is given by the function
\begin{equation}
\begin{array}{r}
\psi(b) \doteq \frac{\alpha b(1 - \frac{b}{N}) - I(b)b}{D(u_{\max})},    
\end{array}
\end{equation}
for which we denote
\begin{equation*}
\begin{array}{lll}
&\operatorname{Gr}(\psi)\doteq \left\{(b,\psi(b))\colon \ b\in [b_1^*,b_2^*] \right\}.
\end{array}
\end{equation*}
$\operatorname{Gr}(\psi)$ is the set of points for which $\dot b =0$ for $u \equiv u_{\max}$, and $\bar{\Omega}_{\infty}$ consists of all points such that $s < \psi(b)$, i.e.,
\begin{equation}\label{eq:expressionbarOminf}  
\bar \Omega_\infty = \left\{(b,s)\in \mathbb{D}\colon \ b_1^*\leq b\leq b_2^*,\  0\leq s<\psi(b) \right\}.
\end{equation}
By the proof of Proposition~\ref{prop:equilibria}, $\psi(b)>0$ for $b_1^* < b < b_2^*$. 

\begin{lemma}\label{Lem:psiprop}
There exists $\varepsilon_*>0$ such that
the segment $\operatorname{Lin}(\varepsilon_*)\doteq\left\{(b,\varepsilon_*)\in \mathbb{D} \colon \ b\in [b_1^*,b_2^*] \right\}$ intersects $\operatorname{Gr}(\psi)$ at exactly two points. Then, for any $d_1\in(b_1^*,b_2^*)$ and $0<\varepsilon < \min(\varepsilon_*,\psi(d_1))$, the equation $\psi(b)=\varepsilon$ has a unique solution $d_2$ in $(d_1,b_2^*)$ and the rectangle $\mathcal{M}(d_1,\varepsilon) = \left([d_1,d_2]\times [0,\varepsilon]\right)\setminus \left\{(d_2,\varepsilon) \right\}$ is contained in the set $\bar \Omega_\infty$.
\end{lemma}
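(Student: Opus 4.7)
The plan is to reduce the statement to a one-variable analysis of $\psi$ on $[b_1^*,b_2^*]$. Writing $\psi(b)=g(b)/D(u_{\max})$ with $g(b)\doteq \alpha b(1-b/N)-I(b)b$, Proposition~\ref{prop:equilibria} tells us that $b_1^*$ and $b_2^*$ are precisely the positive roots of $g$, so $\psi$ is continuous on $[b_1^*,b_2^*]$, vanishes at both endpoints and is strictly positive in between, as already noted in the excerpt. Everything in the lemma then amounts to understanding the level sets of $\psi$ on this interval.

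The core step I would carry out is proving that $\psi$ is \emph{unimodal} on $[b_1^*,b_2^*]$: there is a unique $b^{\#}\in (b_1^*,b_2^*)$ at which $\psi$ attains its maximum, with $\psi$ strictly increasing on $[b_1^*,b^{\#}]$ and strictly decreasing on $[b^{\#},b_2^*]$. The key ingredient is that
\[
g''(b) = -\frac{2\alpha}{N} + \frac{2\beta K^2}{(K+b)^3}
\]
is strictly decreasing on $[0,\infty)$, so $g'$ is either monotone decreasing or first increasing and then decreasing, and in any case has at most two zeros on $[0,\infty)$. Combined with $g'(b_1^*)>0$ (since $b_1^*$ is an unstable equilibrium of $\dot b=g(b)$ under $\mathcal{D}>0$) and $g'(b_2^*)<0$ (asymptotic stability of $b_2^*$), a short case distinction on where $g''$ changes sign relative to $[b_1^*,b_2^*]$ shows that $g'$ has exactly one zero $b^{\#}$ inside $(b_1^*,b_2^*)$, where it passes from positive to negative; hence $b^{\#}$ is the unique strict maximum of $g$, and of $\psi$.

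Granted unimodality, I would set $\varepsilon_*\doteq \psi(b^{\#})/2$: by strict monotonicity on each side of $b^{\#}$ and the intermediate value theorem, $\psi(b)=\varepsilon_*$ has exactly one solution in each of $(b_1^*,b^{\#})$ and $(b^{\#},b_2^*)$, so the segment $\operatorname{Lin}(\varepsilon_*)$ meets $\operatorname{Gr}(\psi)$ in exactly two points, proving the first claim. For the second, fix $d_1\in (b_1^*,b_2^*)$ and $0<\varepsilon<\min(\varepsilon_*,\psi(d_1))$. If $d_1\le b^{\#}$, then on $[d_1,b^{\#}]$ monotonicity gives $\psi\ge \psi(d_1)>\varepsilon$ (no solution), while on $[b^{\#},b_2^*]$ $\psi$ decreases strictly from $\psi(b^{\#})>\varepsilon_*>\varepsilon$ down to $0$, yielding a unique $d_2\in (b^{\#},b_2^*)$. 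If $d_1>b^{\#}$, then $\psi$ is strictly decreasing on $[d_1,b_2^*]$ from $\psi(d_1)>\varepsilon$ to $0$, giving a unique $d_2\in (d_1,b_2^*)$. In both cases the same monotonicity yields $\psi(b)>\varepsilon$ for every $b\in [d_1,d_2)$.

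The inclusion $\mathcal{M}(d_1,\varepsilon)\subset \bar\Omega_\infty$ then follows from \eqref{eq:expressionbarOminf}: any $(b,s)\in\mathcal{M}(d_1,\varepsilon)$ satisfies $b_1^*<d_1\le b\le d_2<b_2^*$, and by the previous step $\psi(b)>\varepsilon\ge s$ for $b\in [d_1,d_2)$, while for $b=d_2$ the excluded point $(d_2,\varepsilon)$ forces $s<\varepsilon=\psi(d_2)$. Hence $0\le s<\psi(b)$ throughout $\mathcal{M}(d_1,\varepsilon)$. The main obstacle, in my view, is the unimodality of $\psi$, since $\psi$ itself is in general neither concave nor monotone on $[b_1^*,b_2^*]$; the trick of examining the monotonicity of $g''$ rather than attempting direct concavity of $g$ is what makes the argument clean.
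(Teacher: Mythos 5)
Your proof is correct, but it takes a genuinely different (and in fact stronger) route than the paper. The paper does not establish unimodality of $\psi$: it only observes that $\psi'(b)=0$ reduces to a cubic, so $\psi$ has \emph{at most one local minimum} in $(b_1^*,b_2^*)$ (leaving open the possibility of two interior maxima), and then picks $\varepsilon_*$ below $\min(\psi(b_1^*+\delta),\psi(b_2^*-\delta))$ and below the value at the putative local minimum, ruling out a third intersection of $\operatorname{Lin}(\varepsilon_*)$ with $\operatorname{Gr}(\psi)$ by contradiction; the level-set claims for $\varepsilon<\min(\varepsilon_*,\psi(d_1))$ are then dismissed as following "trivially". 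You instead prove that $\psi$ has a single interior critical point $b^{\#}$, using that $g''(b)=-\tfrac{2\alpha}{N}+\tfrac{2\beta K^2}{(K+b)^3}$ is strictly decreasing on $[0,\infty)$ (equivalently, $g'$ is strictly concave there) together with $g'(b_1^*)>0>g'(b_2^*)$, so that $\psi$ is strictly increasing on $[b_1^*,b^{\#}]$ and strictly decreasing on $[b^{\#},b_2^*]$; the choice $\varepsilon_*=\psi(b^{\#})/2$ and the uniqueness of $d_2$ then follow from plain monotonicity, which also makes the "trivial" part explicit. Your approach buys a sharper structural fact about $\psi$ and a cleaner level-set argument; the paper's buys brevity by avoiding second-derivative analysis. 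Two small points worth tightening: the strict signs $g'(b_1^*)>0$, $g'(b_2^*)<0$ do not follow from instability/stability alone, but they do follow under $\mathcal{D}>0$ because $b_1^*,b_2^*$ are simple roots of the quadratic factor of $g$ (e.g. from $g(b)=-\tfrac{\alpha b}{N(K+b)}(b-b_1^*)(b-b_2^*)$); and inclusion in $\bar\Omega_\infty$ via \eqref{eq:expressionbarOminf} formally also requires $(b,s)\in\mathbb{D}$, which holds since $s<\psi(b)\le b$ by $D(u_{\max})\ge\alpha$ — a detail the paper also leaves implicit.
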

\begin{proof}
The function $\psi$ is smooth and positive on the interval $(b_1^*,b_2^*)$ with $\psi(b_1^*)=\psi(b_2^*)=0$, $\psi'(b_1^*)>0$ and $\psi'(b_2^*)<0$. Choose a small $\delta>0$ such that on $[b_1^*,b_1^*+2\delta]$ ($[b_2^*-2\delta,b_2^*]$) the function $\psi$ is increasing (decreasing). Since  
\begin{equation*}
\psi'(b) = 0 \Leftrightarrow \alpha N(K+b)^2-2\alpha b(K+b)^2-\beta K^2N = 0,
\end{equation*}
$\psi$ has at most one local minimum in $(b_1^*,b_2^*)$, on which $\psi$ is positive. Choose $0<\varepsilon_*<\min \left(\psi(b_1^*+\delta),\psi(b_2^*-\delta) \right)$, so that it is smaller than the value of $\psi$ at the local minimum (if any) in $(b_1^*,b_2^*)$. By continuity, $\operatorname{Lin}(\varepsilon_*)$ intersects $\operatorname{Gr}(\psi)$ once in each of the intervals $[b_1^*,b_1^*+\delta]$ and $[b_2^*-\delta,b_2^*]$.
If there were another intersection, we would obtain, by smoothness of $\psi$, an additional local minimum in $(b_1^*,b_2^*)$, with value smaller or equal to $\varepsilon_*$, which is a contradiction. 
The remaining claims follow trivially.
\end{proof}

Lemma~\ref{Lem:psiprop} is now used to guarantee convergence of the system trajectories to the equilibrium set.

\begin{proposition}\label{prop:convergence}
For any initial condition $(s_0,b_0)\in \mathbb{D}$ and any $u \in \mathcal{V}$, the trajectory $\phi(t;b_0,s_0,u)$ converges to one of the system equilibria: $(0,0)$, $(0,b_1^*)$ and $(0,b_2^*)$.
\end{proposition}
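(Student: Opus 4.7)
The plan is to first show that $s(t)$ decays exponentially to zero regardless of the chosen $u \in \mathcal{V}$, so that the $b$-equation becomes an asymptotically autonomous scalar ODE whose limiting right-hand side coincides with \eqref{eq:btotal}, and then to exploit connectedness of the $\omega$-limit set to conclude that $b(t)$ must converge to one of the three zeros of the limiting right-hand side.

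First, I rewrite the $s$-equation as $\dot s = \bigl[\alpha(1-b/N) - I(b) - D(u) - M(u)\bigr] s$. Since $u \in \mathcal{V}$ guarantees $u(t) \geq u_M$ for all $t \geq 0$, property \eqref{eq:suscepCondition} gives $D(u(t)) \geq \alpha$, and \eqref{eq:mu} gives $M(u(t)) = \mu$; together with $I(b), \alpha b /N \geq 0$, the bracket is bounded above by $-\mu$. A standard comparison then yields $0 \leq s(t) \leq s_0 e^{-\mu t}$, and in particular $D(u(t)) s(t) \leq D(u_{\max}) s_0 e^{-\mu t} \to 0$.

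Next, I view the $b$-equation as $\dot b = f(b) + h(t)$, where $f(b) \doteq \alpha b(1-b/N) - I(b)b$ matches the right-hand side of \eqref{eq:btotal} and $|h(t)| = D(u(t)) s(t) \to 0$ exponentially. By Proposition~\ref{prop:equilibria} and its proof, the zeros of $f$ in $[0,N]$ are exactly $\{0, b_1^*, b_2^*\}$, with $f$ strictly negative on $(0, b_1^*) \cup (b_2^*, N]$ and strictly positive on $(b_1^*, b_2^*)$. Since $b(t)$ is continuous on $[0,\infty)$ by Remark~\ref{Rem:DiscontofRHS} and takes values in the compact interval $[0,N]$, its $\omega$-limit set $\omega(b) \doteq \bigcap_{T \geq 0} \overline{b([T,\infty))}$ is nonempty, compact, and connected. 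The core step is to show $\omega(b) \subseteq \{0, b_1^*, b_2^*\}$: if some $\bar b \in \omega(b)$ had $f(\bar b) \neq 0$, continuity of $f$ would give a closed interval $U = [\bar b - \eta, \bar b + \eta]$ on which $f$ has constant sign and $|f| \geq c$ for some $c > 0$; once $t$ is large enough that $|h(t)| < c/2$, $\dot b$ has a fixed sign on $U$ with $|\dot b| \geq c/2$, so each visit of $b(t)$ to $U$ lasts at most $4\eta/c$ and exits through the same endpoint, after which the sign of $\dot b$ forbids a return. This contradicts $\bar b \in \omega(b)$. Connectedness then forces $\omega(b)$ to be a singleton in $\{0, b_1^*, b_2^*\}$, and combining with $s(t) \to 0$ yields convergence of $\phi(t; b_0, s_0, u)$ to one of $(0,0)$, $(b_1^*, 0)$, $(b_2^*, 0)$.

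The main obstacle is making the monotone-crossing argument rigorous despite the discontinuity of the right-hand side in $t$ at the administration times $\tau_j$. This turns out to be harmless: $f$ does not depend on $u$, the decay bound on $|h(t)|$ holds uniformly across all administration intervals, and $b(t)$ itself is continuous, so the crossing-time estimate and the sign-based exclusion of re-entry into $U$ remain valid across the impulsive jumps of $u$.
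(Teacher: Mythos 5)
Your proof is correct, but it follows a genuinely different route from the paper. The paper, after establishing $s(t)\to 0$ exactly as you do, argues geometrically in the $(b,s)$-plane: it fixes a sequence of times bounded away from $b_1^*$, splits into two cases, and in the delicate case builds explicit closed rectangles $\mathcal{Y}_1,\mathcal{Y}_2,\mathcal{Y}_3$ (via Lemma~\ref{Lem:psiprop} on $\psi$) on which $\dot b$ is uniformly bounded away from zero for every $u\in\mathcal{V}$, so that the trajectory is funnelled into an arbitrarily thin trapping box around $(b_2^*,0)$ and $\varpi\to 0$ gives convergence. You instead treat the $b$-equation as an asymptotically autonomous scalar ODE, $\dot b=f(b)+h(t)$ with $h\to 0$, and use connectedness of the $\omega$-limit set of a bounded continuous scalar function together with the sign structure of $f$ from Proposition~\ref{prop:equilibria} to force $\omega(b)$ to be a single zero of $f$; this is shorter, avoids Lemma~\ref{Lem:psiprop} entirely, and isolates the only dynamical input (strict sign of $f$ off its zeros). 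What the paper's heavier construction buys is reuse: the uniform-in-$u$ trapping rectangles are exactly what later yields the failure-subset inclusion in Proposition~\ref{prop:failuresafe} and the invariant neighbourhood $\mathcal{Q}_2$ in Theorem~\ref{cor:stability}, which your $\omega$-limit argument does not directly provide. One small point to tighten in your write-up: the claim that after exiting $U$ through an endpoint ``the sign of $\dot b$ forbids a return'' should be justified at the crossing point itself — if $b$ came back to the exit endpoint $\bar b+\eta$ at some $t_3\geq T_0$ after staying above it, then by continuity of $f$ one has $\dot b\geq c/4>0$ (up to the discrete set of administration times, handled in integral form) on a one-sided neighbourhood of that endpoint, so $b(t_3)\geq b(t)>\bar b+\eta$ for $t$ slightly before $t_3$, a contradiction; this is routine and does not affect the validity of the argument.
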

\begin{proof}
Let $\phi(t;b_0,s_0,u)=(b(t),s(t))$.  From \eqref{eq:mu} and \eqref{eq:suscepCondition}, $s>0$ implies $\dot s < -\mu s < 0$, whence $\lim_{t \to \infty} s(t)=0$. 

If $\lim_{t \to \infty} b(t) = b_1^*$, then $\phi(t;b_0,s_0,u)$ converges to the equilibrium $(b^*_1,0)$. Otherwise, there must exist an arbitrarily small constant $\eta >0$ and an increasing unbounded sequence of times $l_k$ such that $|b(l_k)-b_1^*| \geq 2\eta$ $\forall k$.

\underline{Case 1}: There exists a $k$ such that $b(l_k) \leq b_1^*-2\eta$, i.e., $b(l_k) \in \{(b,s) \in \mathbb{D} \colon 0 \leq b \leq b_1^*-2\eta\} \subset \bar \Omega_0$. For all $(b,s) \in \bar \Omega_0$, we have $\gamma(b) \doteq \alpha b (1 - b/N) -  I(b) b<0$, as shown in Section~\ref{sec:immunefailure}; thus $\dot{b}=\gamma(b) - D(u)s < 0$. This implies that $\lim_{t \to \infty} b(t) = 0$, and therefore $\phi(t;b_0,s_0,u)$ converges to the equilibrium $(0,0)$.

\underline{Case 2}: For all $k$, $b(l_k) \geq  b_1^*+2\eta$. Choose $\varepsilon_*>0$
as in Lemma~\ref{Lem:psiprop}, $\varpi>0$ arbitrarily small and $h < \min(\psi(b_2^*-\varpi)/2, \psi(b_1^*+\eta),\varepsilon_*)$, and define the closed rectangles 
\begin{equation*}
\begin{array}{lll}
&\hspace{-2mm}\mathcal{Y}_1 = \{(b,s) \in \mathbb{D} \colon \ b_1^*+2\eta \leq b \leq b_2^*-\varpi, \, 0 \leq s \leq h\}\\
&\hspace{-2mm}\mathcal{Y}_2 = \{(b,s) \in \mathbb{D} \colon \ b_2^*-\varpi \leq b \leq b_2^*+\varpi, \, 0 \leq s \leq  h\}\\
&\hspace{-2mm}\mathcal{Y}_3 = \{(b,s) \in \mathbb{D} \colon \ b_2^*+\varpi \leq b \leq N, \, 0 \leq s \leq h\}.
\end{array}
\end{equation*}
By Lemma~\ref{Lem:psiprop}, $\mathcal{Y}_1\subset  \mathcal{M}(b_1^*+\eta,h) \subset \bar \Omega_{\infty}$. Indeed, there exists a unique $\hat b>b_1^*+\eta$ such that $\operatorname{Lin}(h)$ intersects $\operatorname{Gr}(\psi)$; since $\psi(b_2^*-\varpi)>h$, it must be $b_2^*-\varpi < \hat b < b_2^*$.
Since $\mathcal{Y}_1 \subset \bar \Omega_{\infty}$ is closed, there exists $\xi>0$ such that, for all $(b,s)\in \mathcal{Y}_1$, we have $\chi(b,s) \doteq \alpha b \left(1 -\frac{b}{N}\right) -  I(b) b - D(u_{\max})s\geq \xi$. In particular, for any $u \in \mathcal{V}$, since $D(u(t)) \leq D(u_{\max})$ $\forall t$, it is $\dot b = \alpha b \left(1 -\frac{b}{N}\right) -  I(b) b - D(u)s \geq \chi(b,s) \geq \xi$ for all $(b,s)\in \mathcal{Y}_1$. Analogously, one can show that there exists $\zeta>0$ such that for $(b,s)\in \mathcal{Y}_3$, it holds that $\dot{b}\leq -\zeta<0$.

Since $\lim_{t\to \infty}s(t)=0$, we can choose $k$ such that $(b(l_k),s(l_k)) \in \mathcal{Y}_1 \cup \mathcal{Y}_2 \cup \mathcal{Y}_3$. If $(b(l_k),s(l_k)) \in \mathcal{Y}_2$, the solution remains in $\mathcal{Y}_2$ for all future times, as on both vertical sides of $\mathcal{Y}_2$ the derivative $\dot{b}$ points strictly inside $\mathcal{Y}_2$, whereas on the upper side the derivative $\dot{s}$ points inside $\mathcal{Y}_2$. If $(b(l_k),s(l_k)) \in \mathcal{Y}_1$, then, since $\dot s <0$ $\forall$ $s>0$ and $\dot b \geq \xi$, the trajectory leaves $\mathcal{Y}_1$ through the side $b=b_2^*-\varpi$ in finite time, meaning that it enters $\mathcal{Y}_2$ and remains there for all future times. If $(b(l_k),s(l_k)) \in \mathcal{Y}_3$, then, since $\dot s <0$ $\forall$ $s>0$ and $\dot b \leq -\zeta$, the trajectory leaves $\mathcal{Y}_3$ through the side $b=b_2^*+\varpi$ in finite time, meaning that it enters $\mathcal{Y}_2$ and remains there for all future times. This implies that $b_2^*-\varpi \leq \liminf_{t\to\infty} b(t)\leq \limsup_{t\to\infty} b(t) \leq b_2^*+\varpi$. Since $\varpi>0$ is arbitrarily small, we conclude that $\lim_{t \to \infty} b(t) = b_2^*$. So, $\phi(t;b_0,s_0,u)$ converges to the equilibrium $(b_2^*,0)$.
\end{proof}

Both $\Omega_{\infty}$ and $\Omega_0$ are nonempty, as we show next.
\begin{proposition}\label{prop:failuresafe}
\textbf{(Success and Failure subsets.)}
Given $\bar\Omega_0$  in \eqref{eq:safesubset} and $\bar\Omega_{\infty}$ in \eqref{eq:failuresubset}, we have $\bar\Omega_0 \subseteq\Omega_0$ and $\bar\Omega_{\infty}\subseteq\Omega_{\infty}$.
\end{proposition}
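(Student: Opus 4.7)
The plan is to prove the two inclusions separately, each resting on an elementary monotonicity argument. For $\bar\Omega_0 \subseteq \Omega_0$, I fix any $(b_0,s_0)\in\bar\Omega_0$ and any $u\in\mathcal{V}$ (which exists by Proposition~\ref{propo:impulsive_condition}), and first establish that $b(t)<b_1^*$ for all $t\geq 0$. If instead there were a first crossing time $T>0$, approaching from below would force the left derivative $\dot b^-(T)\geq 0$, while evaluating the ODE at $b=b_1^*$ gives $\dot b^-(T)=\gamma(b_1^*)-D(u(T^-))s(T)=-D(u(T^-))s(T)$, which is strictly negative since $u(T^-)\geq u_M$ implies $D(u(T^-))\geq \alpha>0$ and the linearity of the $s$-equation (with bounded coefficient) ensures $s(t)>0$ throughout when $s_0>0$. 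The degenerate case $s_0=0$ reduces to $\dot b=\gamma(b)$, for which Theorem~\ref{propos:Thresholdforbeta} already gives $b(t)\to 0$. Once $b(t)<b_1^*$ is known, $\gamma(b)<0$ on $(0,b_1^*)$ gives $\dot b\leq \gamma(b)<0$, so $b$ strictly decreases to some limit $b^\infty\in[0,b_0)$; Proposition~\ref{prop:convergence} restricts $b^\infty\in\{0,b_1^*,b_2^*\}$, forcing $b^\infty=0$, and $s(t)\to 0$ completes the convergence to $(0,0)$.

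For $\bar\Omega_\infty\subseteq \Omega_\infty$, fix $(b_0,s_0)\in\bar\Omega_\infty$, where \eqref{eq:expressionbarOminf} together with $\psi(b_1^*)=\psi(b_2^*)=0$ gives $b_1^*<b_0<b_2^*$ and $s_0<\psi(b_0)$. The crucial a priori estimate is exponential decay of $s$: for any $u\in\mathcal{V}$, $M(u)=\mu$ and $D(u)\geq\alpha$, so the $s$-equation yields $\dot s\leq -\mu s$ and hence $s(t)\leq s_0 e^{-\mu t}\leq s_0$ for all $t\geq 0$. Suppose, for contradiction, that some $u\in\mathcal{V}$ drives the trajectory to $(0,0)$, and set $\tau_*\doteq \sup\{t\geq 0 \colon b(t)\geq b_0\}$. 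Since $\dot b^+(0)\geq \gamma(b_0)-D(u_{\max})s_0>0$ by the very definition of $\bar\Omega_\infty$, one has $\tau_*>0$; since $b(t)\to 0<b_0$, $\tau_*<\infty$; by continuity $b(\tau_*)=b_0$, and since $b(t)<b_0$ for every $t>\tau_*$, the right derivative satisfies $\dot b^+(\tau_*)\leq 0$. Combining $\dot b^+(\tau_*)=\gamma(b_0)-D(u(\tau_*^+))s(\tau_*)\leq 0$ with $D(u(\tau_*^+))\leq D(u_{\max})$ yields $s(\tau_*)\geq \gamma(b_0)/D(u_{\max})=\psi(b_0)$, contradicting $s(\tau_*)\leq s_0<\psi(b_0)$.

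The main obstacle is identifying the correct invariant inequality that rules out convergence to the origin from $\bar\Omega_\infty$; the key idea is the three-way comparison $\psi(b_0)\leq s(\tau_*)\leq s_0<\psi(b_0)$, which pits the exponential decay of $s$ against the strict sign condition defining $\bar\Omega_\infty$. The remaining technicalities—taking one-sided derivatives at administration times where $u$ (and thus the right-hand side) jumps—are routine within the piecewise-smooth framework described in Remark~\ref{Rem:DiscontofRHS}.
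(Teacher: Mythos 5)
Your proof is correct, and it splits naturally into a part that mirrors the paper and a part that does not. For $\bar\Omega_0 \subseteq \Omega_0$ your argument (forward invariance of $\{b<b_1^*\}$ via a first-crossing-time contradiction using $\gamma(b_1^*)=0$ and $D(u)s>0$, then monotone decrease of $b$ to the only admissible limit $0$) is essentially the same reasoning the paper uses, since it simply points to Case 1 in the proof of Proposition~\ref{prop:convergence}, where $\gamma(b)<0$ on $\bar\Omega_0$ forces $\dot b<0$ and $b\to 0$. For $\bar\Omega_\infty \subseteq \Omega_\infty$ your route is genuinely different: the paper fixes a small square around $(b_0,s_0)$ inside $\bar\Omega_\infty$, builds the rectangles $\mathcal{Y}_4,\mathcal{Y}_5$, extracts a uniform bound $\dot b\ge\xi>0$ on the compact set $\mathcal{Y}_4$, and concludes via Proposition~\ref{prop:convergence} that the trajectory converges to $(b_2^*,0)$ for \emph{every} $u\in\mathcal{V}$; you instead argue by contradiction with a last-crossing time $\tau_*=\sup\{t\colon b(t)\ge b_0\}$, playing the monotone decay $s(t)\le s_0$ against the defining inequality $s_0<\psi(b_0)$ of \eqref{eq:failuresubset}--\eqref{eq:expressionbarOminf} to rule out $\dot b^+(\tau_*)\le 0$. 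Your version is more economical — no auxiliary square, no invariant rectangles, no compactness bound, and no appeal to Lemma~\ref{Lem:psiprop} — and it suffices because membership in $\Omega_\infty$ only requires non-convergence to the origin; the paper's construction buys the stronger conclusion that trajectories from $\bar\Omega_\infty$ actually converge to the pathogenic equilibrium $(b_2^*,0)$, which is the qualitative statement reused in the surrounding stability discussion. The one-sided-derivative manipulations at administration times that your argument relies on are legitimate in the piecewise-$C^1$ setting of Remark~\ref{Rem:DiscontofRHS}, so there is no gap.
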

\begin{proof}
The proof of Proposition~\ref{prop:convergence} shows that trajectories emanating from $\bar\Omega_0$ converge to zero, whence $\bar\Omega_0 \subseteq\Omega_0$.

For any $(b_0,s_0)\in \bar{\Omega}_{\infty}$, we have $s_0<\psi(b_0)$ by \eqref{eq:expressionbarOminf}. Therefore, by continuity of $\psi$, $(b_0,s_0)\in \mathcal{O}$ for some square $\mathcal{O}\doteq [b_0-\varpi,b_0+\varpi]\times [s_0-\varpi,s_0+\varpi],\ \varpi>0$ such that  $\mathcal{O} \subset \bar \Omega_\infty$. Consider the rectangles 
\begin{equation*}
\begin{array}{lll}
&\mathcal{Y}_4 = \left\{(b,s)\in \mathbb{D} \colon \ \ b_0-\varpi \leq b\leq b_0,\ 0\leq s\leq s_0  \right\},\\
& \mathcal{Y}_5 = \left\{(b,s)\in \mathbb{D} \colon \ b_0-\varpi\leq b\leq N,\ 0\leq s\leq s_0  \right\},
\end{array}
\end{equation*}
with $\mathcal{Y}_4 \subset \bar \Omega_{\infty}$ closed. Arguments similar to \underline{Case 2} in the proof of Proposition \ref{prop:convergence} show the existence of $\xi>0$ such that,  $\forall u \in \mathcal{V}$ and all $(b,s)\in \mathcal{Y}_4$, $\dot{b}\geq \xi$. Since $\dot{s}<0$ if $s>0$, $\phi(t;b_0,s_0,u)$ remains in $\mathcal{Y}_5$ $\forall t\geq 0$ and,
by Proposition \ref{prop:convergence}, $\lim_{t \to \infty} \phi(t;b_0,s_0,u) = (0,b_2^*)$, showing $\bar\Omega_{\infty}\subseteq\Omega_{\infty}$.
\end{proof}

We can now prove the stability properties of the equilibria.
\begin{theorem}\label{cor:stability}
For all $u \in \mathcal{V}$, the equilibrium $(b_1^*,0)$ is unstable, while $(0,0)$ and $(b_2^*,0)$ are locally asymptotically stable. Moreover, there exist invariant neighbourhoods $\mathcal{Q}_0$ of $(0,0)$ and $\mathcal{Q}_2$ of $(b_2^*,0)$ such that $\mathcal{Q}_0$ (resp. $\mathcal{Q}_2$) lies in the basin of attraction of $(0,0)$ (resp. $(b_2^*,0)$) for all $u \in \mathcal{V}$.
\end{theorem}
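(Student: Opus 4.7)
The overall strategy exploits the fact that the locus $\{s = 0\}$ is forward-invariant for \eqref{eq:originalModel}, since $\dot s\vert_{s=0}=0$, and on this locus the dynamics reduces exactly to the antibiotic-free equation~\eqref{eq:btotal}. All three equilibria lie on $\{s=0\}$, and the drug enters $\dot b$ only through the non-positive term $-D(u)s$, uniformly bounded by $D(u_{\max})s$. The plan is therefore to transfer the stability dichotomy from Proposition~\ref{prop:equilibria} to the full planar system and then invoke Proposition~\ref{prop:convergence} to pin down the limit of any trajectory trapped in a suitable neighborhood.

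For the instability of $(b_1^*, 0)$, the shortest argument uses the invariant subspace: for any neighborhood $\mathcal{N}$ of $(b_1^*, 0)$ and any small $\eta>0$ with $(b_1^* + \eta, 0)\in \mathcal{N}$, the solution satisfies $s(t)\equiv 0$ and $b(t)$ obeys \eqref{eq:btotal}, so by Proposition~\ref{prop:equilibria} $b(t)\to b_2^*$ and the trajectory leaves $\mathcal{N}$, independently of the chosen $u\in\mathcal{V}$.

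For local asymptotic stability of $(0,0)$, for each $\eta\in(0,b_1^*)$ I would take $\mathcal{Q}_0\doteq\{(b,s)\in\mathbb{D}:\, b\leq b_1^*-\eta\}$, a relative neighborhood of $(0,0)$. Invariance follows because on the edge $b=b_1^*-\eta$ the antibiotic-free part $\alpha b(1-b/N)-I(b)b$ is strictly negative by Proposition~\ref{prop:equilibria}, so $\dot b<0$; meanwhile $\dot s<0$ whenever $s>0$, and $s=0$ is invariant. Proposition~\ref{prop:convergence} then forces convergence to one of the three equilibria, and only $(0,0)$ lies in $\mathcal{Q}_0$. Letting $\eta$ be arbitrarily small also yields stability.

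The step I expect to require the most care is local asymptotic stability of $(b_2^*, 0)$: here I would use a rectangle $\mathcal{Q}_2\doteq\{(b,s)\in\mathbb{D}:\,|b-b_2^*|\leq\varpi,\,0\leq s\leq h\}$. On $s=h$ one has $\dot s\leq -\mu h<0$; on $b=b_2^*+\varpi$ the antibiotic-free contribution to $\dot b$ is strictly negative and $-D(u)s\leq 0$ only helps; but on $b=b_2^*-\varpi$ the antibiotic-free contribution, call it $\lambda(\varpi)>0$, is small while the term $-D(u)s$ can be as negative as $-D(u_{\max})h$. The key choice is $h<\lambda(\varpi)/D(u_{\max})$, which guarantees $\dot b\geq \lambda(\varpi)-D(u_{\max})h>0$ on that edge uniformly in $u\in\mathcal{V}$. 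With $\mathcal{Q}_2$ thus positively invariant, Proposition~\ref{prop:convergence} applies and only $(b_2^*, 0)$ is an equilibrium inside $\mathcal{Q}_2$; taking $\varpi, h$ arbitrarily small yields stability.
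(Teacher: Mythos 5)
Your proof is correct and takes essentially the same route as the paper's: the paper sets $\mathcal{Q}_0=\bar\Omega_0$ and takes $\mathcal{Q}_2$ to be the (relative) interior of the trapping rectangle built in the proof of Proposition~\ref{prop:convergence}, and your $\mathcal{Q}_2$ with $h<\lambda(\varpi)/D(u_{\max})=\psi(b_2^*-\varpi)$ is exactly that $\mathcal{Y}_2$-type construction, while your instability argument on the invariant line $\{s=0\}$ is the reduction to Proposition~\ref{prop:equilibria} that the paper invokes. One wording slip: Lyapunov stability of $(0,0)$ requires the small invariant triangles $\{(b,s)\in\mathbb{D}\colon b\le b_1^*-\eta\}$ with $\eta$ close to $b_1^*$, not ``arbitrarily small $\eta$'' (which gives the largest such sets); since your family contains them, the argument stands.
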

\begin{proof}
As shown in the proof of Proposition~\ref{prop:failuresafe}, every solution emanating from $\mathcal{Q}_0=\bar\Omega_0$ converges to $(0,0)$, irrespective of $u$. Similarly, in the proof of Proposition~\ref{prop:convergence}, we have presented a closed rectangle $\mathcal{R}=\bigcup_{i=1}^3\mathcal{Y}_i$ that contains $(0,b_2^*)$ such that any solution emanating from $\mathcal{R}$ converges to $(0,b_2^*)$, irrespective of $u$. Thus, one can take $\mathcal{Q}_2$ to be the (relative) interior of $\mathcal{R}$ in $\mathbb{D}$.
The instability of $(b_1^*,0)$ follows from Proposition~\ref{prop:equilibria}; also, note that $(b_1^*,0)\in \partial \bar{\Omega}_0$.
\end{proof}

Hence, under Assumption~\ref{ass:onsetinfection}, there exists an asymptotically stable pathogenic equilibrium with a nonempty basin of attraction, corresponding to bacterial loads that no therapy in $\mathcal{V}$ can clear.
We are interested in approximating the basin of attraction of the infection-free equilibrium.

For $b \geq b_1^*$, consider the bounded set 
\begin{equation}\label{eq:set}
\hspace{-2mm}\mathcal{S}(b) = \{ s\in[0,b] \colon \forall u\in\mathcal{V}, \, \lim_{t \to \infty} \phi(t;b,s,u) \neq (0,0)\}
\end{equation}
and define the function
\begin{equation}\label{eq:boundary}
g(b)\doteq \sup \mathcal{S}(b), \quad b \in [b_1^*, N].    
\end{equation}
\begin{proposition}\label{prop:gb}
Function $g(b)$ is well defined, meaning that for any $b\in [b_1^*,N]$, $\mathcal{S}(b)\neq \emptyset$. Furthermore, $g(b)>0$ for all $b\in ( b_1^*,N ]$, while $g(b_1^*)=0$.
\end{proposition}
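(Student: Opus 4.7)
The plan is to establish the three claims — $\mathcal{S}(b) \neq \emptyset$, $g(b_1^*) = 0$, and $g(b) > 0$ for $b \in (b_1^*, N]$ — in the order stated. The witness $s_0 = 0$ disposes of the first; forcing trajectories into $\bar\Omega_0$ disposes of the second; and constructing a small positive $s_0$ inside a forward-invariant set disjoint from the basin of $(0,0)$ disposes of the third.

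For the first claim, I would fix $b_0 \in [b_1^*, N]$ and pick $s_0 = 0$. The second equation in \eqref{eq:originalModel} then yields $s(t) \equiv 0$, so $b(t)$ obeys the $u$-independent equation \eqref{eq:btotal}. If $b_0 = b_1^*$ the solution stays at that equilibrium, while if $b_0 > b_1^*$ Theorem \ref{propos:Thresholdforbeta} gives $\lim_{t \to \infty} b(t) = b_2^*$; in either case $\phi(t; b_0, 0, u) \not\to (0, 0)$ for every $u \in \mathcal{V}$, so $0 \in \mathcal{S}(b_0)$ and $g$ is well defined.

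For the second claim, I would rule out $s_0 > 0$ from $\mathcal{S}(b_1^*)$. The equilibrium identity $\alpha b_1^*(1 - b_1^*/N) = I(b_1^*) b_1^*$ combined with $D(u(0)) \geq \alpha > 0$ gives $\dot b(0^+) = -D(u(0)) s_0 < 0$, so $b(t) < b_1^*$ on a right neighborhood of $0$ and the trajectory enters $\bar\Omega_0$. The inequality $\dot b = \gamma(b) - D(u) s < 0$ derived in Case 1 of the proof of Proposition \ref{prop:convergence} is valid on $\bar\Omega_0$ for every $u \in \mathcal{V}$, so the trajectory stays there and converges to $(0, 0)$, showing $s_0 \notin \mathcal{S}(b_1^*)$.

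For the third claim, I would split on whether $b$ lies in the domain of $\psi$. If $b \in (b_1^*, b_2^*)$, then $\psi(b) > 0$ (observed just before Lemma \ref{Lem:psiprop}), and by \eqref{eq:expressionbarOminf} any $s_0 \in (0, \psi(b))$ satisfies $(b, s_0) \in \bar\Omega_\infty \subseteq \Omega_\infty$ via Proposition \ref{prop:failuresafe}, so $s_0 \in \mathcal{S}(b)$. If $b \in [b_2^*, N]$, I would rerun the rectangle construction from Case 2 of the proof of Proposition \ref{prop:convergence} with $\eta$ chosen so that $b_1^* + 2\eta < b_2^* \leq b$ and with admissible $\varpi, h > 0$; the resulting closed set $\mathcal{R} = \mathcal{Y}_1 \cup \mathcal{Y}_2 \cup \mathcal{Y}_3 = [b_1^* + 2\eta, N] \times [0, h]$ is forward invariant for every $u \in \mathcal{V}$, and all its trajectories converge to $(b_2^*, 0)$. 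Any $s_0 \in (0, h]$ then places $(b, s_0) \in \mathcal{R}$, so $\phi(t; b, s_0, u) \to (b_2^*, 0) \neq (0, 0)$ for every $u \in \mathcal{V}$ and hence $s_0 \in \mathcal{S}(b)$. The main thing to verify is that the invariance and attractivity of $\mathcal{R}$ hold uniformly in $u \in \mathcal{V}$; this is already delivered by the estimates $\dot b \geq \xi$ on $\mathcal{Y}_1$ and $\dot b \leq -\zeta$ on $\mathcal{Y}_3$ obtained from $D(u) \leq D(u_{\max})$ in Case 2, so no new computation is required.
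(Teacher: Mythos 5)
Your proof is correct and follows essentially the same route as the paper: positivity of $g$ on $(b_1^*,N]$ via the inclusions $\bar\Omega_\infty\subseteq\Omega_\infty$ and $\bigcup_{i}\mathcal{Y}_i\subseteq\Omega_\infty$ from Propositions~\ref{prop:convergence} and~\ref{prop:failuresafe}, and $g(b_1^*)=0$ from $\dot b(0)=-D(u(0))s_0<0$ at $b=b_1^*$ forcing entry into $\bar\Omega_0$. The only (minor) difference is that you certify well-definedness with the explicit witness $s_0=0$, whose $u$-independent dynamics \eqref{eq:btotal} and Theorem~\ref{propos:Thresholdforbeta} give $0\in\mathcal{S}(b)$ uniformly on $[b_1^*,N]$, whereas the paper reads non-emptiness off the same failure-set inclusions; your variant handles the endpoint $b=b_1^*$ slightly more directly but is otherwise equivalent.
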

\begin{proof}
In the proofs of Propositions \ref{prop:convergence} and \ref{prop:failuresafe}, we have shown that $\bar{\Omega}_{\infty}\cup \left( \bigcup_{i=1}^3\mathcal{Y}_i\right)\subset \Omega_{\infty}$. Hence, for $b\in [b_1^*,N]$ it is $g(b)\geq \max \left(\psi(b),h \right)\geq 0$, and $g(b)>0$ for $b\in (b_1^*,N]$.

Consider any initial condition $(b_1^*,s)$ with $s>0$; at $b_1^*$, $\alpha b_1^* (1 - b_1^*/N) -  I(b_1^*) b_1^*=0$ as shown in Section~\ref{sec:immunefailure}, and hence $\dot{b}_1=\alpha b_1^* (1 - b_1^*/N) -  I(b_1^*) b_1^* - D(u)s < 0$ at the initial time, irrespective of $u$; in particular, $\phi(t;b_1^*,s,u)$ enters $\bar \Omega_0$ and converges to $(0,0)$; hence, $\mathcal{S}(b_1^*)=\{0\}$.
\end{proof}

Function $g(b)$ provides an inner approximation of $\Omega_0$.

\begin{theorem}\label{thm:expandedsafesubset}
\textbf{{(Expanded success subset.)}}
Consider
\begin{equation}\label{eq:expandedsafesubset}
\bar \Omega_0^* \doteq \{(b,s) \colon b_1^* \leq b \leq N, \, g(b) < s \leq b\}.
\end{equation}
Then, $\Omega_0^* = \bar \Omega_0 \cup \bar \Omega_0^*$
is included in the success set: $\Omega_0^* \subseteq \Omega_0$.
\end{theorem}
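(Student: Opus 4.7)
The plan is to split $\Omega_0^* = \bar\Omega_0 \cup \bar\Omega_0^*$ and handle each piece separately. The inclusion $\bar\Omega_0 \subseteq \Omega_0$ is exactly the first assertion of Proposition~\ref{prop:failuresafe}, so nothing new is needed there. The real content is the inclusion $\bar\Omega_0^* \subseteq \Omega_0$, which I would derive directly from the definition of $g$ as a supremum, without any additional dynamical analysis.

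First I would fix $(b_0,s_0) \in \bar\Omega_0^*$, so that $b_0 \in [b_1^*,N]$ and $g(b_0) < s_0 \leq b_0$. In particular $s_0 \in [0,b_0]$, so $s_0$ is a legitimate candidate element of $\mathcal{S}(b_0)$. Since $g(b_0) = \sup \mathcal{S}(b_0)$ and $s_0$ strictly exceeds this supremum, every element of $\mathcal{S}(b_0)$ is $\leq g(b_0) < s_0$, whence $s_0 \notin \mathcal{S}(b_0)$ — and this conclusion does not require the supremum to be attained. Unpacking the defining predicate, $s_0 \notin \mathcal{S}(b_0)$ is the logical negation of \textit{``for every $u \in \mathcal{V}$, $\lim_{t\to\infty}\phi(t;b_0,s_0,u) \neq (0,0)$.''} By Proposition~\ref{prop:convergence}, that limit exists and equals one of the three equilibria for every $u \in \mathcal{V}$, so the negation reads unambiguously: there exists $u \in \mathcal{V}$ with $\lim_{t\to\infty}\phi(t;b_0,s_0,u) = (0,0)$. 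By the definition of the success set, this places $(b_0,s_0) \in \Omega_0$.

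There is no genuine obstacle here; the argument is essentially tautological once $g$ is viewed as a supremum and Proposition~\ref{prop:convergence} is invoked to guarantee that the trajectory limit always exists. The two points that need to be handled carefully are the strict inequality $s_0 > g(b_0)$ in the definition of $\bar\Omega_0^*$ (crucial because $\sup \mathcal{S}(b_0)$ may or may not be attained) and the containment $s_0 \in [0,b_0]$, both of which are built into the definition~\eqref{eq:expandedsafesubset}. Note that this proof strategy only yields an inner approximation: it shows $\Omega_0^* \subseteq \Omega_0$ but makes no claim that $\Omega_0 \setminus \Omega_0^*$ is empty, consistent with the statement of the theorem.
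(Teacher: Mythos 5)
Your proposal is correct and follows essentially the same route as the paper's own proof: $\bar\Omega_0 \subseteq \Omega_0$ via Proposition~\ref{prop:failuresafe}, and $\bar\Omega_0^* \subseteq \Omega_0$ directly from the definition of $g$ as $\sup \mathcal{S}(b)$, since $s_0 > g(b_0)$ forces $s_0 \notin \mathcal{S}(b_0)$ and hence the existence of a clearing therapy $u \in \mathcal{V}$. Your explicit appeal to Proposition~\ref{prop:convergence} to ensure the limit exists is a slightly more careful unpacking of the negation, but the argument is the same as the paper's.
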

\begin{proof}
Proposition~\ref{prop:failuresafe} ensures that $\bar \Omega_0 \subset \Omega_0$. For any initial condition $(b,s)$ with $b_1^* \leq b \leq N$ and $g(b) < s \leq b$, by the definition of $g(b)$ in \eqref{eq:set}-\eqref{eq:boundary}, $\phi(t;b,s,u)$ converges to $(0,0)$ for some $u \in \mathcal{V}$; hence, $\bar \Omega_0^* \subseteq \Omega_0$.
\end{proof}

An example of set $\bar \Omega_0^*$ is shown in teal in Fig.~\ref{fig:dotbvsb}b.

\subsection{Computing an approximation of the success set}\label{sec:numeric}
A numerical approximation of $\bar \Omega_0^*$ is computed as follows.
Take $m$ therapies $u_1,\dots,u_m \in \mathcal{V}$, $H \in \mathbb{N}$ and a grid of points $b_j = b_1^* + j \frac{N-b_1^*}{H}$, $j=1,\dots,H$. For each $u_i$, $i \in \{1,\dots,m\}$, and each $b_j$, $j\in \{1,\dots,H\}$, we apply the bisection method in $s$ over the interval $[0,b_j]$. Note that $(b_j,0) \in \Omega_\infty$, whereas $(b_j,b_j)$ can be either in $\Omega_0$ or in $\Omega_\infty$. If simulation of $\phi(t;b_j,b_j,u_i)$ does not show convergence to $(0,0)$, the process stops. Otherwise, we consider the point $(b_j,\frac{b_j}{2})$ and simulate $\phi(t;b_j,\frac{b_j}{2},u_i)$; if it converges to $(0,0)$, we repeat the bisection on the interval $s \in [0,\frac{b_j}{2}]$, otherwise on the interval $s \in [\frac{b_j}{2},b_j]$. The process continues until we obtain an interval $[\sigma_j^L, \sigma_j^H]$ for which $\sigma_j^H-\sigma_j^L < \textit{tol}$ for a prescribed tolerance $\textit{tol}$. Then, we build an approximation 
\begin{equation}\label{eq:hatOmega_i}
\hat \Omega_i = \{(b,s) \colon b_1^* \leq b \leq N, \, g_i(b) < s \leq b\},
\end{equation}
where $g_i(b) = \sigma_j^H$ on the interval $(b_{j-1},b_j]$, $j\in \{1,\dots,H\}$, and $b_0=b_1^*$.
The resulting approximation of $\bar \Omega_0^*$ is $\bigcup_{i=1}^m \hat \Omega_i$.
We simulate the system with a standard $4$-th order Runge-Kutta method and the parameter values
$\alpha=0.25$,
$N=10^7$,
$\beta=20$,
$K=1.4\cdot 10^3$,
$E_{\max}=0.6$,
$EC_{50}=1.3$,
$k=2$,
$\mu=10^{-5}$,
$\delta=0.22$,
$u_M=1.5$,
$u_{\max}=16 u_M$ \cite{regoes2004pharmacodynamic,smith2011mathematical,Drake1998},
for which $b_1^*=1.14\cdot 10^5$ and $b_2^*=9.88\cdot 10^6$.

Fig.~\ref{fig:upperboundg} shows $\bar\Omega_0$ (green) and $\bar\Omega_\infty$ (yellow) computed for $b \in [0,10^6]$, and simulation results yielding approximations of $\bar\Omega_0^*$, with $b \in [b_1^*,10^6]$, for $\textit{tol} = 50$.

In Fig.~\ref{fig:upperboundg}a we computed, with $H=50$, $\hat\Omega_i$ for $25$ therapies in $\mathcal{V}$ with different combinations of $w$ and $\tau_0$: $w = 2 u_M$ and $\tau_0 \in \{0, 2, 3, 4, 4.5\}$; $w = 5 u_M$ and $\tau_0 \in \{0, 2, 4, 6, 8\}$; $w = 7 u_M$ and $\tau_0 \in \{0, 3, 4, 7, 9\}$; $w = 10 u_M$ and $\tau_0 \in \{0, 3, 5, 7, 10\}$; $w = 15 u_M$ and $\tau_0 = \{0, 5, 6, 7, 8\}$.
Therapies with the same dose $w$ but different $\tau_0$ yield the same set, so we can only observe the effect of different $w$ in Fig.~\ref{fig:upperboundg}b; in all simulated cases, the set for a larger dose contains the set for a smaller dose.
Fig.~\ref{fig:upperboundg}c shows the values $\sigma_j^H$ for $w = 15 u_M$ computed on different grids with $H = 10$, $H = 50$, $H = 100$ in the interval $[b_1^*, 10^6]$.
Finally, Fig.~\ref{fig:upperboundg}d shows the approximation of $\bar\Omega_0^*$ by $\bigcup_{i=1}^m \hat \Omega_i$, determined by the points $\sigma_j^H$ for $w = 15 u_M$.

\begin{figure}[tb]
    \centering    \includegraphics[width=0.5\textwidth]{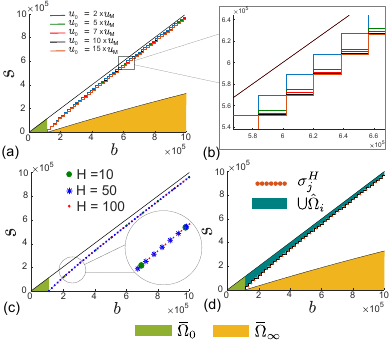}
    \caption{\textbf{(a)} $\hat\Omega_i$ computed for different therapies in $\mathcal{V}$ with $w = \ell u_M$, $\ell\in\{2,5,7,10,15\}$, each for five different values of $\tau_0$. \textbf{(b)} For a given $w$, the set is unchanged by varying $\tau_0$. The set for $w = 15 u_M$ contains all the others. \textbf{(c)} $\sigma_j^H$ for $w = 15 u_M$ computed on different grids with $H = 10$ (green), $H = 50$ (blue), $H = 100$ (red). \textbf{(d)} Approximation of $\bar\Omega_0^*$ by $\bigcup_{i=1}^m \hat \Omega_i$ determined by the points $\sigma^H_j$ for $w = 15 u_M$.}
    \label{fig:upperboundg}
\vspace{-5mm}
\end{figure}

Our exhaustive numerical simulation campaign in Section~\ref{Subsec:Conservaiveness_Succ_Set} suggests that $\Omega_0^*$ in Theorem~\ref{thm:expandedsafesubset} is the actual success set, i.e., function $g(b)$ provides an accurate separation between $\Omega_0$ and $\Omega_\infty$. A sensitivity analysis assessing how parameter variations affect $g(b)$ is also provided in Section~\ref{Sec:SuppMaterial}.

\section{Optimal Antibiotic Treatment}\label{sec:optimal}

Theorem~\ref{thm:expandedsafesubset} provides us with a set $\Omega_0^*$ of initial conditions from which the system trajectory can be steered to the infection-free equilibrium by \textit{some} therapy in $u \in \mathcal{V}$, which is uniquely characterised by a choice of $\ell$ as in Proposition~\ref{propo:impulsive_condition}. Hence, given an initial condition in $\Omega_0^*$, we seek a suitable antibiotic therapy that clears the infection while minimising the cost of the optimal control problem
\begin{equation}\label{eq:optimalcontrolV}
\min_{\ell \in (0, \ell_{\max}]} \;\; \int_0^T \left(\zeta_1 u(t) + \zeta_3 b(t)\right)dt \; + \zeta_2 b(T),
\end{equation}
over the finite horizon $T$, subject to \eqref{eq:udynamic}--\eqref{eq:mu}, $w_j \equiv w = \ell u_M$, $\Delta \tau = \delta^{-1} \ln(\ell+1)$.
We solve \eqref{eq:optimalcontrolV} over a discrete grid of step $0.01$ for various choices of $\zeta_i$. A thorough discussion of the optimal control framework is provided in Section~\ref{Sec:SuppMaterial}.

When $\zeta_2=1$ and $\zeta_1=\zeta_3=0$, the cost is the total number $b(T)$ of bacteria at the end of the treatment horizon and thus, given any initial condition in $\Omega_0^*$, the solution to the optimal control problem yields clearance of the infection, provided that $T$ is sufficiently large, as shown in the two left panels of Fig.~\ref{fig:OCP}. The solution does not provide clearance when the initial condition is outside $\Omega_0^*$, as shown in Section~\ref{Sec:SuppMaterial}; this further validates the hypothesis that the function $g(b)$ actually separates success and failure sets.

The two right panels of Fig.~\ref{fig:OCP} consider the cost with $\zeta_1=10^5$ and $\zeta_2=\zeta_3=1$, which also aims at minimising drug side effects (by penalising high antibiotic concentrations) and the bacterial load over the entire duration of treatment, leading to a reduction in the resistant bacterial population.

\begin{figure*}
    \centering
        \centering
    \includegraphics[width=0.92\textwidth]{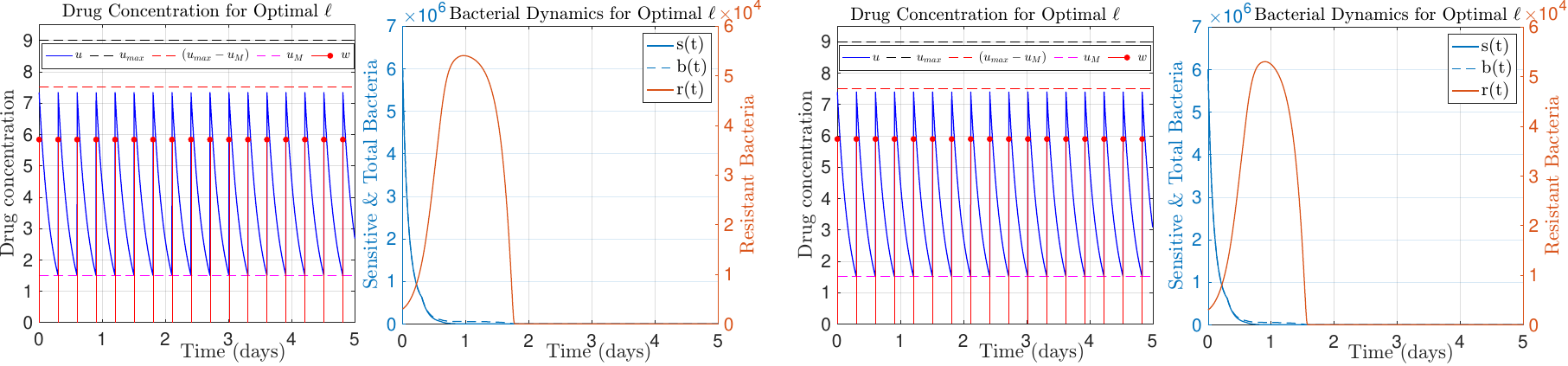}
    \caption{Time evolution of the drug concentration and administered drug doses and time evolution of the state variables $b$, $s$ and of $r=b-s$ obtained by solving the optimal control problem \eqref{eq:optimalcontrolV} with initial condition $b_0= 6.003\cdot 10^6$ and $s_0= 6\cdot 10^6$ in $\Omega_0^*$, with weights $\zeta_2=1$ and $\zeta_1=\zeta_3=0$ (two left panels) and $\zeta_1=10^5$ and $\zeta_2=\zeta_3=1$ (two right panels). The parameters are as in Section~\ref{sec:numeric}, apart from $\beta = 10$. On the left, the minimum $J=0$ is not unique and achieved for all $\ell \geq 3.90$; to reduce the use of antibiotic, we select $\ell = 3.90$. On the right, the unique minimum is obtained for $\ell = 3.93$.}
    \label{fig:OCP}
\vspace{-6mm}
\end{figure*}

A more general optimal control formulation that considers a larger class of admissible controls (including therapies in $\mathcal{V}$) with time-varying dose $w_j$ and allows termination of the treatment before $T$ is discussed in Section~\ref{Sec:SuppMaterial}: the results suggest that lower drug doses administered for shorter periods can clear bacterial infections while reducing side effects.

\section{Concluding Discussion}
We have proposed an in-host model of AMR that captures the evolution of a bacterial population including susceptible and resistant strains in the presence of antibiotics, of which we provide a PK/PD description. To support a sequential decision-making process, we have characterised initial bacterial loads (i) for which the immune response alone cannot clear the infection, and thus drug administration is required, and (ii) for which a single-drug therapy belonging to a clinically relevant class is guaranteed to be successful (success set). For all initial conditions that are not in the success set, the use of multiple different antibiotics is recommended. We have provided rigorous inner approximations of the success set and shown how to compute viable successful policies through optimal control.
Future directions include investigating how the success set expands when multiple drugs are used, fully characterising success and failure sets, and considering more complex models, including stochastic ones \cite{Blanquart2019,Tetteh2020} that overcome the limitations of deterministic models for small population sizes.



\newpage

\section{Supplementary Material}\label{Sec:SuppMaterial}

In the Main Manuscript, we consider the dynamical system
\begin{equation}\label{eq:originalModel}
\begin{cases}
\dot b = \alpha b \left(1 -\frac{b}{N}\right) -  I(b) b - D(u)s\\
\dot s = \alpha s \left(1 -\frac{b}{N}\right) - I(b) s -  D(u) s - M(u) s
\end{cases}
\end{equation}
where 
\begin{equation}\label{eq:inmuneresponse}
I(b) = \beta \frac{K}{K+b}
\end{equation}
and $\beta > \alpha$,
\begin{equation}\label{eq:HillFunction}
D(u) = E_{\max} \frac{(u/EC_{50})^k}{1 + (u/EC_{50})^k}
\end{equation}
and $D(u) \geq \alpha$ for all $u \geq u_M$, while
\begin{equation}\label{eq:mu}
M(u) = \begin{cases}
\mu \mbox{ if } u \ge u_M,\\
0 \mbox{ otherwise.}
\end{cases}
\end{equation}

Denote $u(\tau_j^-) = \lim_{t \to \tau_j^-} u(t)$. We consider the class $\mathcal{V}$ of \textbf{viable periodic therapies} such that $u(t)\in \{\upsilon \colon u_M \leq \upsilon \leq u_{\max}\}$ evolves according to
\begin{equation}\label{eq:udynamic}
\begin{cases}
\dot{u}(t) = -\delta u(t), ~\tau_j \leq t < \tau_{j+1} \\
u(\tau_j) = w_j + u(\tau_j^-), ~j \in \mathbb{N}_0,
\end{cases}
\end{equation}
for $t \geq \tau_0$, with $0 \leq \tau_0 \leq \Delta \tau$, where
$$
w_j \equiv w \in \mathcal{W}=\{\upsilon \colon 0 < \upsilon \le u_{\max}-u_M\}
$$
and $\tau_{j+1}-\tau_j=\Delta \tau$ $\forall j \in \mathbb{N}_0$. If $\tau_0=0$, $u(\tau_0)= w+u_M$; if $\tau_0>0$, $u(t)=u_M e^{-\delta (t-\tau_0)}$ for $0 \leq t < \tau_0$.

All the parameters are listed in Table~\ref{tb:parameters} along with their meaning. 

\begin{table}[htb]
\footnotesize
\centering
\rowcolors{1}{}{white!20}
\begin{tabular}{llll}
\rowcolor{gray!20}
\textsc{\footnotesize{Parameter}} & \textsc{\footnotesize{Description}} & \textsc{\footnotesize{Value}}  \\ \hline
$\alpha$ & Net Growth Rate & 0.25 \\
$N$ & Carrying Capacity & $10^7$ \\
 $\mu$ & Mutation Rate & $10^{-5}$ \\ 
$\beta$ & Immune Response Death Rate  & $20$ \\
$K$ & Immune Response Half-Effect & $1.4\cdot 10^3$ \\
$\delta$ & Antibiotic Decay Rate & $0.22$ \\
$E_{\max}$ & Maximum Antibiotic Death Rate & $0.6$ \\
$EC_{50}$ & Antibiotic Half-Effect  & $1.3$ \\
$k$ & Hill Coefficient & $2$ \\
$u_M$ & Antibiotic Minimum Inhibitory Concentration & $1.5$ \\
$u_{\max}$ & Maximum Antibiotic Concentration& $16 u_M$ \\
\hline
 \end{tabular}
 \caption{List of parameters, along with their description and value \cite{regoes2004pharmacodynamic,smith2011mathematical,Drake1998}.}
 \label{tb:parameters}
 \end{table}

The $b$-$s$ system evolves in the invariant domain
\begin{equation}\label{eq:domain}
\mathbb{D} = \{(b,s)\in\mathbb{R}^2 \colon 0 \leq b\leq N \mbox{ and } 0 \leq s\le b\}.
\end{equation}

In the Main Manuscript, we define and evaluate the success set $\Omega_0$ of initial conditions $(b_0,s_0) \in \mathbb{D}$ from which the system trajectories $\phi(t;b_0,s_0,u)$ converge to the origin, thanks to some single-drug viable periodic therapy $u \in \mathcal{V}$, and the complementary failure set $\Omega_\infty$ of initial conditions $(b_0,s_0) \in \mathbb{D}$ from which the system trajectories $\phi(t;b_0,s_0,u)$ do not converge to the origin regardless of the considered $u \in \mathcal{V}$.
We provide an inner approximation of the success set with rigorous theoretical guarantees.
For initial conditions within the success set, we also consider an optimal control framework to design optimal antibiotic treatment.

\newpage

This Section includes additional material that further elaborates on the results in the Main Manuscript by providing a thorough and detailed numerical analysis that addresses the following aspects:
\begin{enumerate}[noitemsep]
    \item We assess the conservativeness of the inner approximation of the success set in Section III-C of the Main Manuscript. A thorough campaign of numerical simulations consistently shows that our inner approximation of the success set is the actual success set. In fact, when computed with sufficient numerical precision, function $g(b)$ provides an accurate separation between the success and failure sets: trajectories emanating from all the sampled points above the function converge to the origin for at least one of the tested choices of $u$, while trajectories emanating from all the sampled points below the function do not converge to the origin for all the tested choices of $u$.
    \item We conduct a thorough numerical sensitivity analysis to assess how our estimate of the success set, given by the points above the function $g(b)$, varies when the system parameters are varied.
    \item Expanding on Section IV of the Main Manuscript, we provide an additional analysis of optimal antibiotic administration problems over a finite horizon $T$. First, we consider an optimal control problem formulation where the set of admissible therapies corresponds exactly to the set $\mathcal{V}$ considered in the Main Manuscript, with progressively more complex cost functionals, considering combinations of the total number of bacteria at time $T$, the integral of the drug concentration in $[0,T]$, and the integral of the total number of bacteria in $[0,T]$. In fact, \textit{for initial conditions in the success set}, where clearance of the infection (i.e., convergence of the system trajectory to the origin) with some viable periodic therapy is guaranteed, it makes sense to define a metric of interest, which can also take into account possible side effects of the drug, and seek an optimal therapy which clears the infection while minimising that metric.\\
    Moreover, to gain further insight into treatment regimes and the development of resistance from a broader perspective, we also consider an optimal control formulation where the set of admissible therapies is broader and includes $\mathcal{V}$ as a proper subset. Consequently, whenever the initial condition is in the success set, any periodic therapy that successfully eliminates infection is automatically admissible within the framework of the optimal control problem; in particular, the optimal control problem that considers as its cost functional the total number of bacteria at time $T$ is guaranteed to have a solution that achieves clearance, provided that $T$ is large enough.
    The solutions to the optimal control problem numerically validate the observation that the function $g(b)$ actually separates success and failure sets, based on their computation through numerical simulations: in fact, the optimal control steers the system trajectory to the origin when the initial condition is above $g(b)$ and hence in the success set, while the system trajectory converges to the high-bacterial-load equilibrium $b_2^*$ when the initial condition is below $g(b)$ (and hence in the numerically computed failure set).
\end{enumerate}

\subsection{Assessing the Conservativeness of the Approximation of the Success Set}\label{Subsec:Conservaiveness_Succ_Set}

We define the success set as $\Omega_0 \doteq \{(b_0,s_0)\in\mathbb{D} \colon \lim_{t \to \infty} \phi(t;b_0,s_0,u)=(0,0) \mbox{ for some }u \in \mathcal{V}\}$.
In Section III-B of the Main Manuscript, Theorem 9 provides a theoretically guaranteed inner approximation $\Omega_0^*$ of the success set:
\begin{equation}
\Omega_0^* = \bar \Omega_0 \cup \bar \Omega_0^*,
\end{equation}
where
\begin{equation}\label{eq:success_subset}
\bar\Omega_0 = \{(b,s)\in\mathbb{D} \colon b<b^*_1\},
\end{equation}
\begin{equation}\label{eq:success_expansion}
\bar \Omega_0^* = \{(b,s)\in\mathbb{D} \colon b_1^* \leq b \leq N, \, g(b) < s \leq b\},
\end{equation}
and the function $g(b)$ is defined as
\begin{equation}
g(b)\doteq \sup \mathcal{S}(b), \quad b \in [b_1^*, N],
\end{equation}
where
\begin{equation}
\begin{aligned}
\mathcal{S}(b) = &\{ s\in[0,b] \colon \forall u\in\mathcal{V}, \, \lim_{t \to \infty} \phi(t;b,s,u) \neq (0,0) \\ 
&\mbox{ for } b \geq b_1^*\}.
\end{aligned}
\end{equation}
In Proposition 8 in the Main Manuscript, we show that function $g(b)$ is well defined (for any $b\in [b_1^*,N]$, $\mathcal{S}(b)\neq \emptyset$), that $g(b)>0$ for all $b\in ( b_1^*,N ]$, and that $g(b_1^*)=0$.

A numerical approximation of $\bar{\Omega}_0^*$ can thus be obtained by numerically approximating the function $g(b)$.
In Section III-C of the Main Manuscript, we describe our algorithm to compute an approximation of $g(b)$:
\begin{itemize}[noitemsep]
\item fix the number $H \in \mathbb{N}$ of grid points in the interval $b \in (b_1^*, N]$;
\item set the tolerance  $\textit{tol}$;
\item select $m$ viable periodic therapies $u_1,\dots,u_m \in \mathcal{V}$;
\item construct a grid of points $b_j = b_1^* + j \frac{N-b_1^*}{H}$, $j=1,\dots,H$, in the interval $(b_1^*, N]$;
\item for each $u_i$, $i \in \{1,\dots,m\}$, and each $b_j$, $j\in \{1,\dots,H\}$, apply the bisection algorithm in $s$ over the interval $[0,b_j]$:
\begin{itemize}[noitemsep]
\item $(b_j,0) \in \Omega_\infty$, whereas $(b_j,b_j)$ can be either in $\Omega_0$ or in $\Omega_\infty$. If simulation of $\phi(t;b_j,b_j,u_i)$ does not show convergence to $(0,0)$, the process stops;
\item otherwise, consider the point $(b_j,\frac{b_j}{2})$ and simulate $\phi(t;b_j,\frac{b_j}{2},u_i)$; if it converges to $(0,0)$, repeat the bisection on the interval $s \in [0,\frac{b_j}{2}]$, otherwise on the interval $s \in [\frac{b_j}{2},b_j]$;
\item the process continues until an interval $[\sigma_j^L, \sigma_j^H]$ for which $\sigma_j^H-\sigma_j^L < \textit{tol}$ for a prescribed tolerance $\textit{tol}$ is obtained;
\end{itemize}
\item build the set 
\begin{equation*}
\hat \Omega_i = \{(b,s) \colon b_1^* \leq b \leq N, \, g_i(b) < s \leq b\},
\end{equation*}
where $g_i(b) = \sigma_j^H$ on the interval $(b_{j-1},b_j]$, $j\in \{1,\dots,H\}$, and $b_0=b_1^*$;
\item the resulting approximation of $\bar \Omega_0^*$ is the set $\bigcup_{i=1}^m \hat \Omega_i$.
\end{itemize}

In Section III-C of the the Main Manuscript, we consider the system with the parameter values listed in Table~\ref{tb:parameters} and we show (focusing on the interval $b \in [0, 10^6]$ in Figure 3, while the whole interval $b \in [0, 10^7]$ is visualised in Figure 2b) approximations of the success set obained by numerically approximating the values of function $g(b)$ through the above algorithm, with  $H = 10$, or $50$, or $100$, with $\textit{tol}=50$ and with the $25$ viable periodic therapies in $\mathcal{V}$ shown in Table~\ref{tab:tau_values}, which are characterised (cf. Proposition 1 in the Main Manuscript) by (i) $\ell \in (0, \frac{u_{\max}}{u_M}-1]$, with $\frac{u_{\max}}{u_M}-1=15$, which leads to doses $w = \ell u_M$ and corresponding administration periods $\Delta \tau = \frac{\ln{(\ell+1)}}{\delta}$; and (ii) initial administration time $0 \leq \tau_0 \leq \Delta \tau$ as per Definition 2 in the Main Manuscript.

\begin{table}[h!]
\centering
\begin{footnotesize}
\begin{tabular}{|c|c|c|c|c|c|c|}
\hline
\boldmath$w$ & \boldmath$\Delta \tau$ & \boldmath$\tau_0^{(1)}$ & \boldmath$\tau_0^{(2)}$ & \boldmath$\tau_0^{(3)}$ & \boldmath$\tau_0^{(4)}$ & \boldmath$\tau_0^{(5)}$ \\
\hline
$2u_M$  & 4.99 & 0   & 2   & 3   & 4   & 4.5 \\
$5u_M$  & 8.14 & 0   & 2   & 4   & 6   & 8   \\
$7u_M$  & 9.45 & 0   & 3   & 4   & 7   & 9   \\
$10u_M$ & 10.90 & 0   & 3   & 5   & 7   & 10  \\
$15u_M$ & 12.60 & 0   & 5   & 6   & 7   & 8   \\
\hline
\end{tabular}
\end{footnotesize}
\caption{The 25 \textbf{viable periodic therapies} considered in the numerical simulations in Section III-C of the Main Manuscript, as well as in this Supplementary Material: five different doses $w$ are considered and, for each, five different initial administration times $\tau_0$, given by $\tau_0^{(1)},\tau_0^{(2)},\tau_0^{(3)},\tau_0^{(4)}$ and $\tau_0^{(5)}$.}
\label{tab:tau_values}
\end{table}

To assess the conservativeness of our inner approximation, we perform additional numerical experiments aimed at verifying that our theoretically guaranteed inner approximation $\Omega_0^*$ of the success set is indeed a good and tight approximation: ideally, by densely sampling the domain $\mathbb{D}$, we expect to find that the solutions emanating from initial conditions $(b_0,s_0)$ that lie below $g(b)$, i.e. $s_0 < g(b_0)$, do not converge to the origin with any of the considered therapies, while the solutions emanating from initial conditions $(b_0,s_0)$ that lie above $g(b)$, i.e. $s_0 > g(b_0)$, and hence belong to our approximated success set, converge to the origin with at least one of the considered therapies.

First we consider just $m=1$ therapy, chosen as the one with the maximum possible dose $w$, since -- as we have shown in the Main Manuscript -- at least with the considered parameters, this therapy provides an approximation of the success set that contains all the other approximations that can be obtained with the other tested therapies. Then, we consider $m=5$ therapies with different values of $w$ as in Table~\ref{tab:tau_values} and with $\tau_0=0$, since -- as we have shown in the Main Manuscript -- at least with the considered parameters, choosing a different $0 \leq \tau_0 \leq \Delta \tau$ does not affect the numerical results.

\paragraph{Numerical grid analysis with the maximal-dose periodic therapy.}

We construct a dense grid of initial conditions for model \eqref{eq:originalModel}, using a resolution defined by a spacing of $10{,}000$ over the domain $\mathbb{D}$ in \eqref{eq:domain}, resulting in a total of $501{,}501$ points inside $\mathbb{D}$. Among them, the $78$ points with $b\in [0, b^*_1)$ are within the success subset $\bar{\Omega}_0$ in \eqref{eq:success_subset}, and hence any trajectory emanating from these points is guaranteed to converge to the origin regardless of $u$. For each point $(b_0,s_0)$, we numerically compute the trajectory $\phi(t;b_0,s_0,u)$ for a periodic therapy $u$ characterised by a constant maximum dose $w = 15 u_M$ and initial administration time $\tau_0 = 0$. For our simulations, we use MATLAB’s built-in \texttt{ode45} function, which implements a Runge-Kutta method of order 4 and 5 for numerical integration. The numerical solver tolerances are set to \texttt{RelTol = 1e-8} and \texttt{AbsTol = 1e-10}.
Then, each initial condition $(b_0,s_0)$ in $\mathbb{D}$ is classified as \textbf{successful (blue)} if the corresponding trajectory $\phi(t;b_0,s_0,u)$ enters the success subset $\bar{\Omega}_0$ during the simulation; otherwise, it is classified as \textbf{failed (red)}, if it enters the failure subset $\bar{\Omega}_\infty$ (see Eq. (11) in the Main Manuscript).
We also compute the function $g(b)$ over the interval $b \in [b_1^*, N]$ using the method described above (and in the Main Manuscript), considering a grid of $H = 1{,}100$ points and a tolerance of $\textit{tol} = 0.05$, for the periodic therapy with $w = 15 u_M$ and $\tau_0 = 0$. The value of $H$ is selected so that the values $b_k$ at which $g(b_k)$ is computed coincide with the values of $b$ in the grid of initial conditions.

Figures~\ref{fig:gb3000vsGrid10000x1}--\ref{fig:gb3000vsGrid10000x5} visualise the numerical simulation results at different resolutions.
Function $g(b)$ effectively separates the successful and failed initial conditions of the dense grid: all the failed points $(b_0,s_0)$ lie below $g(b)$, i.e. $s_0 < g(b_0)$, and all successful points $(b_0,s_0)$ lie above $g(b)$, i.e. $s_0 > g(b_0)$, and hence belong to our approximated success set.

Therefore, \textbf{our theoretically guaranteed inner approximation $\Omega_0^*$ of the success set appears to be the actual success set $\Omega_0$}: \textbf{function $g(b)$ provides an accurate separation between the success and failure sets, provided that it is computed with sufficient numerical precision}, and points that are not in our approximation $\Omega_0^*$ of the success set $\Omega_0$ appear to actually be in the failure set $\Omega_\infty$.

The only source of conservativeness is due to the \emph{numerical approximation} caused by the fact that the values of $g(b)$ can be computed numerically (with finite machine precision) only at a \emph{finite number of points} $b_k$, and the points $(b_k, g(b_k))$ are then visually connected by a “staircase” that provides an inner approximation of the success set, since the value of the function is kept constant and equal to $g(b_k)$ for all $b \in (b_{k-1}, b_k]$. The discrepancy between the numerically computed “staircase” $g(b)$ and the true function $g(b)$, due to the discretisation, converges to zero if the number of grid points $H \to \infty$.

\paragraph{Refined numerical grid analysis with five periodic therapies.}

To further validate the accuracy of function $g(b)$ in separating success and failure sets, we conduct additional simulations over a refined triangular grid defined on the domain $\mathbb{D}$ for $b \in [0, 5 \cdot 10^5]$, with a spacing of $1{,}000$. We simulate the system trajectories emanating from each point in the grid with five distinct periodic therapies, all with $\tau_0 = 0$ and impulsive doses $w \in \{2u_M, 5u_M, 7u_M, 10u_M, 15u_M\}$, as specified in Table~\ref{tab:tau_values}. A point $(b_0,s_0)$ is classified as successful (blue) if the corresponding trajectory enters the set $\bar{\Omega}_0$ with at least one of the five therapies, and failed (red) if the corresponding trajectory enters the subset $\bar{\Omega}_\infty$ with all five therapies.
Also, we compute the function $g(b)$ over the interval $b \in [0, 5 \cdot 10^5]$ using the method described above (and in the Main Manuscript), with $H = 501$ grid points and a stopping tolerance of $0.05$. The value of $H$ is selected so that the values $b_k$ at which $g(b_k)$ is computed coincide with the values of $b$ in the grid of initial conditions.
Figures~\ref{fig:gb5000vsGrid10000_Reducedx1}--\ref{fig:gb5000vsGrid10000_Reducedx4} visualise the numerical simulation results at different resolutions.
Also in this case, function $g(b)$ effectively separates the successful and failed initial conditions of the dense grid: all the failed points $(b_0,s_0)$ lie below $g(b)$, i.e. $s_0 < g(b_0)$, and all successful points $(b_0,s_0)$ lie above $g(b)$, i.e. $s_0 > g(b_0)$, and hence belong to our approximated success set.

\subsection{Parameter Sensitivity Analysis}\label{Subsec:Parameter_Sen_Analysis}

We conduct here a thorough numerical sensitivity analysis to see how our estimate of the success set $\Omega_0$ based on the computation of $g(b)$, which is
\begin{equation}\label{eq:successsetfullapprox}
\begin{aligned}
\Omega_0^* = \bar \Omega_0 \cup \bar \Omega_0^* = \{(b,s)\in\mathbb{D} \colon b<b^*_1\} \cup {} \\
\{(b,s)\in\mathbb{D} \colon b_1^* \leq b \leq N, \, g(b) < s \leq b\},
\end{aligned}
\end{equation}
is affected by variations in the system parameters.

The nominal parameter values are provided in Table~\ref{tb:parameters}. For our local sensitivity analysis, we keep all the parameters fixed to their nominal value apart from a single parameter at a time, for which we consider six additional values within an interval corresponding to $[\frac{1}{2}p, \frac{3}{2}p]$, where $p$ is the nominal parameter value, while ensuring that all the assumptions in the Main Manuscript -- specifically, Equation (6) and Assumption 1 -- remain satisfied. The function $g(b)$ is computed using a grid of $H = 100$ points uniformly distributed over the interval $[b_1^*, N]$, with $\textit{tol} = 50$, considering the periodic therapy given by $w=15u_M$ and $\tau_0=0$.

To quantify the variability of $g(b)$ across different parameter values, we define two indices, $I_1$ and $I_2$ for each parameter. Let $p$ be a model parameter with $m$ values ${p_1, p_2, \dots, p_m}$ considered in the sensitivity analysis ($m = 7$ in our case). For each value $p_i$, we evaluate function $g_{p_i}(b)$ on a common grid of points ${b_1, b_2, \dots, b_H}$ of size $H = 100$. Then,

\textbf{Index 1} quantifies the maximum variation in the equilibrium point $b_1^*$, and is defined as:
    \[
    I_1(p) = \max_{1 \leq i,j \leq n} \left| b^*_{1, p_i} - b^*_{1, p_j} \right|
    \]
    where $b^*_{1, p_i}$ denotes the smallest (nonzero) equilibrium point in Eq. (9) of the Main Manuscript. The value of $b_1^*$ is fundamental in the definition of the success set, as shown in \eqref{eq:successsetfullapprox}, and of the function $g(b)$.

\textbf{Index 2} measures the maximum pointwise deviation between any pair of threshold curves, and is defined as:
    \[
    I_2(p) = \max_{1 \leq i < j \leq n} \, \max_{1 \leq k \leq H} \left| g_{p_i}(b_k) - g_{p_j}(b_k) \right|,
    \]
    corresponding to the $\ell^\infty$ norm of the difference between all pairs of curves over the considered grid.

\medskip

Tables~\ref{tab:IndicesSideBySide} (a) and (b) report the values of Index 1 and Index 2 respectively, both ordered from the largest to the smallest. Note that for the parameter $N$, Index 2 is not computed, because variations in $N$ significantly alter the size of the domain over which $g(b)$ is defined, rendering Index 2 non-informative in this case. Parameters $\alpha$, $\beta$ and $K$, which characterise the spontaneous net growth rate of bacteria and the immune system response, turn out to have the largest impact on the shape of the success set approximation, according to both indices. The dependence of $I_1$ on the model parameters $\alpha$, $\beta$, $K$ and $N$ can be also analysed analytically using Eq.~(9) of the Main Manuscript, while $I_1$ is unaffected by all the other parameters.
The approximations of $g(b)$ computed for different parameters values  are shown in Figures~\ref{fig:changingalpha}--\ref{fig:changingN}.

\begin{table}[h!]
\centering
\begin{minipage}[t]{0.48\textwidth}
\centering
\begin{tabular}{lc}
\textbf{Parameter} & \textbf{Index} ${I_1}$ \\
\hline
$\alpha$ & $1.4625 \cdot 10^5$ \\
$\beta$  & $1.1463 \cdot 10^5$ \\
$K$      & $1.1317 \cdot 10^5$ \\
$N$      & $1.7574 \cdot 10^3$ \\
$\mu$    & $0$ \\
$u_M$      & $0$ \\
$E_{\max}$ & $0$ \\
$EC_{50}$ & $0$ \\
$k$      & $0$ \\
$\delta$ & $0$ \\
\end{tabular}
\caption*{(a) Index $I_1$: Difference in $b^*_1$.}
\end{minipage}%
\hfill
\begin{minipage}[t]{0.48\textwidth}
\centering
\begin{tabular}{lc}
\textbf{Parameter} & \textbf{Index} ${I_2}$ \\
\hline
$\alpha$  & $1.1696 \cdot 10^5$ \\
$\beta$   & $4.9108 \cdot 10^4$ \\
$K$       & $4.8308 \cdot 10^4$ \\
$E_{\max}$& $4.4231 \cdot 10^4$ \\
$k$       & $7.4364 \cdot 10^3$ \\
$EC_{50}$ & $4.9591 \cdot 10^3$ \\
$u_M$       & $3.7384 \cdot 10^3$ \\
$\delta$  & $3.6264 \cdot 10^3$ \\
$\mu$     & $2.2888 \cdot 10^2$ \\
$N$       & - \\
\end{tabular}
\caption*{(b) Index $I_2$: Difference in $g(b)$.}
\end{minipage}
\caption{Indices $I_1$ and $I_2$ quantifying how each of the model parameters affects the approximation of the success set. Parameters are listed in descending order according to the value of each index.}
\label{tab:IndicesSideBySide}
\end{table}

\subsection{Bacterial Minimisation and Optimal Antibiotic Administration:\\ Increasingly Complex Optimal Control Frameworks} \label{Subsec:Bacterial_Minimisation}

For initial conditions in the success set, where clearance of the infection (i.e., convergence of the system trajectory to the origin) with some viable periodic therapy is guaranteed, it makes sense to define a metric of interest, which can also take into account possible side effects of the drug (assumed to be proportional to the drug concentration), and look for an optimal therapy that clears the infection while minimising that metric.

Here, we discuss some progressively more complex optimal control formulations to solve the optimal antibiotic administration problem over a finite horizon $T$.

First, we consider as admissible controls those associated with the viable periodic therapies $u\in \mathcal{V}$ considered in the Main Manuscript, and we consider increasingly complex cost functionals that include:
\begin{itemize}[noitemsep]
\item only the total number of bacteria $b(T)$ at the end of the finite horizon;
\item a combination of $b(T)$ and of the integral of the drug concentration $\int_0^T u(t) dt$ over the horizon;
\item a combination of $b(T)$, of the integral of the drug concentration $\int_0^T u(t) dt$ and of the integral of the total number of bacteria $\int_0^T b(t) dt$.
\end{itemize}

Whenever the initial condition is in the success set, the optimal control problem that considers as its cost functional the total number of bacteria at time $T$ is guaranteed to have a solution that achieves clearance, provided that $T$ is large enough.
This is confirmed by our numerical results: the solutions to the optimal control problem numerically validate the observation that the function $g(b)$ actually separates success and failure sets (based on their computation through numerical simulations), because the optimal control steers the system trajectory to the origin when the initial condition is chosen above $g(b)$ (and hence in the success set), while the system trajectory converges to the high-bacterial-load equilibrium $b_2^*$ when the initial condition is below $g(b)$ (and hence outside the success set, i.e. in the numerically computed failure set).

Conversely, in the other two scenarios, certain choices of the cost weights may lead to undesired solutions that do not guarantee clearance of the infection even for initial conditions in the success set.

To gain further insight into treatment regimes and the development of resistance from a broader perspective, we also consider a larger set of admissible controls and thus a more flexible optimal control formulation. We consider both an optimal control problem formulation where the set of admissible therapies corresponds exactly to the set $\mathcal{V}$ considered in the Main Manuscript, and an optimal control formulation where the set of admissible therapies is broader and includes $\mathcal{V}$ as a proper subset. Consequently, whenever the initial condition is in the success set, any periodic therapy that successfully eliminates infection is automatically admissible within the framework of the optimal control problem.

In all our optimal control simulations, we consider the values of the system parameters listed in Table~\ref{tb:parameters}, with the exception of $\beta = 10$.

\subsubsection{Finding the Optimal Viable Periodic Therapy}

We first consider optimal control problems where the set of admissible controls corresponds to viable periodic therapies $u\in \mathcal{V}$ considered in the Main Manuscript.

As recalled earlier on in the Supplementary Material Section S1, for viable periodic therapies $u\in \mathcal{V}$, Proposition 1 in the Main Manuscript provides the mathematical relationship between the antibiotic dose $w=\ell u_M$ and its corresponding administration period $\Delta \tau=\frac{\ln(\ell+1)}{\delta}$ so as to guarantee that the constraint $u(t)\in \{\upsilon \colon u_M \leq \upsilon \leq u_{\max}\}$ is satisfied for all $t$. Moreover, $w \in \mathcal{W}=\{\upsilon \colon 0 < \upsilon \le u_{\max}-u_M\}$ is also guaranteed provided that $\ell \in (0, \ell_{\max}]$, where $\ell_{\max} = (u_{\max}-u_M)/u_M=\frac{u_{\max}}{u_M}-1$. Given $\ell$, one can immediately compute the corresponding pair $(\Delta \tau, w)$ and thus the corresponding viable periodic therapy $u\in \mathcal{V}$.

We can therefore choose $\ell$ as our decision variable and consider the optimal control problem

\begin{eqnarray}
    && \min_\ell \;\; \int_0^T \left(\zeta_1 u(t) + \zeta_3 b(t)\right)dt \; + \zeta_2 b(T) \label{eq:optimalcontrolV_SP}\\
    & \mbox{s.t. } & \eqref{eq:originalModel}-\eqref{eq:udynamic}, \nonumber\\
    & & w_j \equiv w = \ell u_M, \nonumber\\
    & & \tau_0=0,\;\; \tau_{j+1}-\tau_j \equiv \Delta\tau = \frac{\ln(\ell+1)}{\delta},\nonumber\\
    & & \ell \in (0, \ell_{\max}], \nonumber
\end{eqnarray}

which we solve over a discrete grid of step $0.01$ with combinations of $\zeta_i$ that reflect the scenarios outlined above.

\subsubsection{Aiming at bacterial eradication: $\zeta_2=1$ and $\zeta_1=\zeta_3=0$}

We consider the objective of minimising $b(T)$, the total number of bacteria at the end of the treatment horizon $T$.
This amounts to setting cost functional weights $\zeta_2=1$ and $\zeta_1=\zeta_3=0$ in the optimal control problem \eqref{eq:optimalcontrolV_SP}.

Theorem 9 in the Main Manuscript provides us with a set $\Omega_0^*$ of initial conditions from which the system trajectory can be steered to the infection-free equilibrium at the origin by \textit{some} therapy in $\mathcal{V}$. Given any initial condition in $\Omega_0^*$, the therapy obtained through the solution to the optimal control problem will thus achieve clearance, provided that the horizon $T$ is sufficiently large.

We consider three different initial conditions belonging to the success set:
\begin{itemize}[noitemsep]
\item IC1: $s_0= 6 \cdot 10^6$ and $r_0=3\cdot 10^3$, hence $b_0=6.003  \cdot 10^6$,
\item IC2: $s_0 = 6.002 \cdot 10^6$ and $r_0 = 10^3$, hence $b_0=6.003  \cdot 10^6$,
\item IC3: $s_0 = 6\cdot 10^6$ and $r_0 = 3\cdot 10^2$, hence $b_0=6.0003  \cdot 10^6$.
\end{itemize}

For IC1, we get the results depicted in Figure~\ref{fig:suppl_OCP1}. The cost functional is non-increasing as $\ell$ increases and the minimum $J=0$ is achieved for $\ell \geq 3.90$. Thus, to avoid unnecessary use of drug, the optimal solution corresponding to the smallest drug dose ($\ell = 3.90$) is selected.

For IC2, the results illustrated in Figure~\ref{fig:suppl_OCP3} are obtained. Due to the lower initial number of resistant bacteria, a smaller drug dose is enough to reduce the peak of resistant bacteria and the width of the lobe compared to the previous case.

For IC3, as a consequence of the much lower initial number of resistant bacteria and of the lower initial total number of bacteria, we observe a much smaller peak of the resistant bacteria even with a very small drug dose. Indeed, $J=0$ for $\ell \geq 0.24$, thus the dosage $w = 0.36$ with administration time of $0.98 \approx 1$ hour is chosen, as shown in Figure~\ref{fig:suppl_OCP2}.

We now consider an initial condition that does not belong to $\Omega_0^*$: $s_0 = 6\cdot 10^6$ and $r_0 = 3 \cdot 10^5$, so that $b_0 = 6.3\cdot 10^6$. The results are illustrated in Figure~\ref{fig:suppl_OCP4}. As expected, the therapy fails and the value of the cost functional increases. Thus, the minimum is achieved at the very beginning, corresponding to $\ell = 0.01$. However, the value of $J$ is practically constant. In fact, the relative error between the extreme values of $J$ achieved in the considered interval of values of $\ell$ is of the order of $10^{-12}$, which is close to machine precision; hence the differences in the values of $J$ can be ascribed to numerical errors.

The solutions to the optimal control problem presented in this section provide an additional numerical validation of the observation that function $g(b)$ actually separates success and failure sets. In all our considered simulations (only some of which are reported here), \textbf{the optimal control solution has been able to steer the system trajectory to the origin if, and only if, the initial condition was in our theoretically guaranteed approximation of the success set $\Omega_0^*$}.

\subsubsection{Mitigating side effects: $\zeta_3=0$}

To also take into account the side effects associated with the excessive use of antibiotics, we introduce an additional cost term that penalises high antibiotic concentrations. Hence, the optimal control problem is formulated to balance effective bacterial elimination with the need to reduce potential adverse effects.
This amounts to setting cost functional weights $\zeta_1 \neq 0$, $\zeta_2 \neq 0$ and $\zeta_3=0$ in the optimal control problem \eqref{eq:optimalcontrolV_SP}.

Again, we consider different initial conditions to represent different patient scenarios.

For IC1 ($s_0 = 6 \cdot 10^6$ and $r_0 = 3 \cdot 10^3$), we show the results for different weight choices. Specifically, when $\zeta_1=\zeta_2=1$, we get the (unique) optimal solution $\ell = 3.88$, see Figure~\ref{fig:suppl_OCP5}. Indeed, for $\ell > 3.88$, the cost function is slightly increasing (differently from the previous case, as a consequence of the penalisation of higher drug concentrations).
Increasing the weight on the drug concentration to $\zeta_1 = 10^4$ leads to the results shown in Figure~\ref{fig:suppl_OCP6}: the unique minimum is achieved for $\ell = 3.89$ and, for $\ell > 3.89$, the cost function is visibly increasing. 

When the initial condition is outside the success set ($s_0= 6\cdot 10^6$ and $r_0 = 3\cdot10^5$), the obatined optimal therapy is never able to clear the infection, for any choice of the cost weights; see e.g. Figure~\ref{fig:suppl_OCP11} where $\zeta_1 = \zeta_2 = 1$. Differently from the scenario with the cost $b(T)$ (i.e., $\zeta_2=1$ and $\zeta_1=\zeta_3=0$), now $J$ significantly increases when $\ell$ increases, as a consequence of the penalisation of higher antibiotic concentrations.

\subsubsection{Mitigating side effects and the bacterial load over the whole horizon}

We finally consider the most comprehensive cost functional in the optimal control problem \eqref{eq:optimalcontrolV_SP}, with $\zeta_1, \zeta_2, \zeta_3 \neq 0$, so as to also consider the goal of minimising the bacterial load over the entire duration of treatment, rather than only at the final time.

We focus on the initial condition IC1 in the success set ($s_0= 6\cdot 10^6$ and $r_0 = 3\cdot10^3$), and consider different weights in the cost functional. When $\zeta_i =1$ for $1=1,2,3$, the optimal solution is obtained for $\ell = 5$, which corresponds to the maximum allowed dose $w = u_{\max} - u_M$; see Figure~\ref{fig:suppl_OCP7}.

We thus increase the weight $\zeta_1$, related to the antibiotic concentration, to see whether smaller drug doses can clear the infection. For $\zeta_1= 10^5$, as shown in Figure~\ref{fig:suppl_OCP8}, the cost $J$ achieves its minimum for $\ell=3.93$, which leads to clearance of the infection, but with a larger peak and larger lobe in the resistant bacterial population.

When $\zeta_1= 10^5$, $\zeta_2 =1$ and $\zeta_3 = 10^5$, the results are illustrated in Figure~\ref{fig:suppl_OCP9}. Again, $\ell = 5$ provides the minimum of $J$ in order to clear the infection. Conversely, with  $\zeta_1= 10^5$, $\zeta_2 =10^5$ and $\zeta_3 = 1$, the results are depicted in Figure~\ref{fig:suppl_OCP10}, showing that $\ell = 3.93$ minimises $J$, although again with a taller and larger lobe in the resistant bacterial population.

Overall, we observe that, when the integral weights on the concentration $u$ and the total number of bacteria are the same, then the minimisation of $b(t)$ prevails, leading to a larger optimal antibiotic dose. When the weights on the integral of $u$ and the number of bacteria at the end of the treatment period are equal, the one on the concentration of the drug prevails and leads to a smaller optimum antibiotic dose. 

Finally, we consider an example of initial condition outside the success set ($s_0 = 6\cdot 10^6$ and $r_0 = 3\cdot 10^5$), with $\zeta_1 = \zeta_2 = \zeta_3 = 1$. No admissible therapy can clear the infection and the bacterial population increases, with $b \to b_2^*$; see Figure~\ref{fig:suppl_OCP12}.

\subsection{Optimal antibiotic treatment with a larger class of viable therapies}

Theorem 9 in the Main Manuscript provides us with a set $\Omega_0^*$ of initial conditions from which the system trajectory can be steered to the infection-free equilibrium by \textit{some} therapy in $\mathcal{V}$. Given an initial condition in $\Omega_0^*$, we seek a suitable antibiotic therapy within an optimal control framework.

In this section, we consider a more general optimal control problem, which allows for increased flexibility by allowing the choice of different drug doses at different administration times, while the administration period $\Delta \tau$ remains fixed, as well as early termination of the treatment (drug administration can be interrupted before the end of the treatment horizon), which is allowed to help further reduce the drug concentration in the patient and avoid unnecessary use of antibiotics. We then look for the optimal drug dosage that minimises a combination of the integral bacterial load and drug concentration in the bloodstream.

To this aim, we consider a larger class of viable therapies as per Definition 1 in the Main Manuscript, with $\tau_0=0$, constant $\tau_{j+1}-\tau_j = \Delta \tau$ $\forall j$, and doses
\begin{equation*}
w(t) = \begin{cases}
\ell(\tau_j) u_M \in \mathcal{W} \mbox{ for } t=\tau_j,\\
0 \mbox{ otherwise }
\end{cases}
\end{equation*}
as our control variables.
Yet, as mentioned, for more flexibility, we allow the therapy to stop before the end of our finite horizon $T$, and hence we do not enforce the constraint $u(t) \geq u_M$.

Given weights $\zeta_1, \zeta_3 \geq 0$, our optimal control problem formulation
\begin{eqnarray}
    & & \min_{w(t)} \int_0^T \left(\zeta_1 u(t) + \zeta_3 b(t)\right)dt \label{eq:ocp} \\
    & \mbox{s.t. } & \eqref{eq:originalModel}-\eqref{eq:udynamic}, \nonumber \\
    & & 0 \leq u \leq u_{\max}, \nonumber\\
    & & 0 \leq w_j \leq u_{\max}-u_M, \nonumber\\
    & & \tau_0=0,\;\; \tau_{j+1}-\tau_j \equiv \Delta\tau \nonumber
\end{eqnarray}
aims at minimising the total amount of bacteria and the drug-induced side effects over $[0,T]$. 

The set $\mathcal{V}$ of viable periodic therapies that we consider in our theoretical framework constitutes a proper subset of the admissible controls for the optimal control problem \eqref{eq:ocp}. Consequently, any viable periodic therapy that successfully clears the infection is automatically admissible within the framework of the optimal control problem \eqref{eq:ocp}. Therefore, for initial conditions within the success set, we are guaranteed the existence of at least one infection-clearing therapy that is available to the optimal control problem; if the optimal solution does not select this particular therapy, this necessarily indicates the existence of alternative therapies that achieve a superior performance according to the considered optimisation criterion (although they may not clear the infection).

For $\Delta \tau = 8$ h and $T=5$ days, we numerically solve the optimal control problem with a multiple shooting method using CasAdi \cite{casadi} in MATLAB. The chosen parameter values are reported in the caption of Fig.~\ref{fig:ocp_test}, which compares optimal therapies obtained with three different choices of the weights.
When $\zeta_3 \gg \zeta_1$ (Fig.~\ref{fig:ocp_test}a), higher doses are maintained until the end of treatment.
When $\zeta_3=\zeta_1$ (Fig.~\ref{fig:ocp_test}b), the optimal strategy yields the maximum antibiotic dose for the first day, after which the treatment stops. When $\zeta_1 \gg \zeta_3$ (Fig.~\ref{fig:ocp_test}c), the antibiotic doses are maximal in the first 16 hours, then a smaller dose is given. 
Despite these different antibiotic administration profiles, the time evolution of both susceptible and resistant bacteria is almost identical in all scenarios and is reported in the second row in Fig.~\ref{fig:ocp_test}: $r$ increases until $s$ is reduced enough to allow the immune system to target resistant bacteria with sufficient capacity to induce their decrease, leading to a complete eradication of the infection. The peaks of resistant bacteria vary slightly between cases: $40723$ in (a), $40725$ in (b), and $40732$ in (c). This finding suggests that lower drug doses administered for shorter periods can effectively clear bacterial infections while minimising patient exposure to medication.


\newpage

\onecolumn

\begin{sidewaysfigure}
    \centering
    \includegraphics[width=1\textwidth]{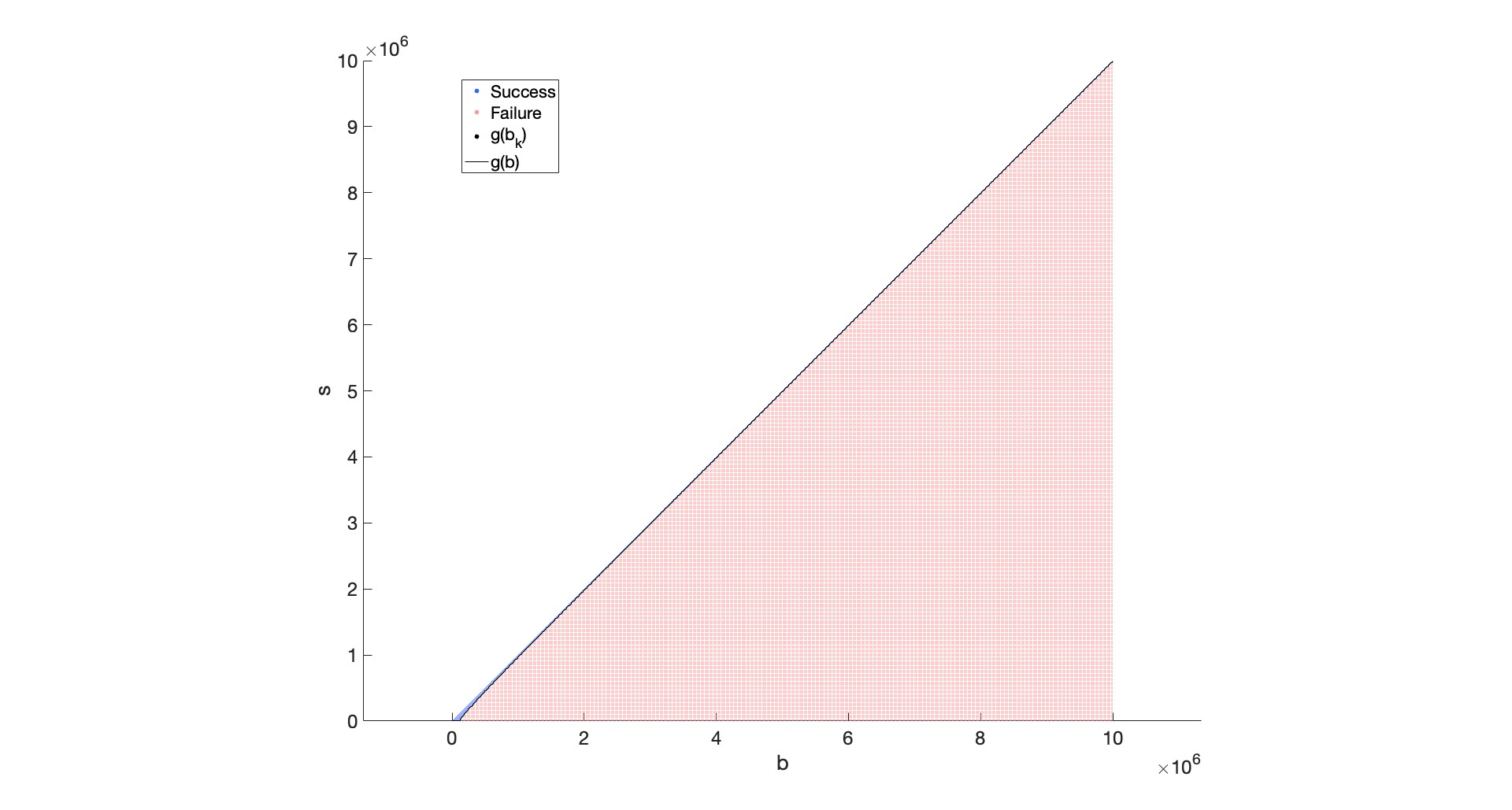}
    \caption{Numerical simulation results shown over the interval $b \in [0, N]$, with $N = 10^7$. Initial conditions from which the trajectories converge to the origin are marked in blue, initial conditions from which the trajectories do not converge to the origin are marked in red. The values $g(b_k)$ of $g(b)$ computed at the grid points $b_k$ are marked with black circles; the “staircase” numerical approximation of the function $g(b)$, obtained by keeping the value of the function constant and equal to $g(b_k)$ for all $b \in (b_{k-1}, b_k]$, is shown in black. All the red points $(b_k,s_k)$ lie below $g(b)$, i.e. $s_k < g(b_k)$, and all blue points $(b_k,s_k)$ lie above $g(b)$, i.e. $s_k > g(b_k)$, and hence belong to our approximated success set.}
    \label{fig:gb3000vsGrid10000x1}
\end{sidewaysfigure}
\begin{sidewaysfigure}
        \centering
        \includegraphics[width=1\textwidth]{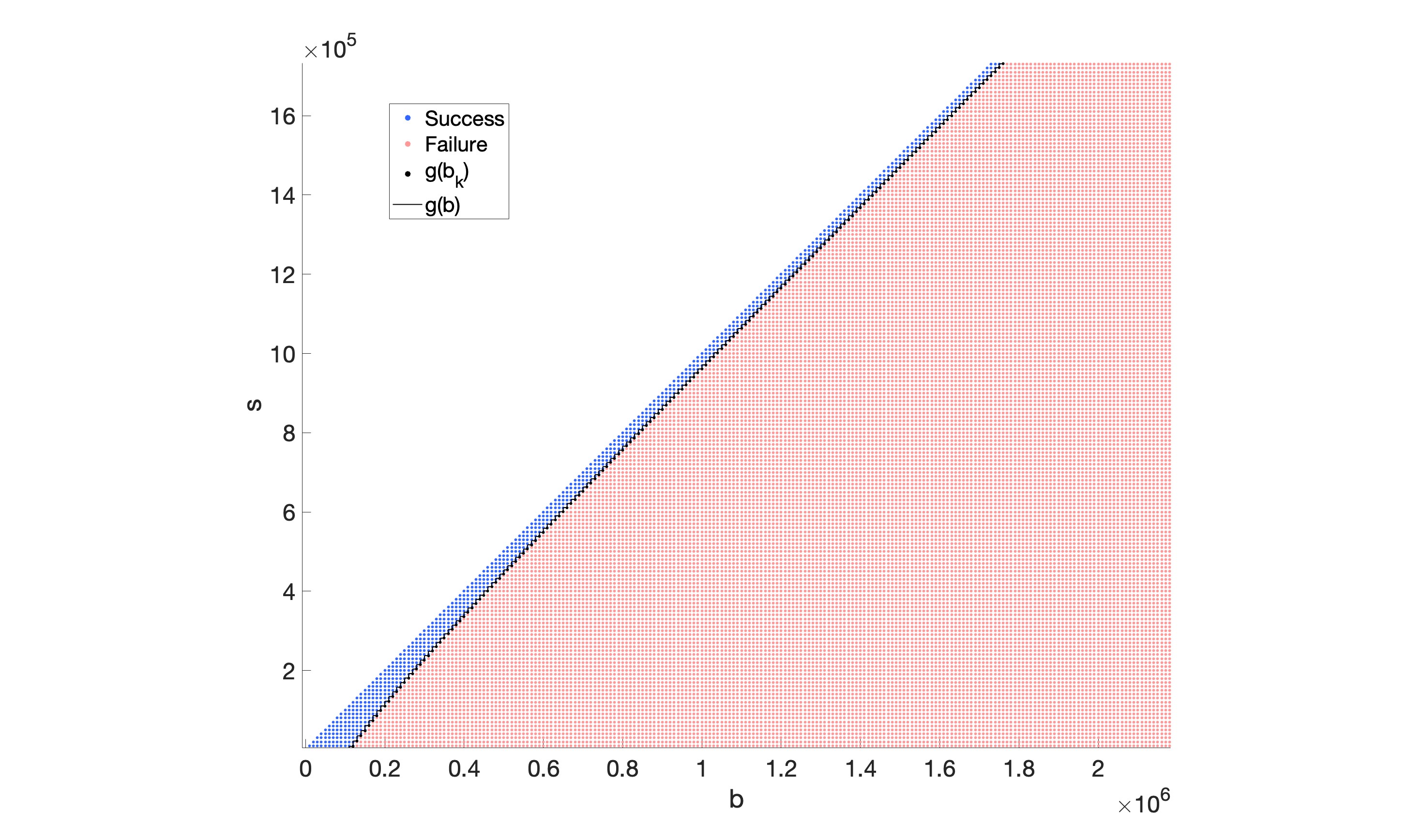}
    \caption{Numerical simulation results shown over the interval $b \in [0, 2\cdot 10^6]$. Initial conditions from which the trajectories converge to the origin are marked in blue, initial conditions from which the trajectories do not converge to the origin are marked in red. The values $g(b_k)$ of $g(b)$ computed at the grid points $b_k$ are marked with black circles; the “staircase” numerical approximation of the function $g(b)$, obtained by keeping the value of the function constant and equal to $g(b_k)$ for all $b \in (b_{k-1}, b_k]$, is shown in black. All the red points $(b_k,s_k)$ lie below $g(b)$, i.e. $s_k < g(b_k)$, and all blue points $(b_k,s_k)$ lie above $g(b)$, i.e. $s_k > g(b_k)$, and hence belong to our approximated success set.}
    \label{fig:gb3000vsGrid10000x2}
\end{sidewaysfigure}
\begin{sidewaysfigure}
        \centering
        \includegraphics[width=1\textwidth]{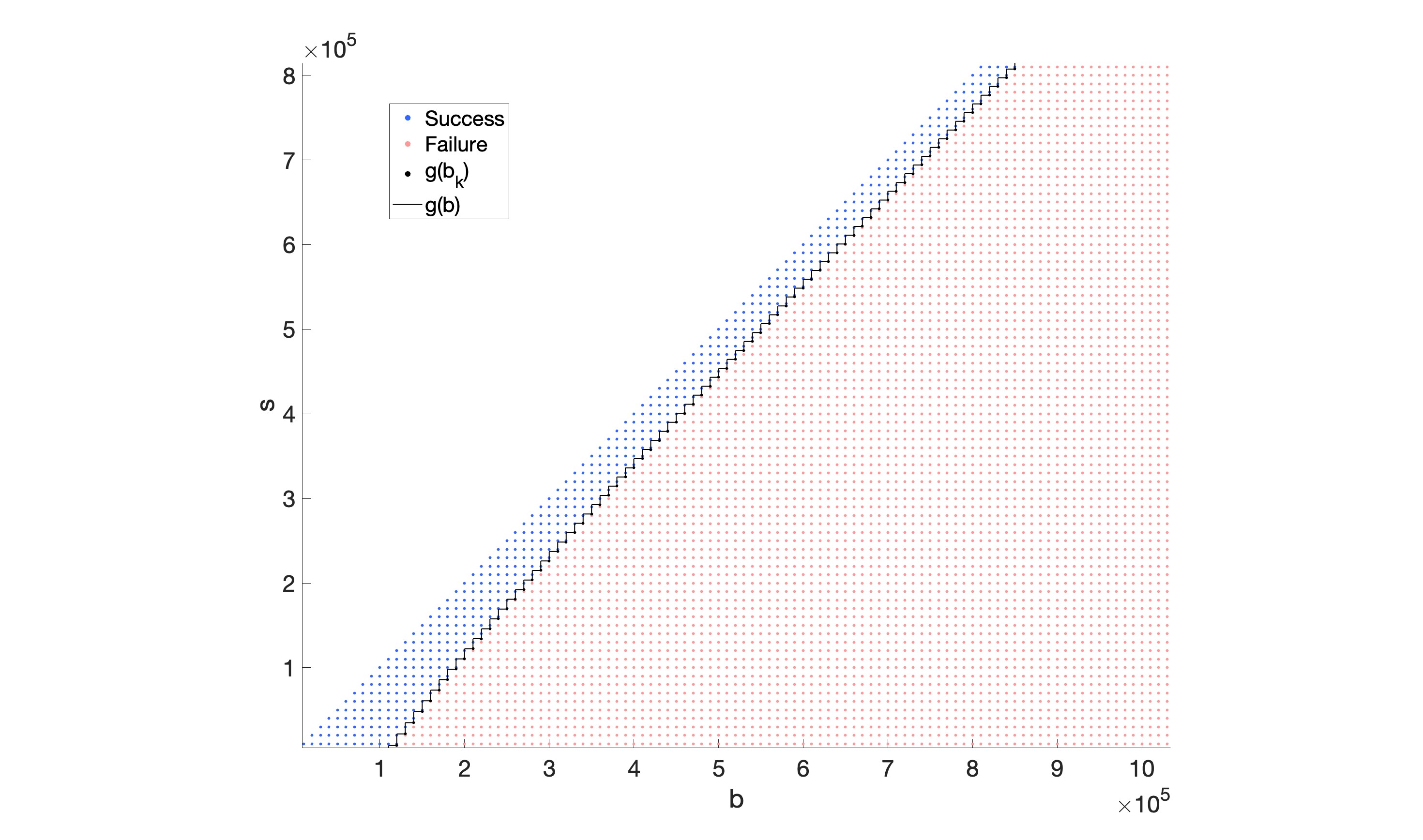}
    \caption{Numerical simulation results shown over the interval $b \in [0, 1\cdot 10^6]$. Initial conditions from which the trajectories converge to the origin are marked in blue, initial conditions from which the trajectories do not converge to the origin are marked in red. The values $g(b_k)$ of $g(b)$ computed at the grid points $b_k$ are marked with black circles; the “staircase” numerical approximation of the function $g(b)$, obtained by keeping the value of the function constant and equal to $g(b_k)$ for all $b \in (b_{k-1}, b_k]$, is shown in black. All the red points $(b_k,s_k)$ lie below $g(b)$, i.e. $s_k < g(b_k)$, and all blue points $(b_k,s_k)$ lie above $g(b)$, i.e. $s_k > g(b_k)$, and hence belong to our approximated success set.}
    \label{fig:gb3000vsGrid10000x3}
\end{sidewaysfigure}
%
%
\begin{sidewaysfigure}
    \centering
        \centering
        \includegraphics[width=1\textwidth]{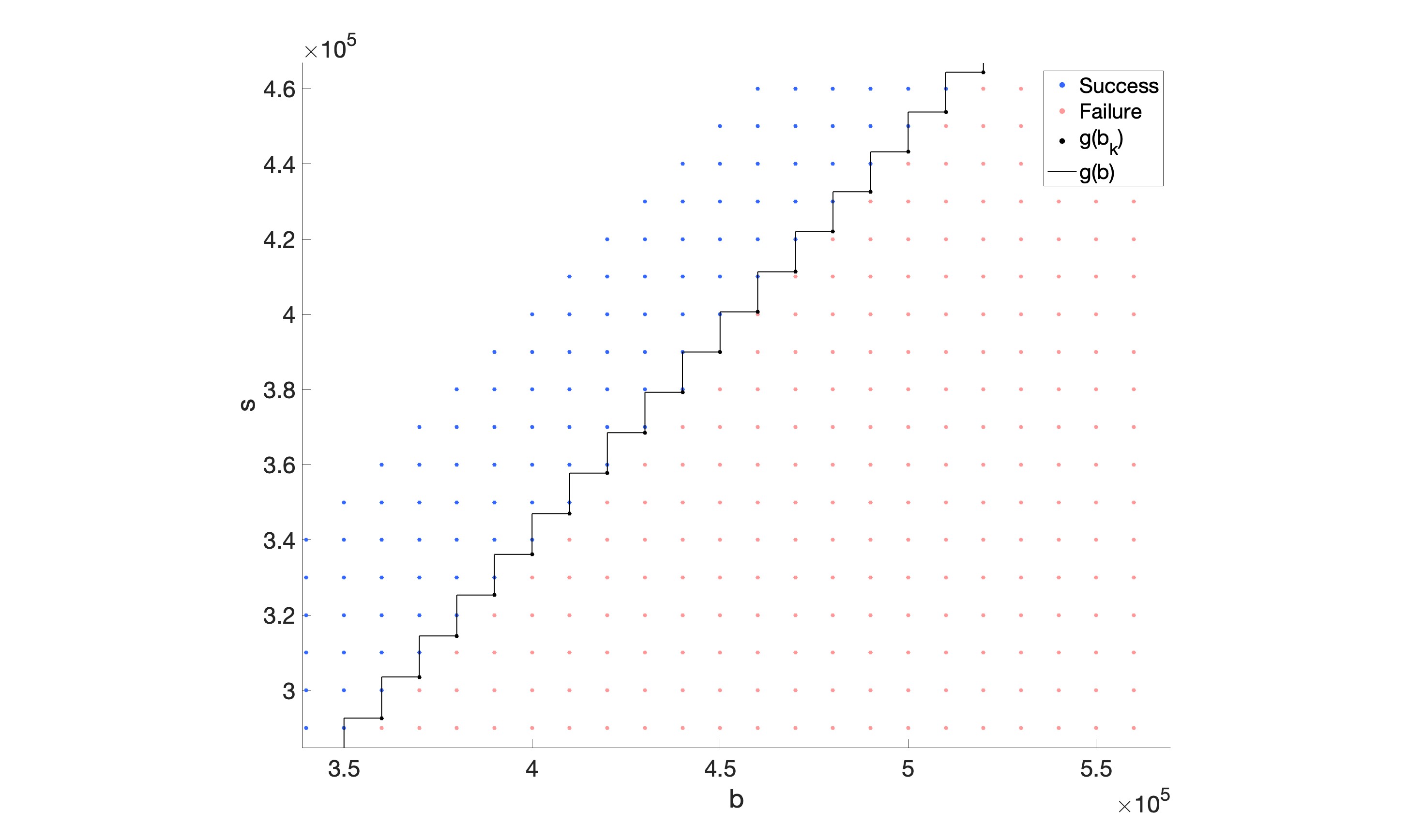}
    \caption{Numerical simulation results shown over the interval $b \in [3.5 \cdot 10^5, 5.5\cdot 10^5]$. Initial conditions from which the trajectories converge to the origin are marked in blue, initial conditions from which the trajectories do not converge to the origin are marked in red. The values $g(b_k)$ of $g(b)$ computed at the grid points $b_k$ are marked with black circles; the “staircase” numerical approximation of the function $g(b)$, obtained by keeping the value of the function constant and equal to $g(b_k)$ for all $b \in (b_{k-1}, b_k]$, is shown in black. All the red points $(b_k,s_k)$ lie below $g(b)$, i.e. $s_k < g(b_k)$, and all blue points $(b_k,s_k)$ lie above $g(b)$, i.e. $s_k > g(b_k)$, and hence belong to our approximated success set.}
    \label{fig:gb3000vsGrid10000x5}
\end{sidewaysfigure}


\begin{sidewaysfigure}
    \centering
    \includegraphics[width=1\textwidth]{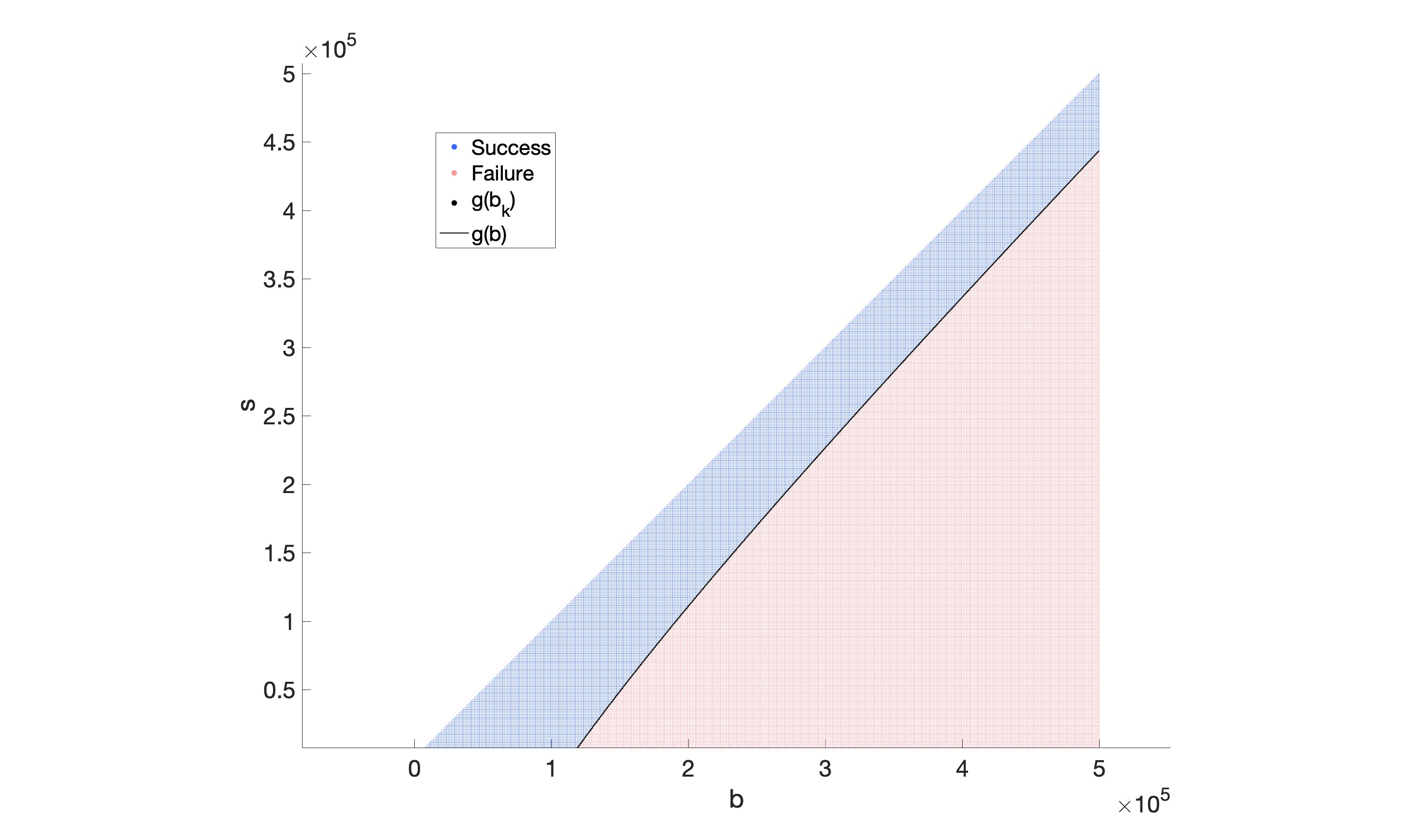}
    \caption{Numerical simulation results shown over the interval $b \in [0, 5\cdot 10^5]$. Initial conditions from which the trajectories converge to the origin are marked in blue, initial conditions from which the trajectories do not converge to the origin are marked in red. The values $g(b_k)$ of $g(b)$ computed at the grid points $b_k$ are marked with black circles; the “staircase” numerical approximation of the function $g(b)$, obtained by keeping the value of the function constant and equal to $g(b_k)$ for all $b \in (b_{k-1}, b_k]$, is shown in black. All the red points $(b_k,s_k)$ lie below $g(b)$, i.e. $s_k < g(b_k)$, and all blue points $(b_k,s_k)$ lie above $g(b)$, i.e. $s_k > g(b_k)$, and hence belong to our approximated success set.}
    \label{fig:gb5000vsGrid10000_Reducedx1}
\end{sidewaysfigure}
\begin{sidewaysfigure}
        \centering
        \includegraphics[width=1\textwidth]{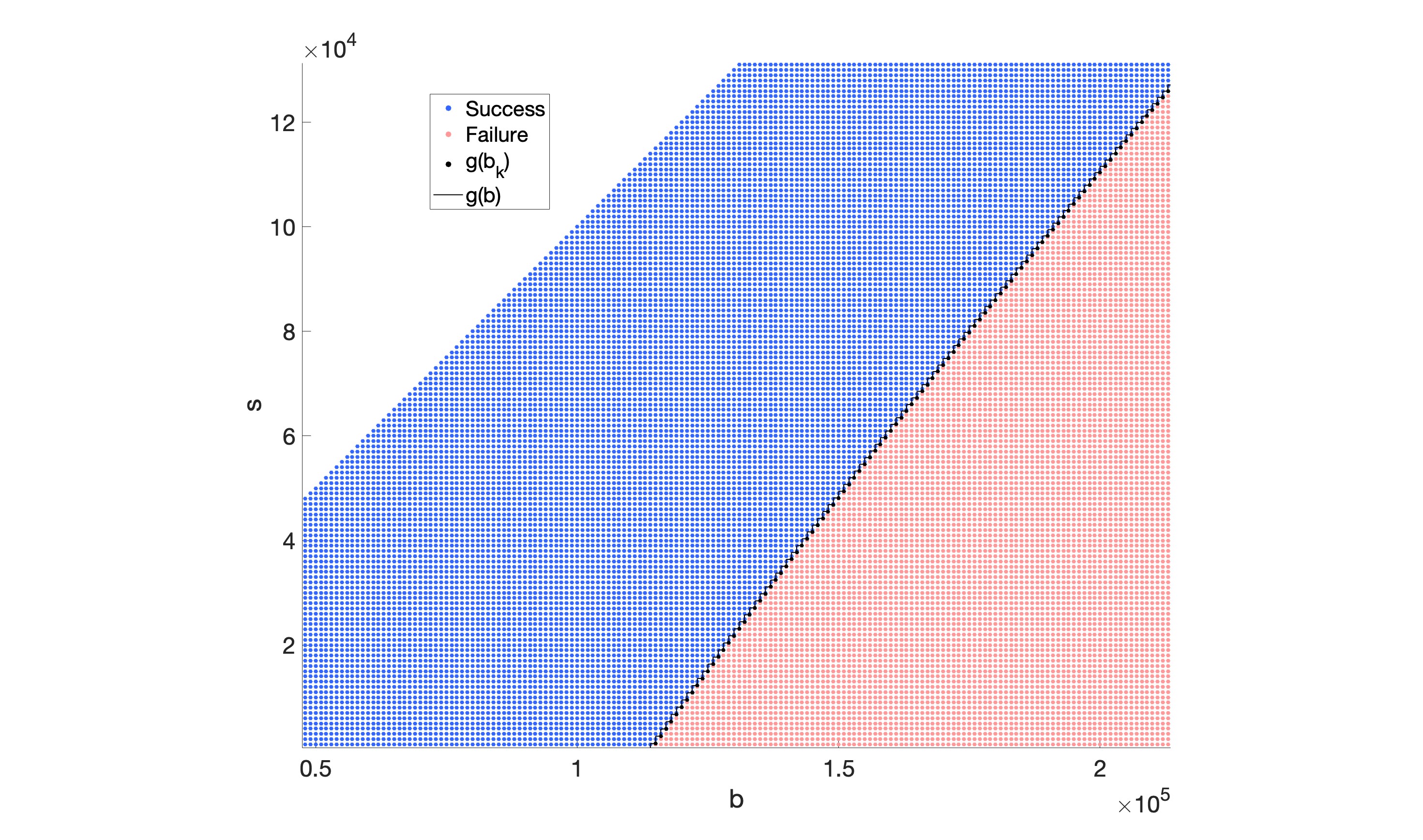}
    \caption{Numerical simulation results shown over the interval $b \in [0.5 \cdot 10^5, 2\cdot 10^5]$. Initial conditions from which the trajectories converge to the origin are marked in blue, initial conditions from which the trajectories do not converge to the origin are marked in red. The values $g(b_k)$ of $g(b)$ computed at the grid points $b_k$ are marked with black circles; the “staircase” numerical approximation of the function $g(b)$, obtained by keeping the value of the function constant and equal to $g(b_k)$ for all $b \in (b_{k-1}, b_k]$, is shown in black. All the red points $(b_k,s_k)$ lie below $g(b)$, i.e. $s_k < g(b_k)$, and all blue points $(b_k,s_k)$ lie above $g(b)$, i.e. $s_k > g(b_k)$, and hence belong to our approximated success set.}
    \label{fig:gb5000vsGrid10000_Reducedx2}
\end{sidewaysfigure}
\begin{sidewaysfigure}
        \centering
        \includegraphics[width=1\textwidth]{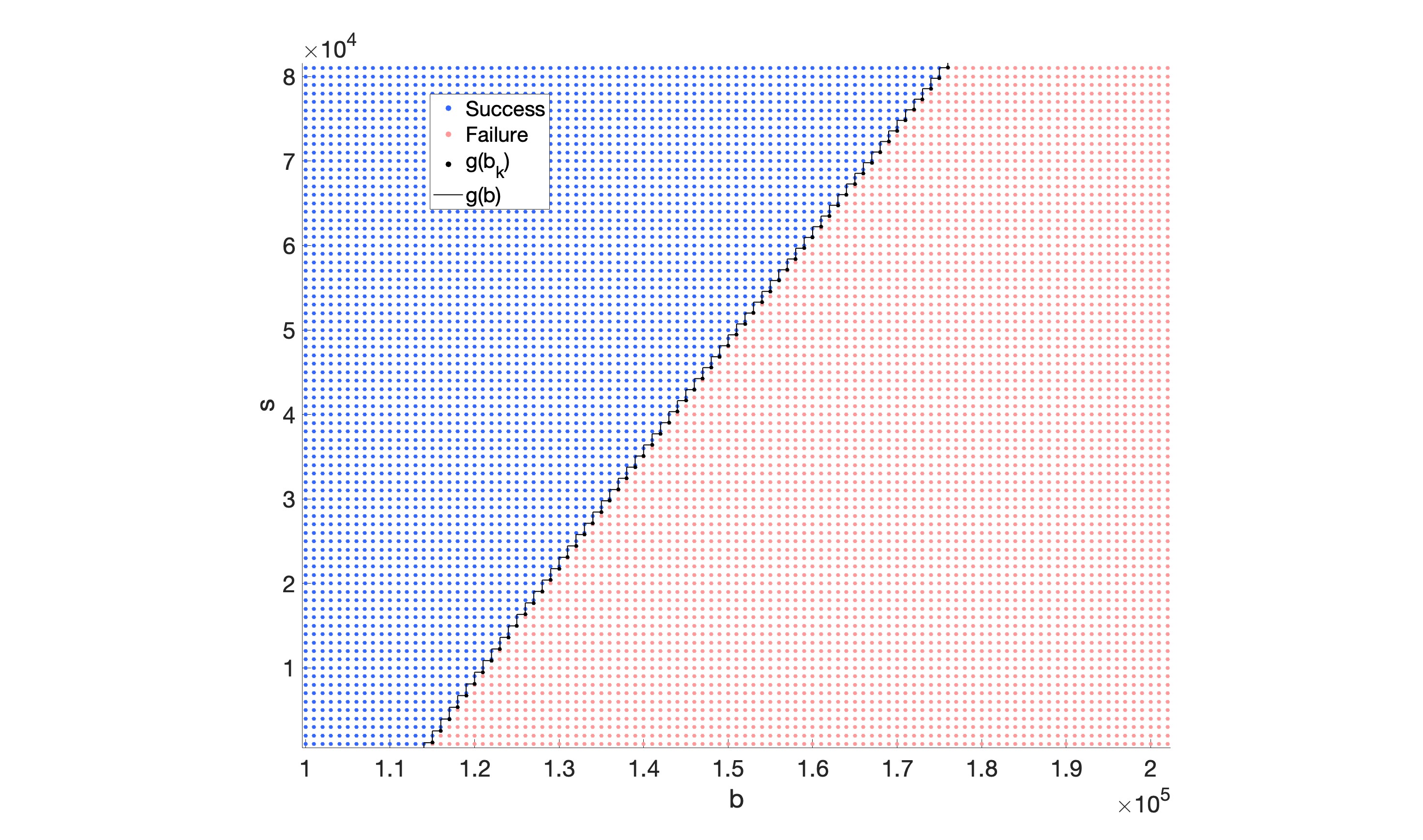}
    \caption{Numerical simulation results shown over the interval $b \in [1\cdot 10^5, 2\cdot 10^5]$. Initial conditions from which the trajectories converge to the origin are marked in blue, initial conditions from which the trajectories do not converge to the origin are marked in red. The values $g(b_k)$ of $g(b)$ computed at the grid points $b_k$ are marked with black circles; the “staircase” numerical approximation of the function $g(b)$, obtained by keeping the value of the function constant and equal to $g(b_k)$ for all $b \in (b_{k-1}, b_k]$, is shown in black. All the red points $(b_k,s_k)$ lie below $g(b)$, i.e. $s_k < g(b_k)$, and all blue points $(b_k,s_k)$ lie above $g(b)$, i.e. $s_k > g(b_k)$, and hence belong to our approximated success set.}
    \label{fig:gb5000vsGrid10000_Reducedx3}
\end{sidewaysfigure}
\begin{sidewaysfigure}
    \centering
        \centering
        \includegraphics[width=1\textwidth]{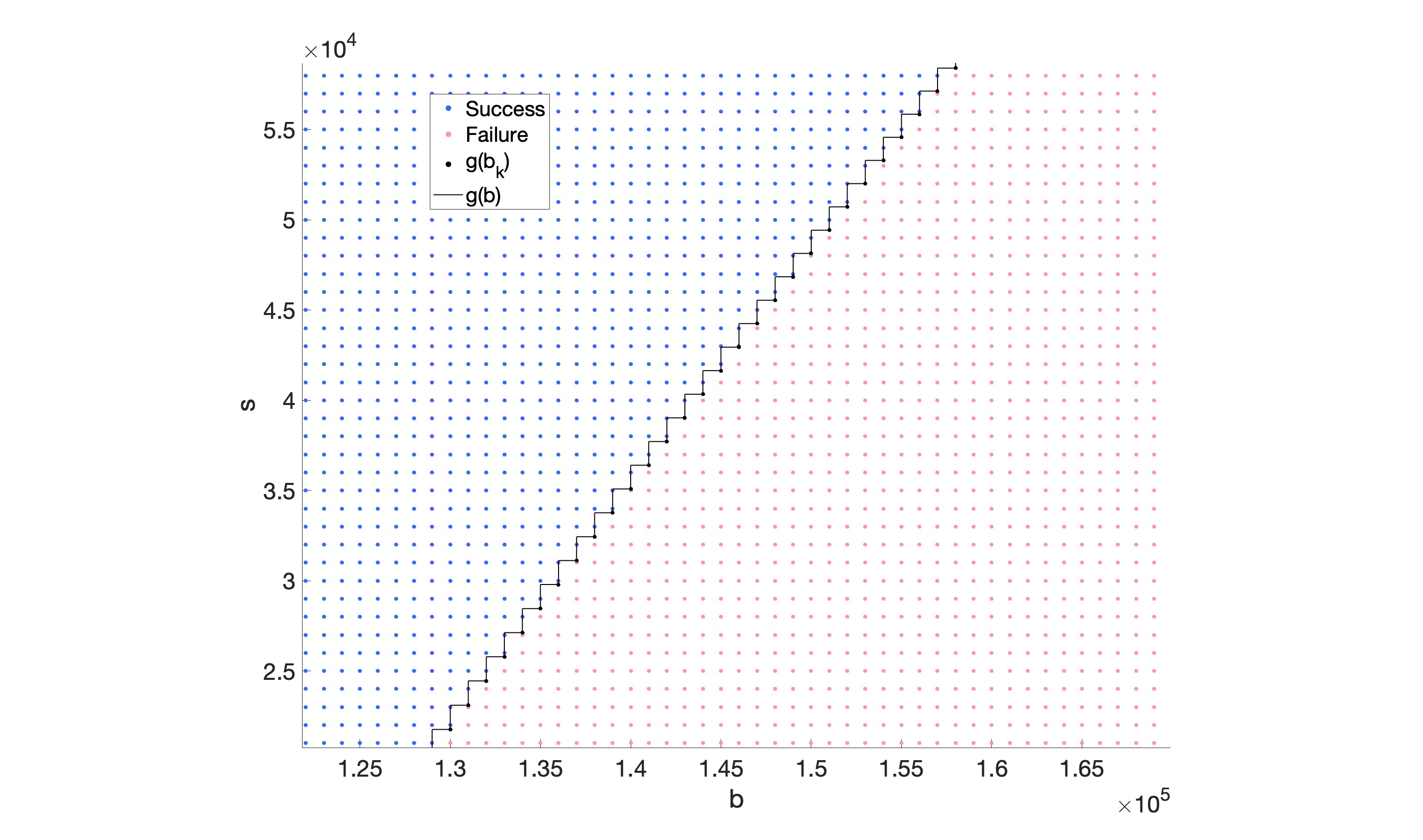}
    \caption{Numerical simulation results shown over the interval $b \in [1.25\cdot 10^5, 1.65\cdot 10^5]$. Initial conditions from which the trajectories converge to the origin are marked in blue, initial conditions from which the trajectories do not converge to the origin are marked in red. The values $g(b_k)$ of $g(b)$ computed at the grid points $b_k$ are marked with black circles; the “staircase” numerical approximation of the function $g(b)$, obtained by keeping the value of the function constant and equal to $g(b_k)$ for all $b \in (b_{k-1}, b_k]$, is shown in black. All the red points $(b_k,s_k)$ lie below $g(b)$, i.e. $s_k < g(b_k)$, and all blue points $(b_k,s_k)$ lie above $g(b)$, i.e. $s_k > g(b_k)$, and hence belong to our approximated success set.}
    \label{fig:gb5000vsGrid10000_Reducedx4}
\end{sidewaysfigure}
%

\begin{figure}[p]
    \centering
        \centering
        \includegraphics[width=0.9\textwidth]{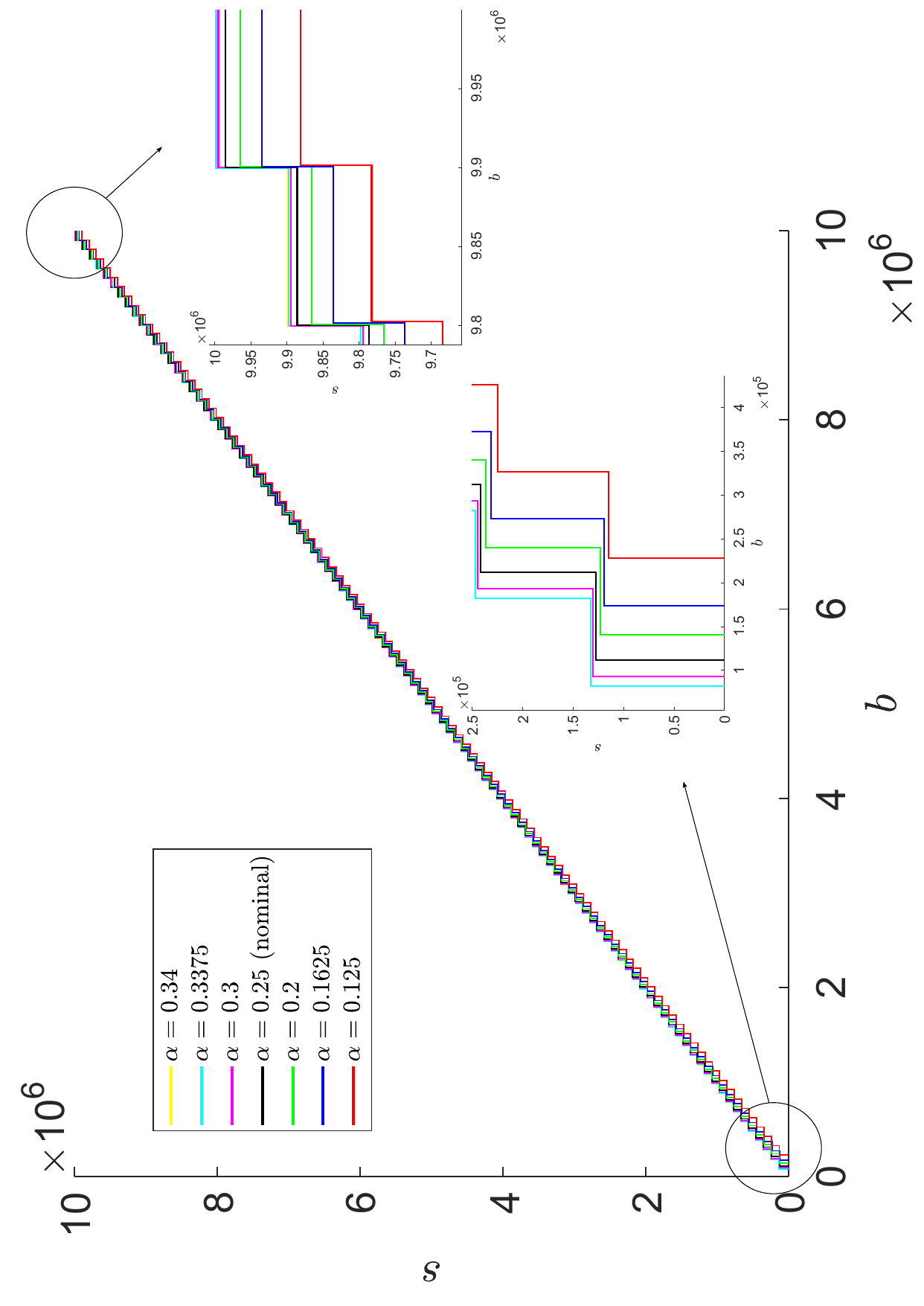}
    \caption{Numerical analysis of the sensitivity to changes in the parameter $\alpha$, bacterial net growth rate. In the insets, the yellow curve ($\alpha= 0.34$) lies behind the light blue one ($\alpha=0.3375$). Increasing $\alpha$ decreases $b_1^*$ and shrinks the success set, as expected: a larger bacterial growth rate makes clearance more difficult to achieve. $I_1 = 1.4625\cdot 10^5$ and $I_2 = 1.1696\cdot 10^5$.
}
    \label{fig:changingalpha}
\end{figure}

\begin{figure}[p]
    \centering
        \centering
        \includegraphics[width=0.9\textwidth]{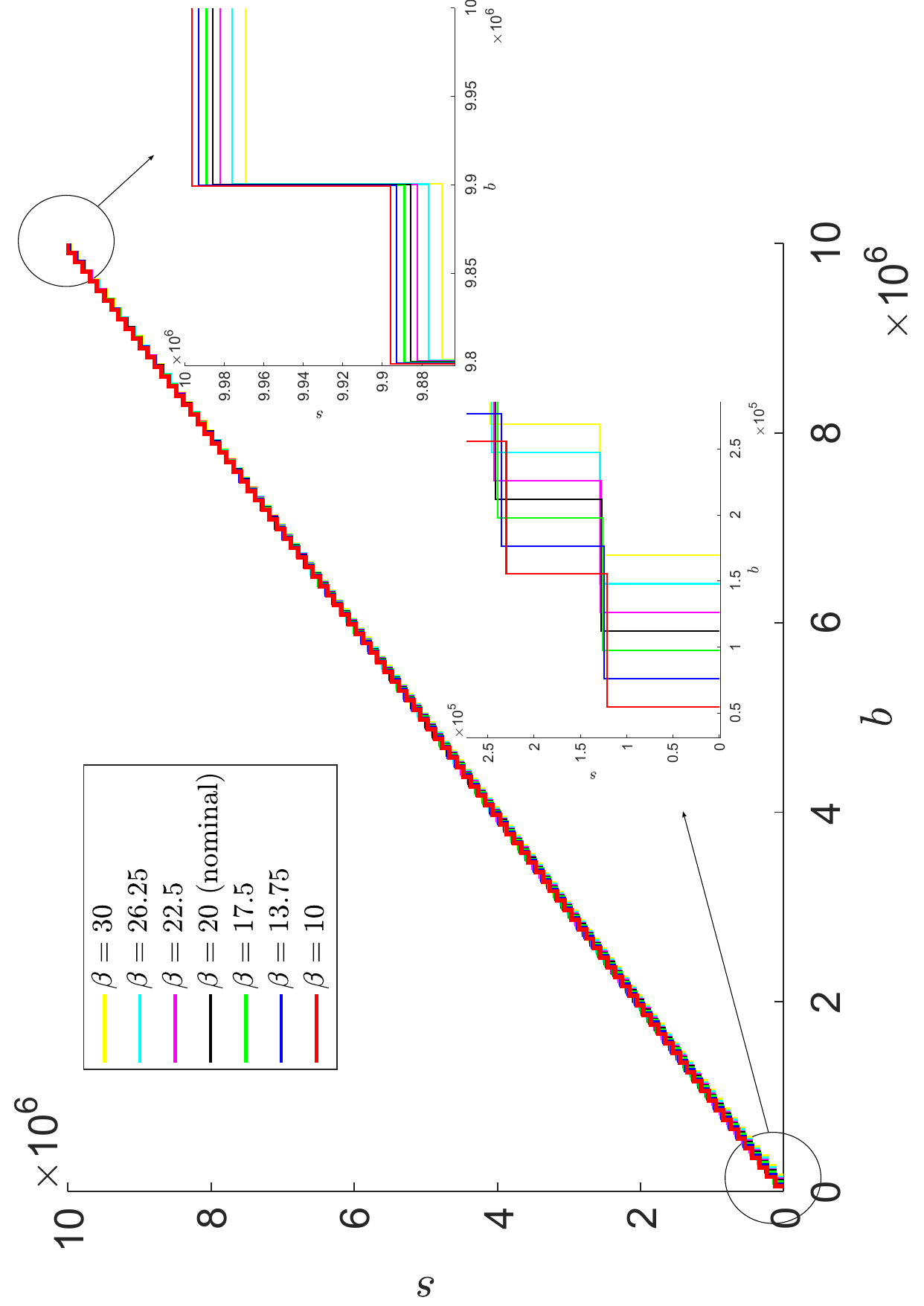}
    \caption{Numerical analysis of the sensitivity to changes in the parameter  $\beta$, maximum killing rate of the immune system. Increasing $\beta$ increases $b_1^*$ and expands the success set, as expected: a stronger immune system helps clear the infection. $I_1 = 1.1463\cdot 10^5$ and $I_2 = 4.9108\cdot 10^4$.}
    \label{fig:changingbeta}
\end{figure}

\begin{figure}[p]
    \centering
        \centering
        \includegraphics[width=0.9\textwidth]{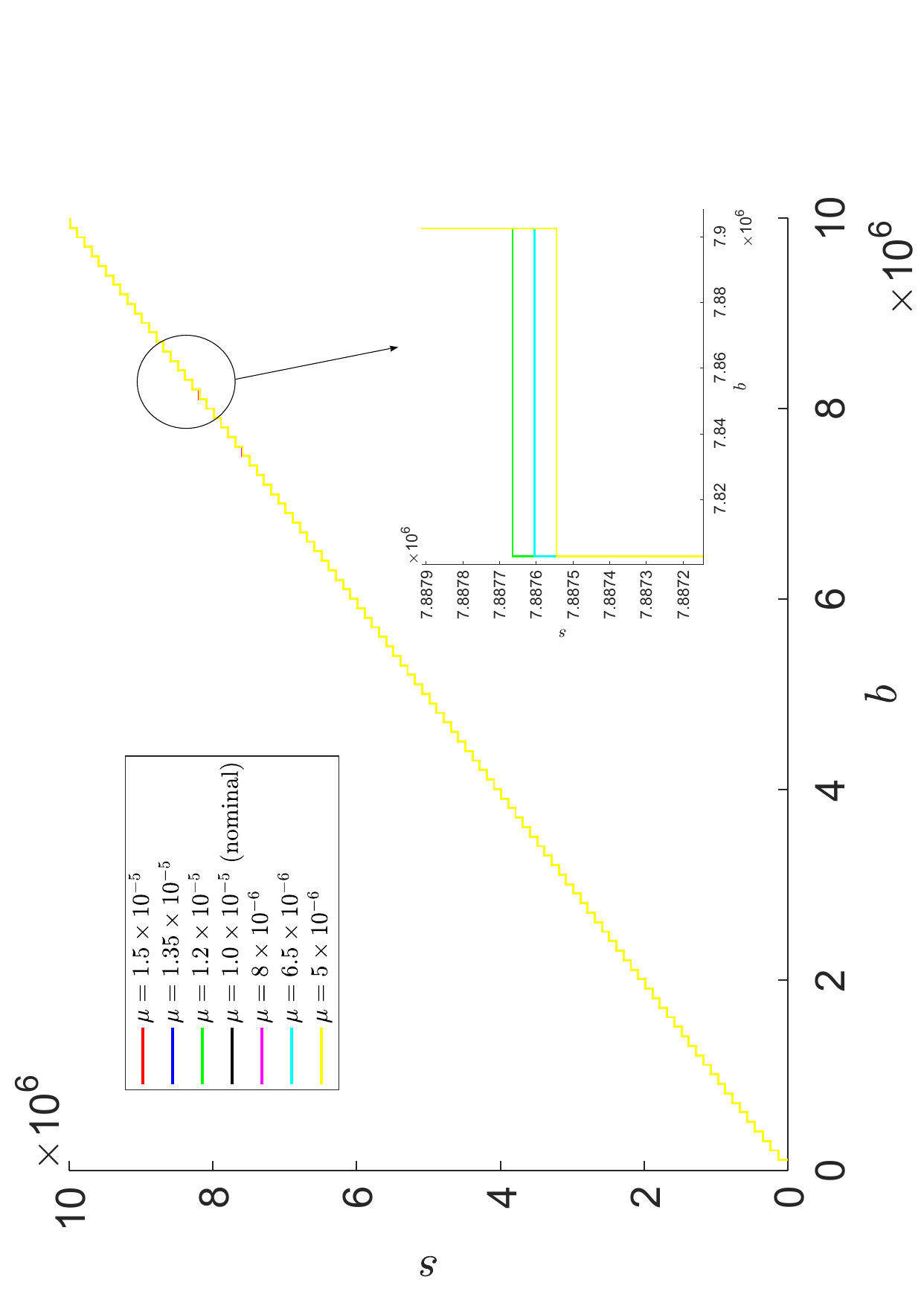}
    \caption{Numerical analysis of the sensitivity to changes in the parameter  $\mu$, mutation rate.  In the inset, the black and the purple curve lie behind the light blue one, and the blue and red curves lies behind green one. Increasing $\mu$ shrinks the success set, as expected: more mutations lead to more resistant bacteria and thus make clearance more difficult to achieve. $I_1 = 0$ and $I_2 = 2.2888\cdot 10^2$.}
    \label{fig:changingmu}
\end{figure}

\begin{figure}[p]
    \centering
        \centering
        \includegraphics[width=0.9\textwidth]{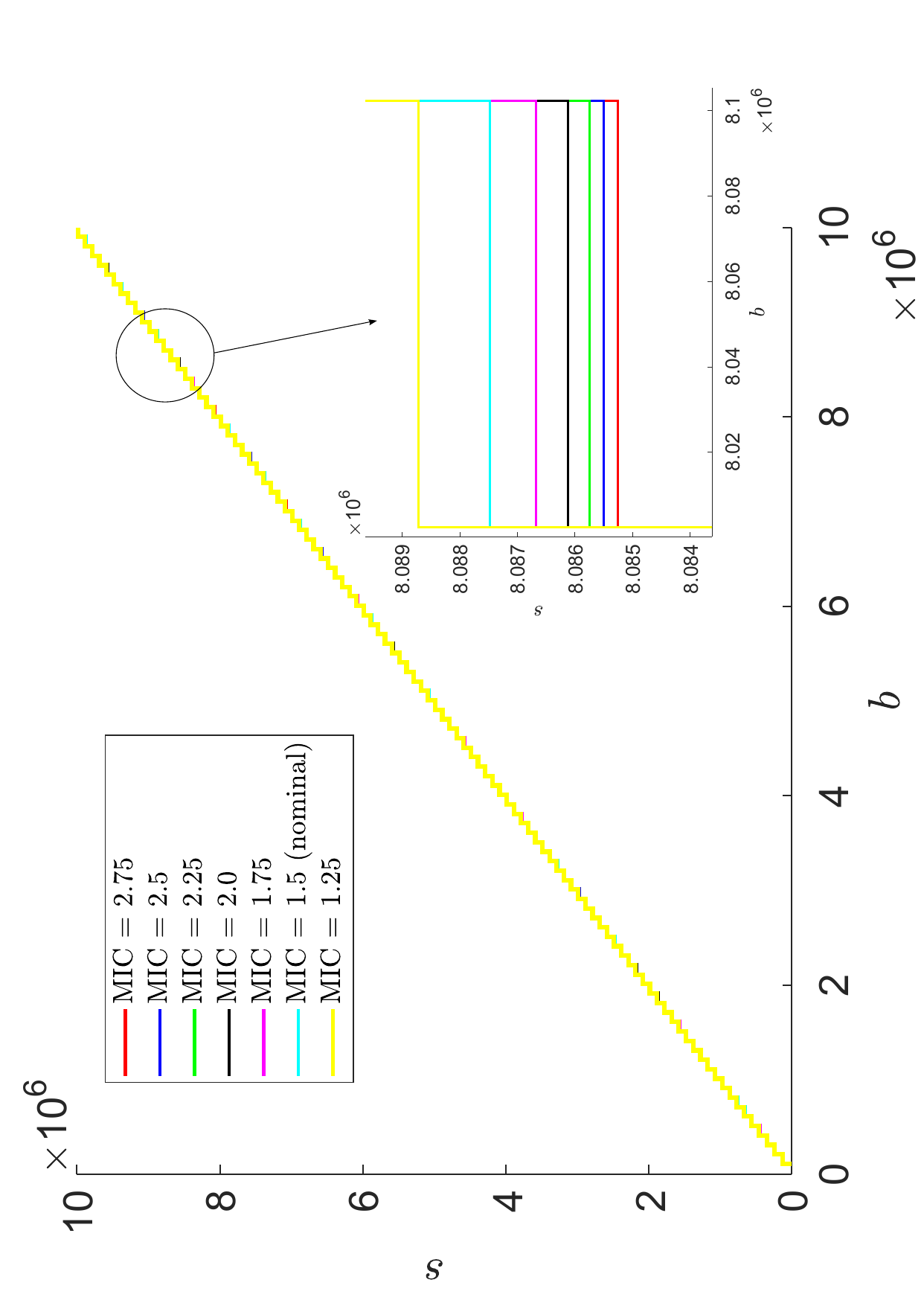}
    \caption{Numerical analysis of the sensitivity to changes in the parameter $u_M$, the antibiotic Minimum Inhibitory Concentration (MIC). With all other parameters held fixed, increasing $u_M$ expands the success set, because it leads to the use of larger antibiotic concentrations. $I_1 = 0$ and $I_2 = 3.7384\cdot 10^3$.}
    \label{fig:changingMIC}
\end{figure}

\begin{figure}[p]
    \centering
        \centering
        \includegraphics[width=0.9\textwidth]{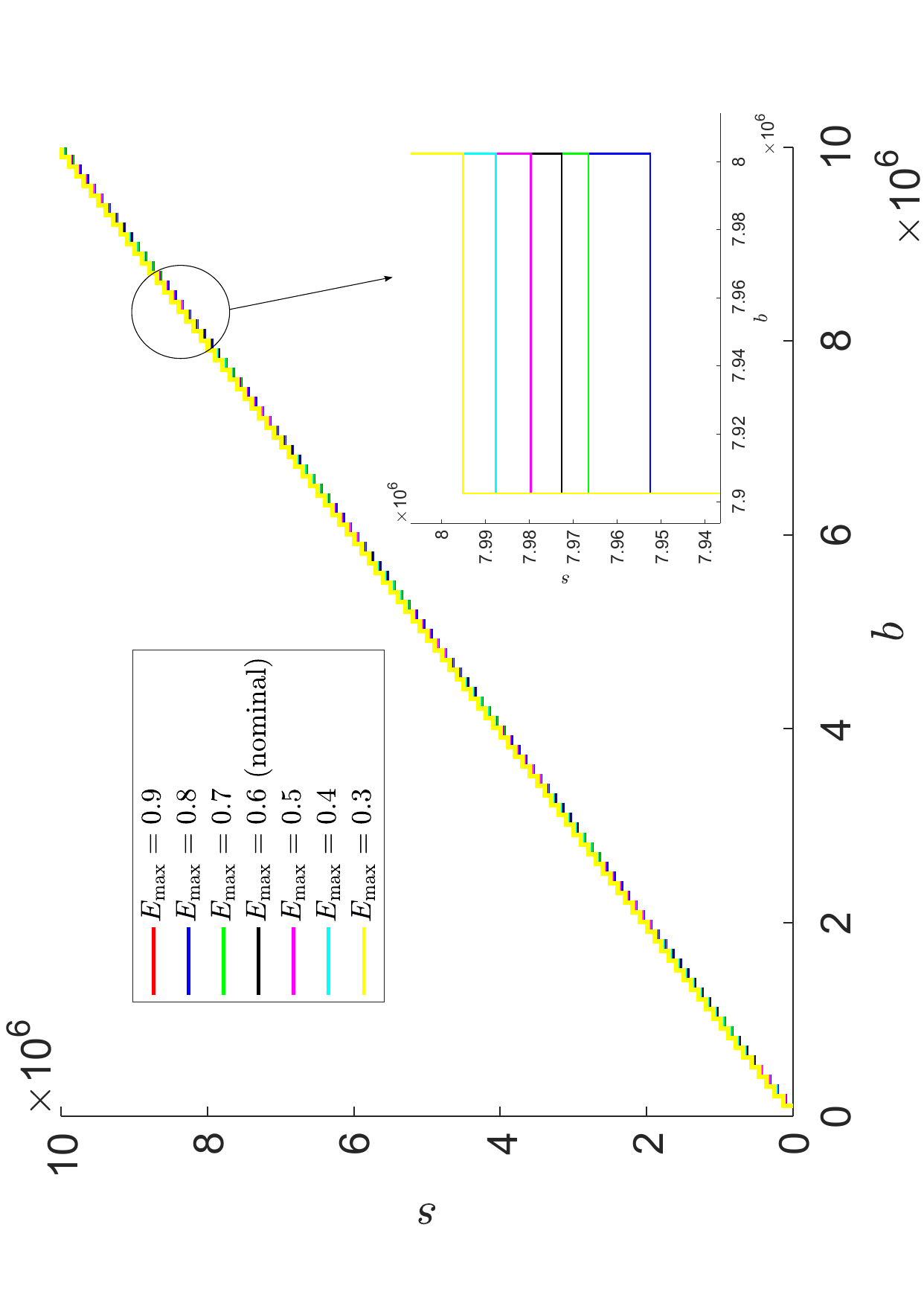}
    \caption{Numerical analysis of the sensitivity to changes in the parameter $E_{\max}$, maximum antibiotic killing rate. In the inset, the red curve lies behind the blue one. Increasing $E_{\max}$ expands the success set, as expected: a more effective antibiotic helps clear the infection. $I_1 = 0$ and $I_2 = 4.4231\cdot 10^4$.}
    \label{fig:changingEmax}
\end{figure}

\begin{figure}[p]
    \centering
        \centering
        \includegraphics[width=0.9\textwidth]{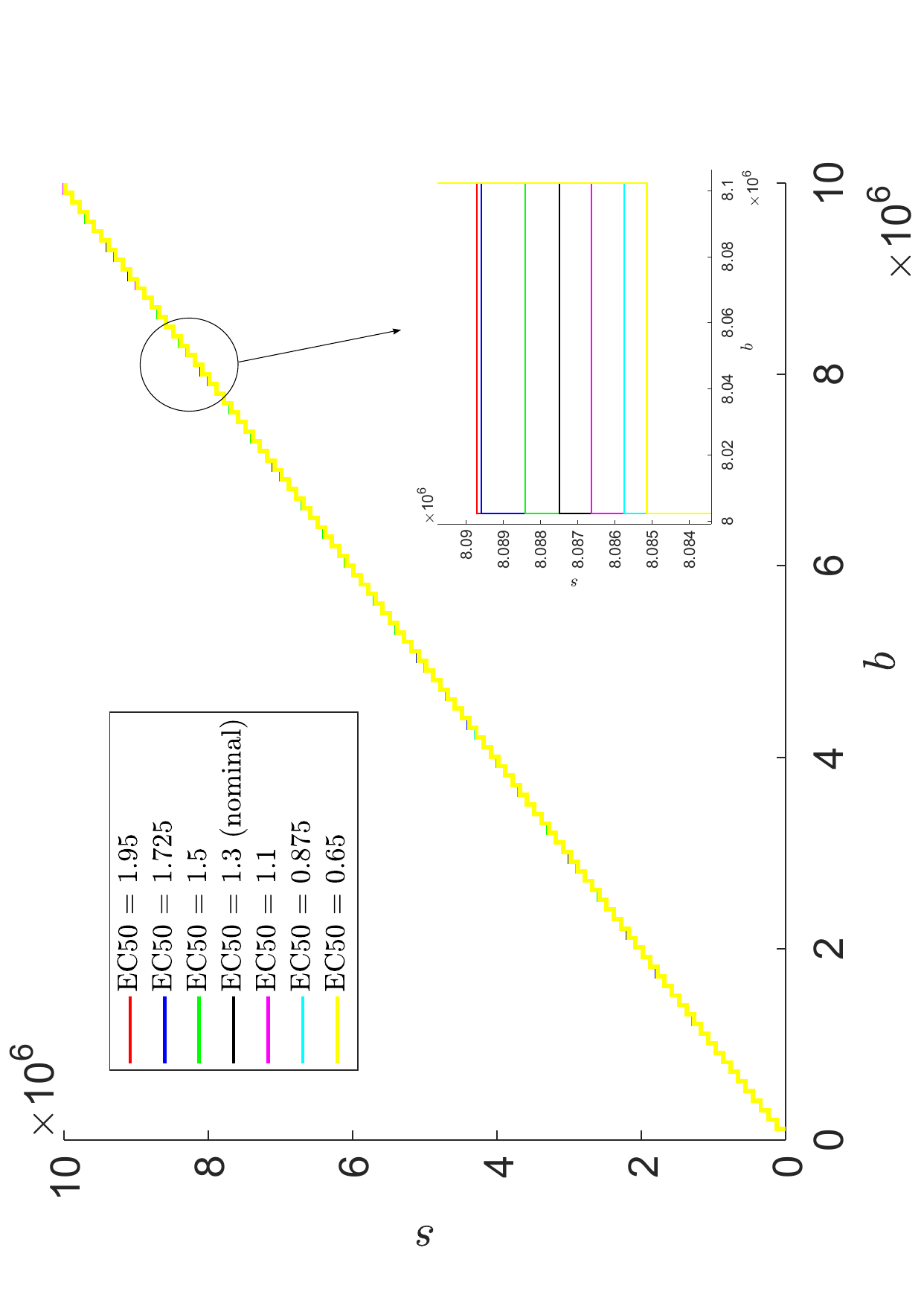}
    \caption{Numerical analysis of the sensitivity to changes in the parameter $EC_{50}$, which is the antibiotic concentration required to yield half-maximal killing rate. As expected, increasing $EC_{50}$ shrinks the success set, because it is associated with a reduction in the efficacy of the antibiotic. $I_1 = 0$ and $I_2 = 4.9591 \cdot 10^3$.}
    \label{fig:changingEC50}
\end{figure}

\begin{figure}[p]
    \centering
        \centering
        \includegraphics[width=0.9\textwidth]{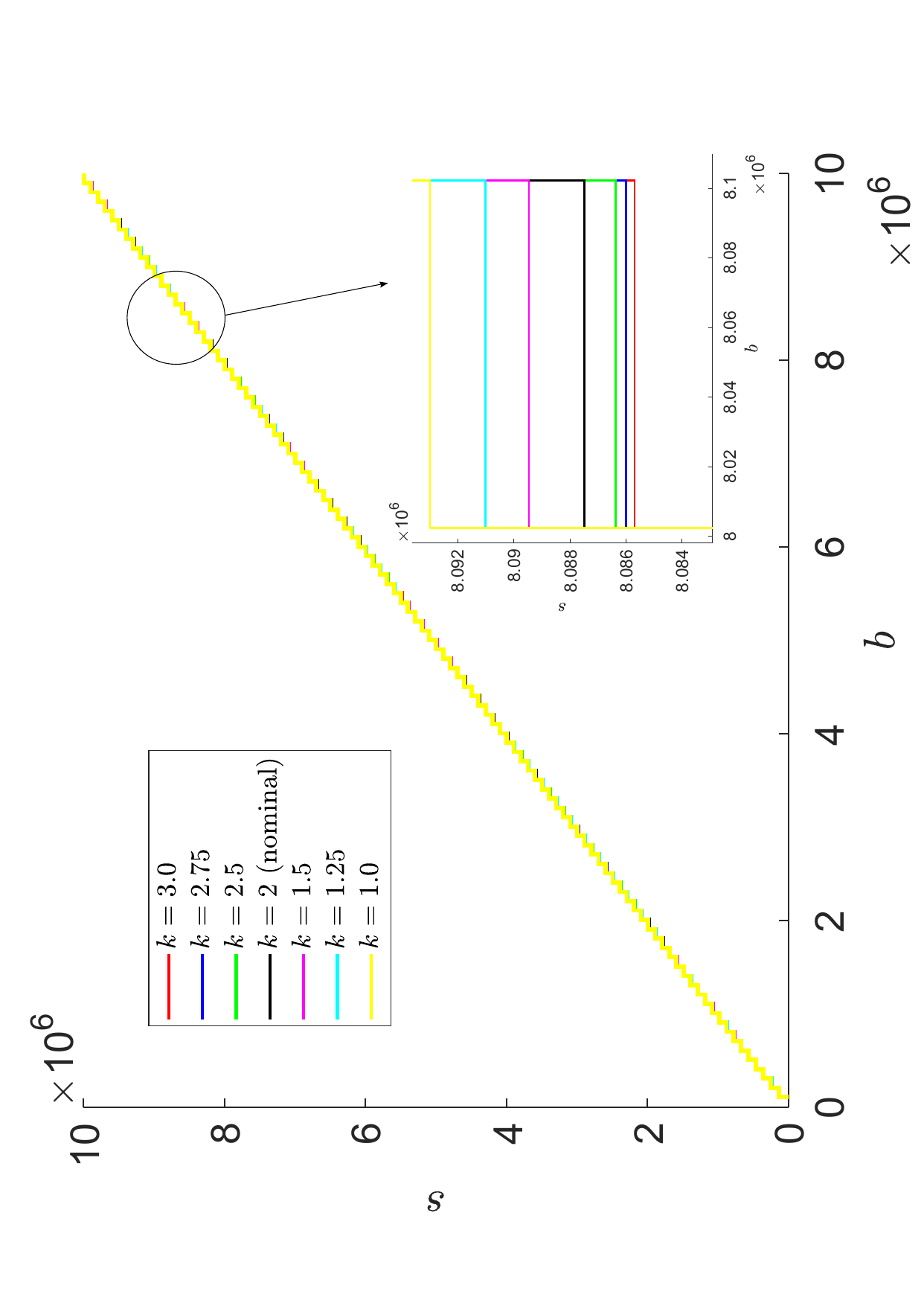}
    \caption{Numerical analysis of the sensitivity to changes in the parameter  $k$, Hill coefficient of the function providing the antibiotic killing rate depending on its concentration. Increasing $k$ expands the success set, because it is associated with an increased the efficacy of the antibiotic at the provided concentrations (above $u_M>EC_{50}$). $I_1 = 0$ and $I_2 = 7.4364 \cdot 10^3$.}
    \label{fig:changingk}
\end{figure}

\begin{figure}[p]
    \centering
        \centering
        \includegraphics[width=0.9\textwidth]{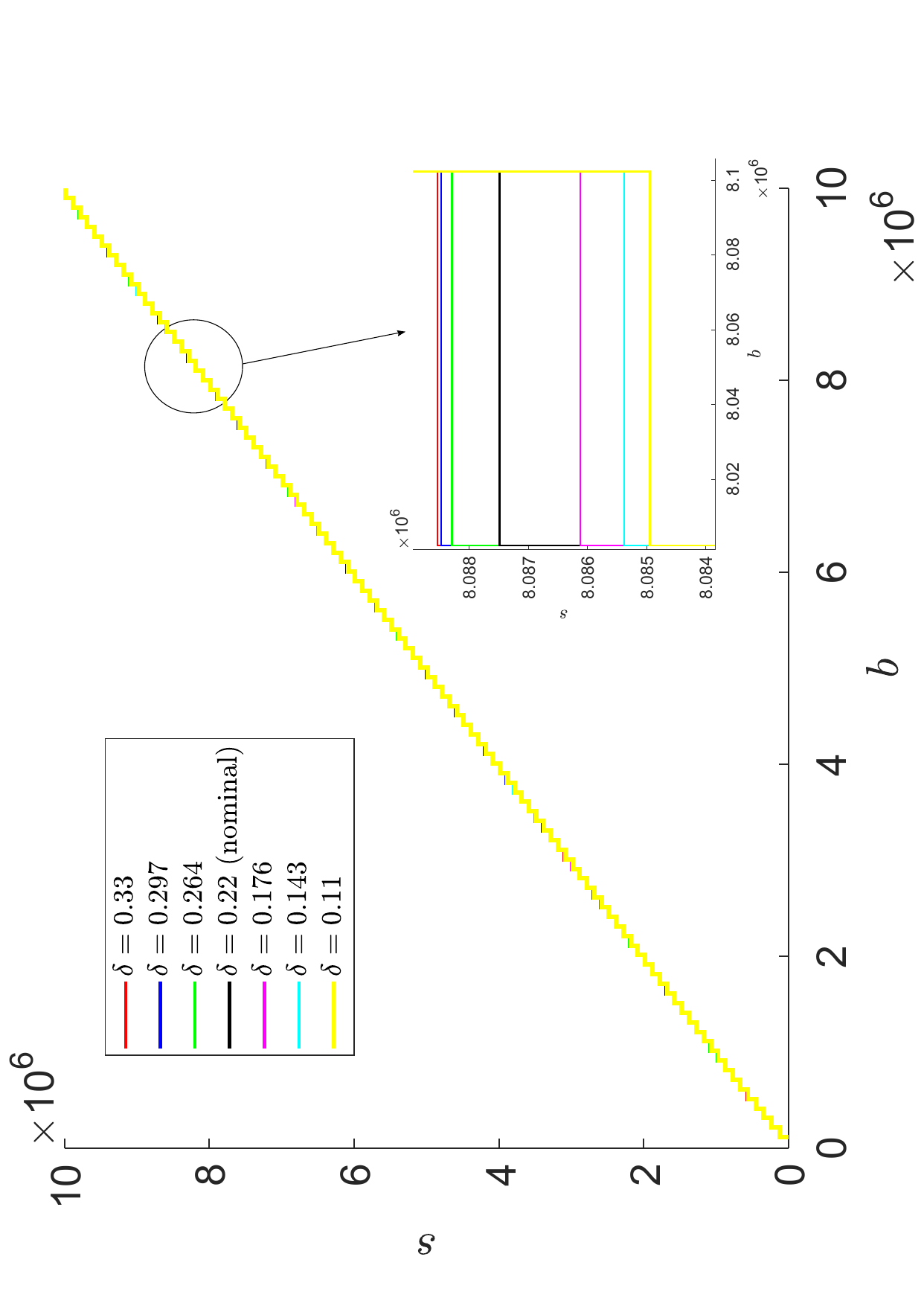}
    \caption{Numerical analysis of the sensitivity to changes in the parameter $\delta$, antibiotic decay rate. Increasing $\delta$ shrinks the success set, because it reduces the average antibiotic concentration. $I_1 = 0$ and $I_2 = 3.6264 \cdot 10^3$.}
    \label{fig:changingdelta}
\end{figure}

\begin{figure}[p]
    \centering
        \centering
        \includegraphics[width=0.9\textwidth]{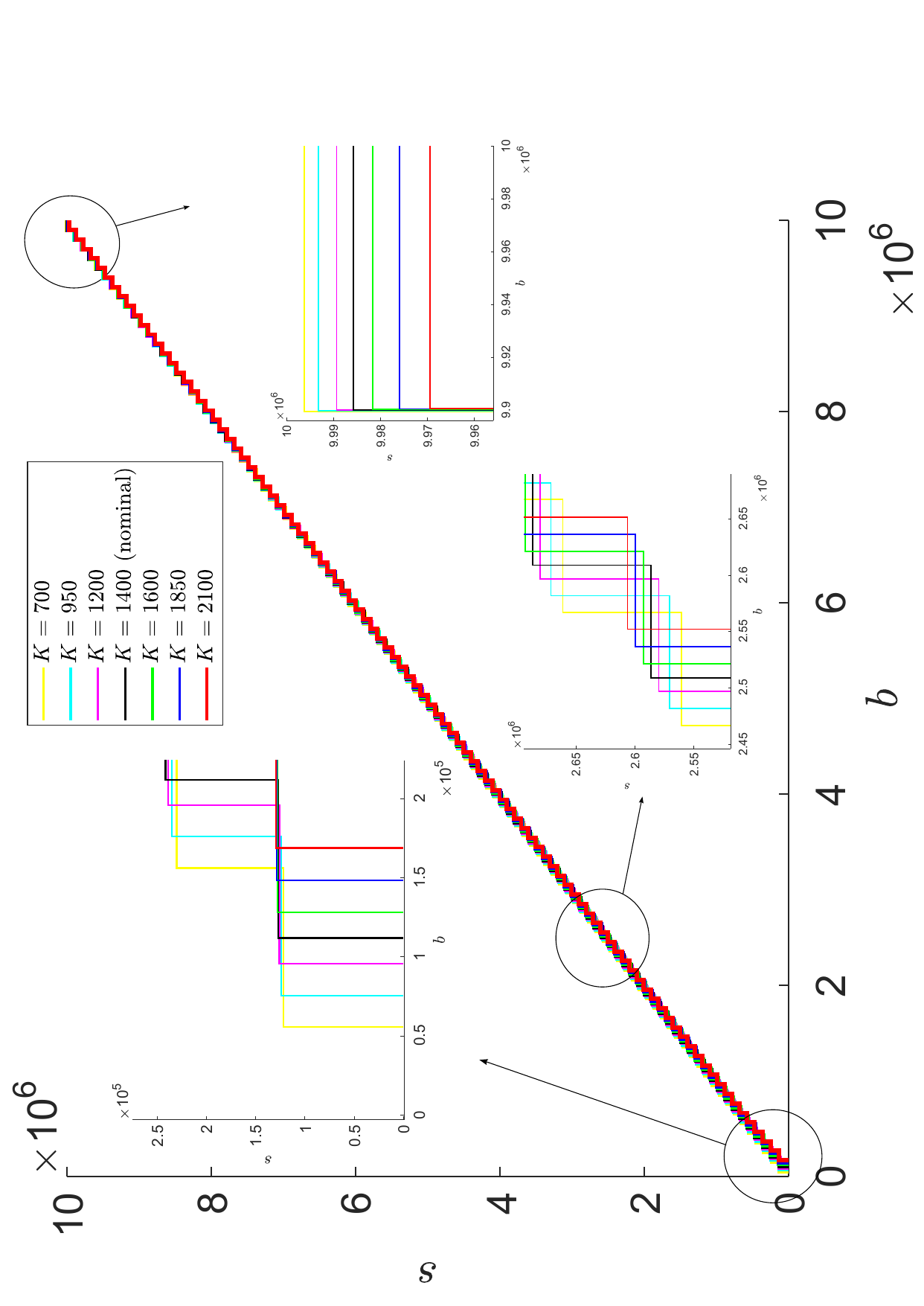}
    \caption{Numerical analysis of the sensitivity to changes in the parameter  $K$, total bacterial load yielding half-maximal killing rate due to the immune system. Increasing $K$ increases $b_1^*$ and expands the success set, as it is associated with a stronger immune system response. $I_1 = 1.1317 \cdot 10^5$ and $I_2 = 4.8308 \cdot 10^4$.}
    \label{fig:changingK}
\end{figure}

\begin{figure}[p]
    \centering
        \centering
        \includegraphics[width=0.9\textwidth]{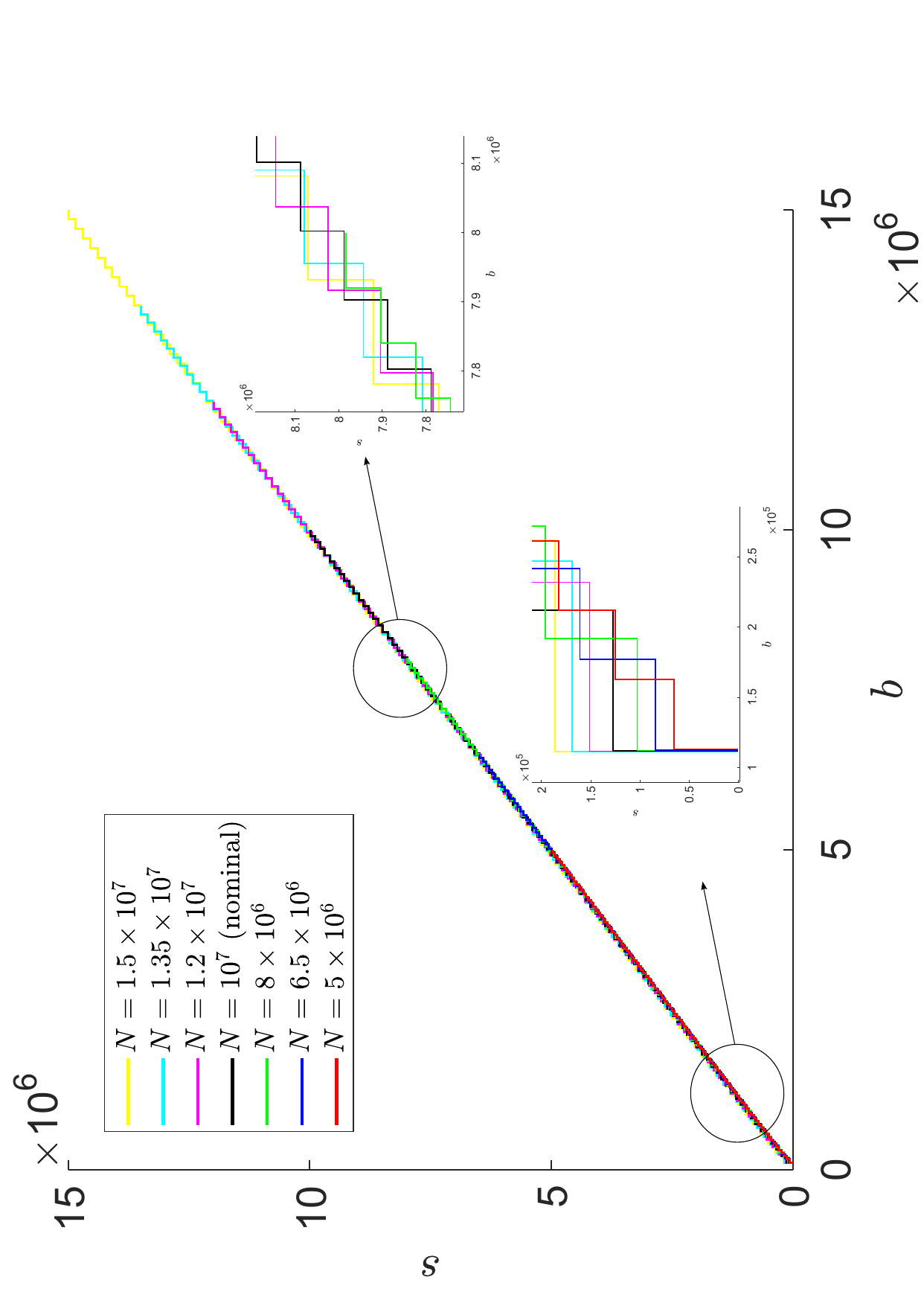}
    \caption{Numerical analysis of the sensitivity to changes in the parameter $N$. Increasing $N$ increases $b_1^*$. Note that the function $g(b)$ can only be computed for $b \in [b_1^*, N]$. $I_1 = 1.7574 \cdot 10^3$.}
    \label{fig:changingN}
\end{figure}

\begin{figure}[h!]
    \centering
        \centering
        \includegraphics[scale = 0.5]{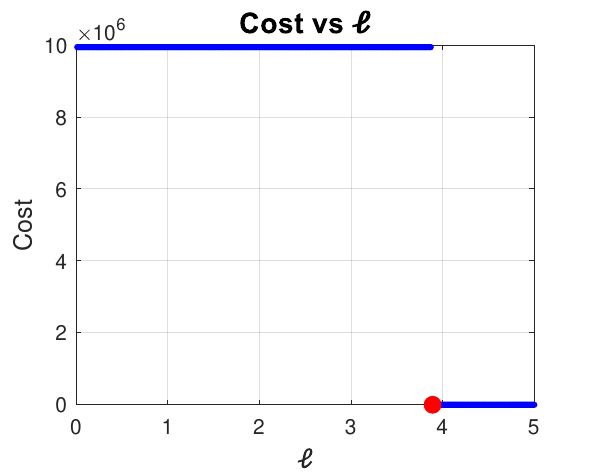}
        \includegraphics[scale = 0.4]{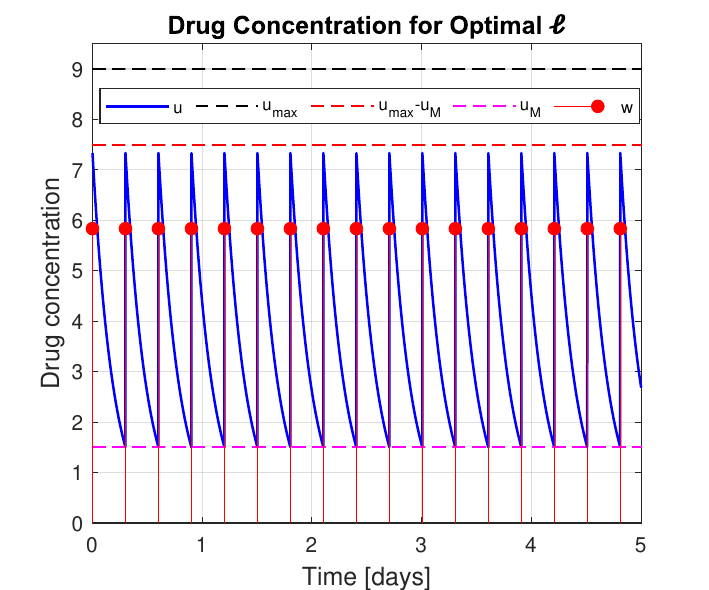}
        \includegraphics[scale = 0.5]{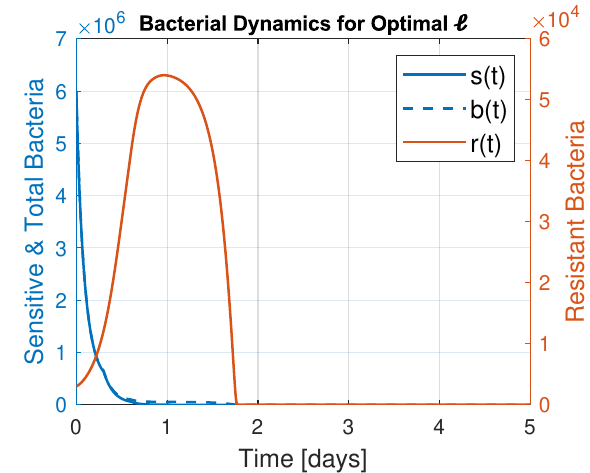}
    \caption{Optimal control problem \eqref{eq:optimalcontrolV_SP} with $\zeta_2=1$ and $\zeta_1=\zeta_3=0$ and initial condition IC1 in the success set ($s_0= 6\cdot 10^6$ and $r_0 = 3\cdot10^3$): computation of the cost $J$ as a function of $\ell$ (left) and then, with the selected optimal treatment, time evolution of the drug concentration and administered drug doses (middle), and time evolution of the state variables $b$, $s$ and of $r=b-s$ (right). The parameters are as listed in Table~\ref{tb:parameters}, apart from $\beta = 10$. The minimum $J=0$ is not unique; to avoid misuse of antibiotic, we select the optimal solution corresponding to the smallest dose, obtained for $\ell = 3.90$. }
    \label{fig:suppl_OCP1}
\end{figure}

\begin{figure}[h!]
    \centering
        \centering
        \includegraphics[scale = 0.5]{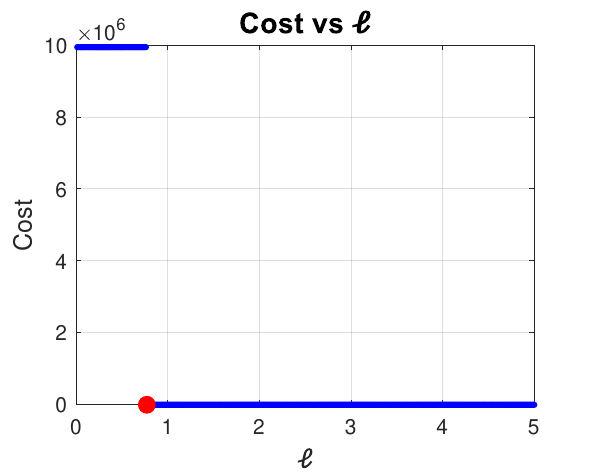}
        \includegraphics[scale = 0.4]{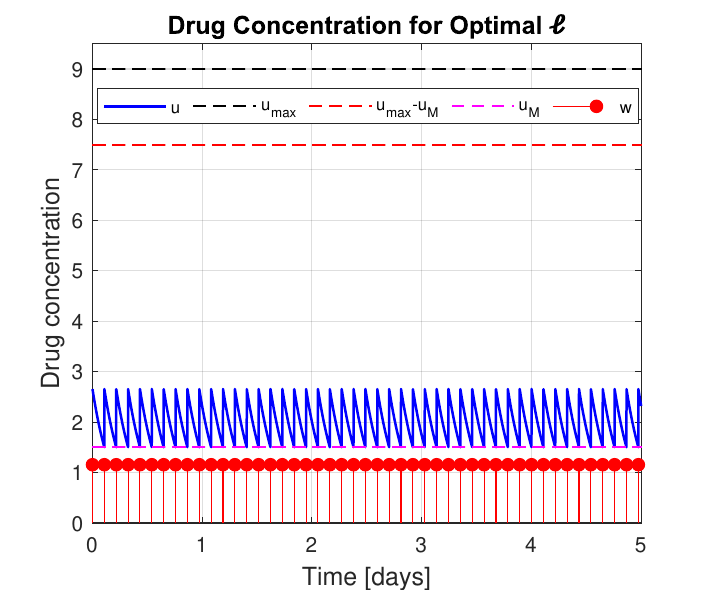}
        \includegraphics[scale = 0.5]{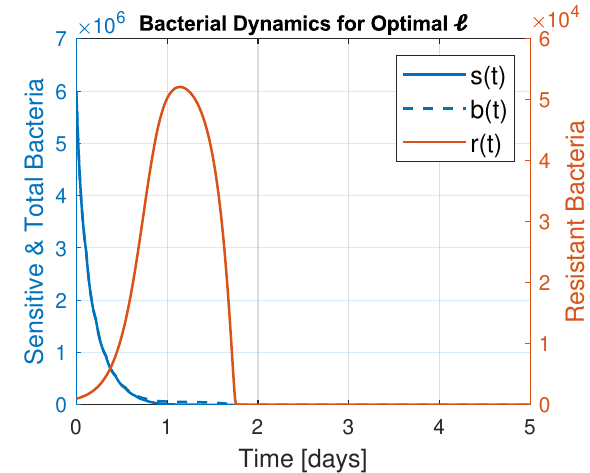}
    \caption{Optimal control problem \eqref{eq:optimalcontrolV_SP} with $\zeta_2=1$ and $\zeta_1=\zeta_3=0$ and initial condition IC2 in the success set ($s_0= 6.002\cdot 10^6$ and $r_0 = 10^3$): computation of the cost $J$ as a function of $\ell$ (left) and then, with the selected optimal treatment, time evolution of the drug concentration and administered drug doses (middle), and time evolution of the state variables $b$, $s$ and of $r=b-s$ (right). The parameters are as listed in Table~\ref{tb:parameters}, apart from $\beta = 10$. The minimum $J=0$ is not unique; to avoid misuse of antibiotic, we select the optimal solution corresponding to the smallest dose, obtained for $\ell = 0.77$. }
    \label{fig:suppl_OCP3}
\end{figure}

\begin{figure}[h!]
    \centering
        \centering
        \includegraphics[scale = 0.5]{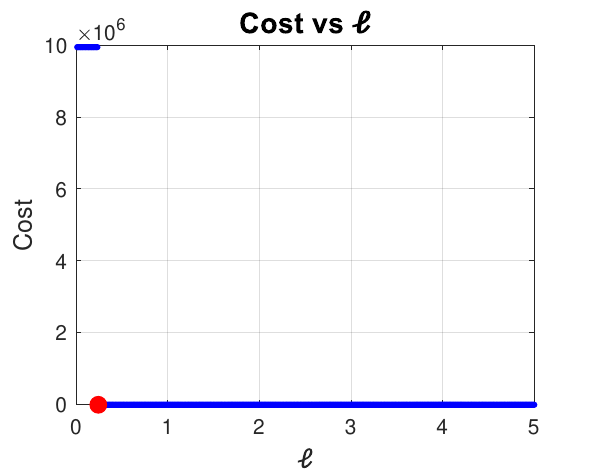}
        \includegraphics[scale = 0.4]{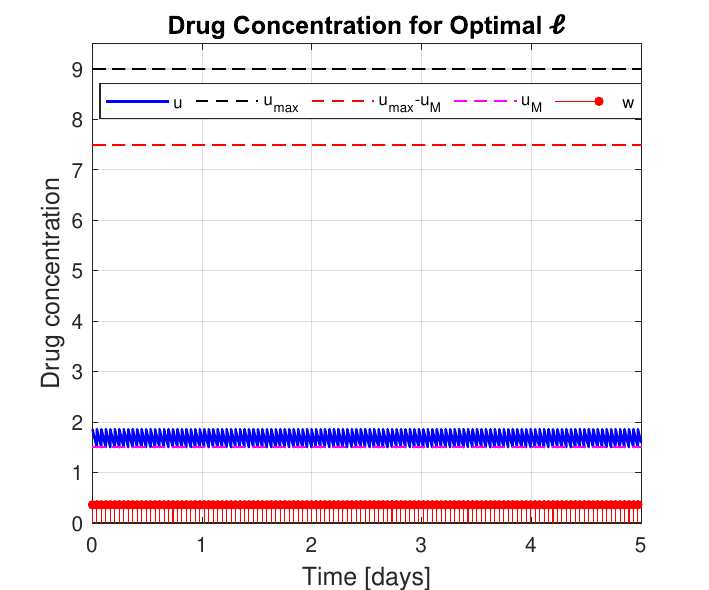}
        \includegraphics[scale = 0.5]{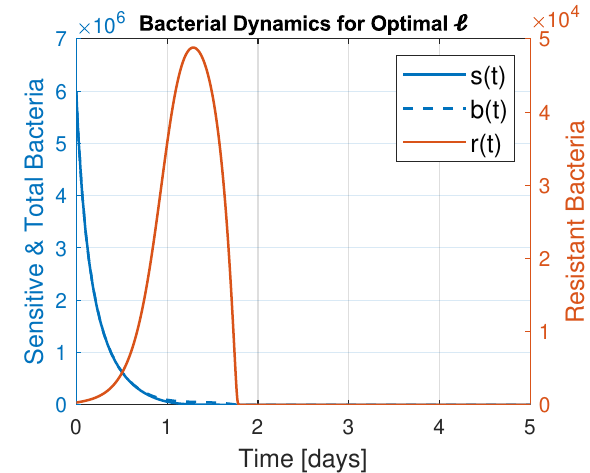}
    \caption{Optimal control problem \eqref{eq:optimalcontrolV_SP} with $\zeta_2=1$ and $\zeta_1=\zeta_3=0$ and initial condition IC3 in the success set ($s_0= 6\cdot 10^6$ and $r_0 = 3\cdot10^2$): computation of the cost $J$ as a function of $\ell$ (left) and then, with the selected optimal treatment, time evolution of the drug concentration and administered drug doses (middle), and time evolution of the state variables $b$, $s$ and of $r=b-s$ (right). The parameters are as listed in Table~\ref{tb:parameters}, apart from $\beta = 10$. The minimum $J=0$ is not unique; to avoid misuse of antibiotic, we select the optimal solution corresponding to the smallest dose, obtained for $\ell = 0.24$. }
    \label{fig:suppl_OCP2}
\end{figure}

\begin{figure}[h!]
    \centering
        \centering
        \includegraphics[scale = 0.5]{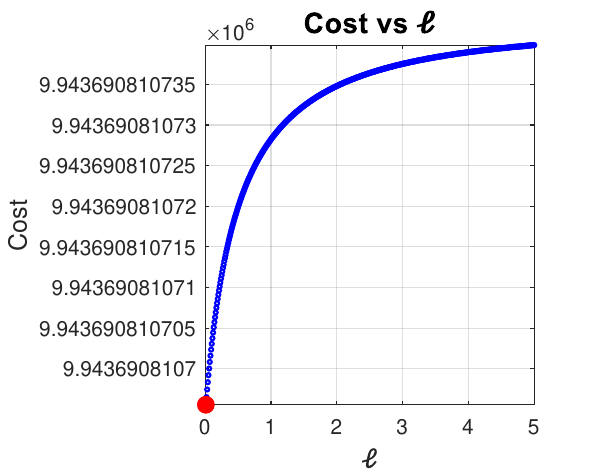}
        \includegraphics[scale = 0.4]{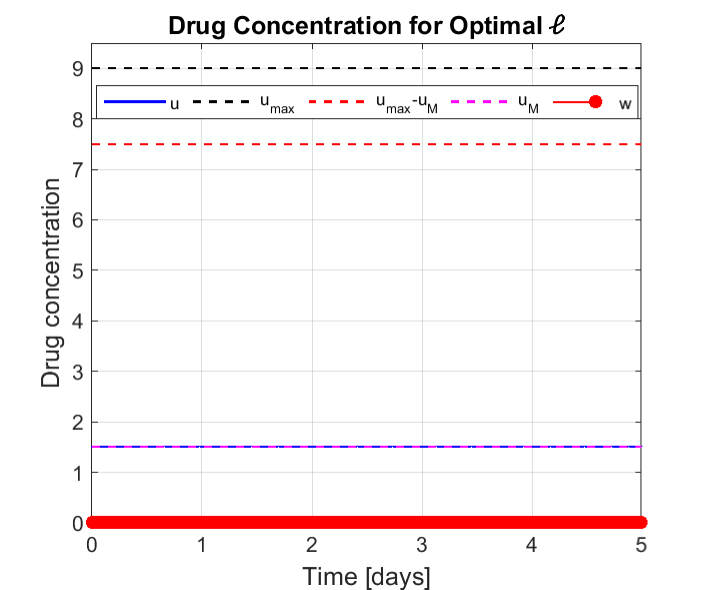}
        \includegraphics[scale = 0.5]{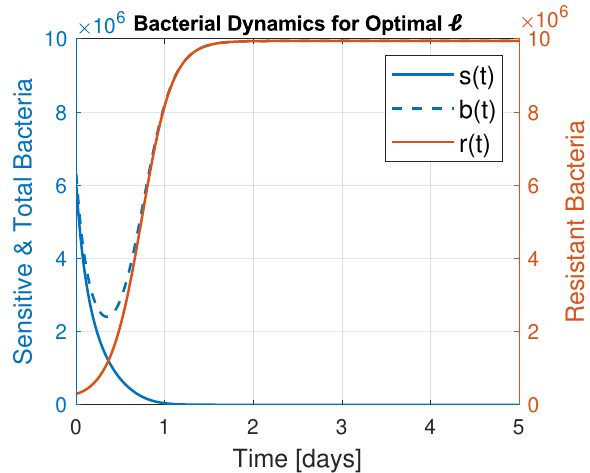}
    \caption{Optimal control problem \eqref{eq:optimalcontrolV_SP} with $\zeta_2=1$ and $\zeta_1=\zeta_3=0$ and initial condition outside the success set ($s_0= 6\cdot 10^6$ and $r_0 = 3\cdot10^5$): computation of the cost $J$ as a function of $\ell$ (left) and then, with the selected optimal treatment, time evolution of the drug concentration and administered drug doses (middle), and time evolution of the state variables $b$, $s$ and of $r=b-s$ (right). The parameters are as listed in Table~\ref{tb:parameters}, apart from $\beta = 10$. The therapy is not effective, and $b \to b_2^*$.}
    \label{fig:suppl_OCP4}
\end{figure}

\begin{figure}[h!]
    \centering
        \centering
        \includegraphics[scale = 0.5]{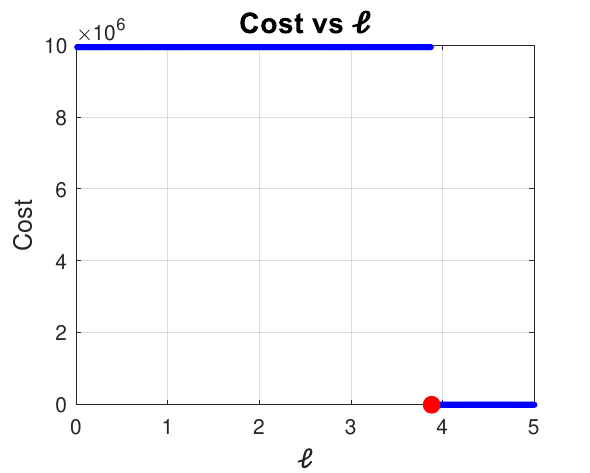}
        \includegraphics[scale = 0.4]{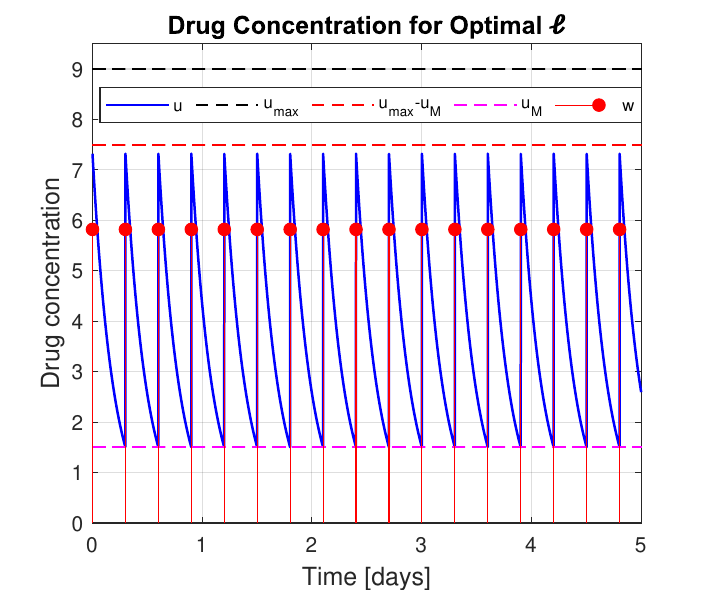}
        \includegraphics[scale = 0.5]{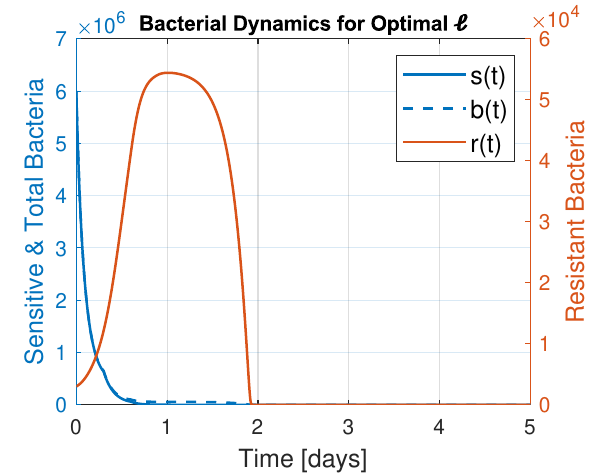}
    \caption{Optimal control problem \eqref{eq:optimalcontrolV_SP} with $\zeta_1=\zeta_2=1$ and $\zeta_3=0$ and initial condition IC1 in the success set ($s_0= 6\cdot 10^6$ and $r_0 = 3\cdot10^3$): computation of the cost $J$ as a function of $\ell$ (left) and then, with the selected optimal treatment, time evolution of the drug concentration and administered drug doses (middle), and time evolution of the state variables $b$, $s$ and of $r=b-s$ (right). The parameters are as listed in Table~\ref{tb:parameters}, apart from $\beta = 10$. The optimal solution is obtained for $\ell = 3.88$.}
    \label{fig:suppl_OCP5}
\end{figure}

\begin{figure}[h!]
    \centering
        \centering
        \includegraphics[scale = 0.5]{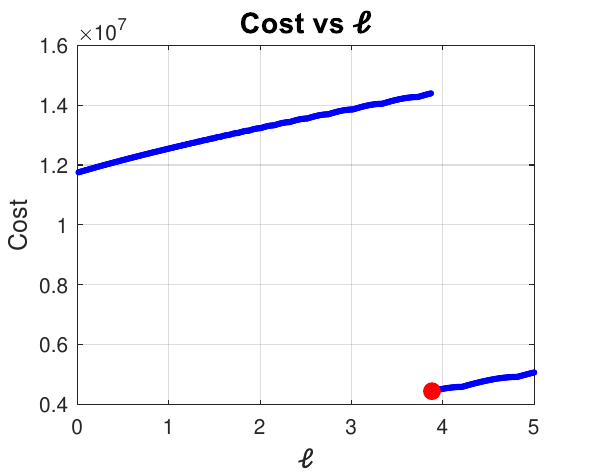}
        \includegraphics[scale = 0.4]{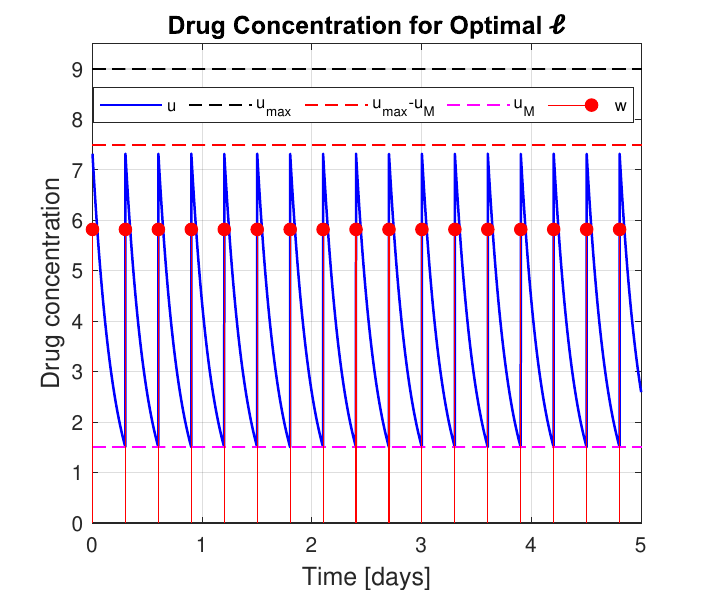}
        \includegraphics[scale = 0.5]{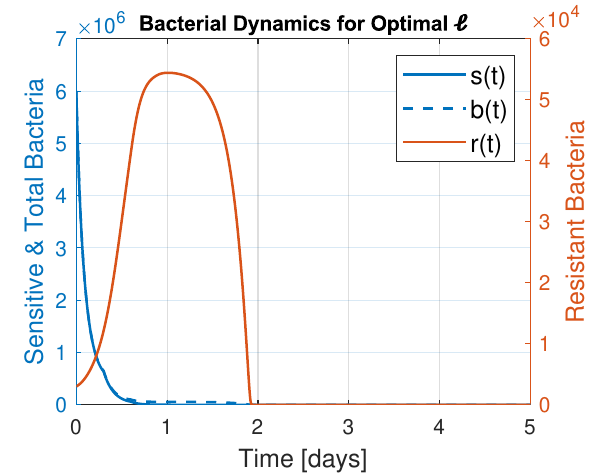}
    \caption{Optimal control problem \eqref{eq:optimalcontrolV_SP} with $\zeta_1=10^4$, $\zeta_2=1$ and $\zeta_3=0$ and initial condition IC1 in the success set ($s_0= 6\cdot 10^6$ and $r_0 = 3\cdot10^3$): computation of the cost $J$ as a function of $\ell$ (left) and then, with the selected optimal treatment, time evolution of the drug concentration and administered drug doses (middle), and time evolution of the state variables $b$, $s$ and of $r=b-s$ (right). The parameters are as listed in Table~\ref{tb:parameters}, apart from $\beta = 10$. The optimal solution is obtained for $\ell = 3.89$.}
    \label{fig:suppl_OCP6}
\end{figure}

\begin{figure}[h!]
    \centering
        \centering
        \includegraphics[scale = 0.5]{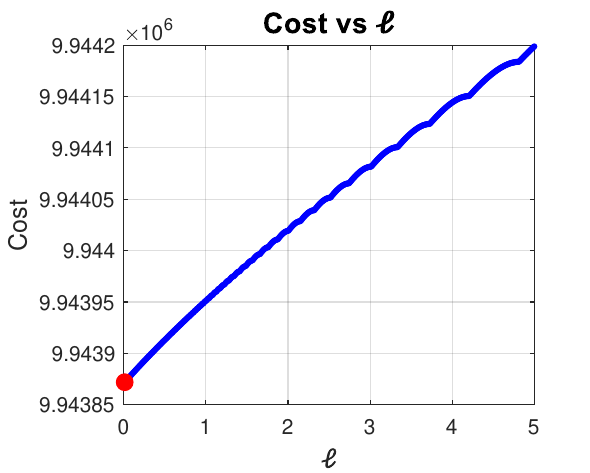}
        \includegraphics[scale = 0.4]{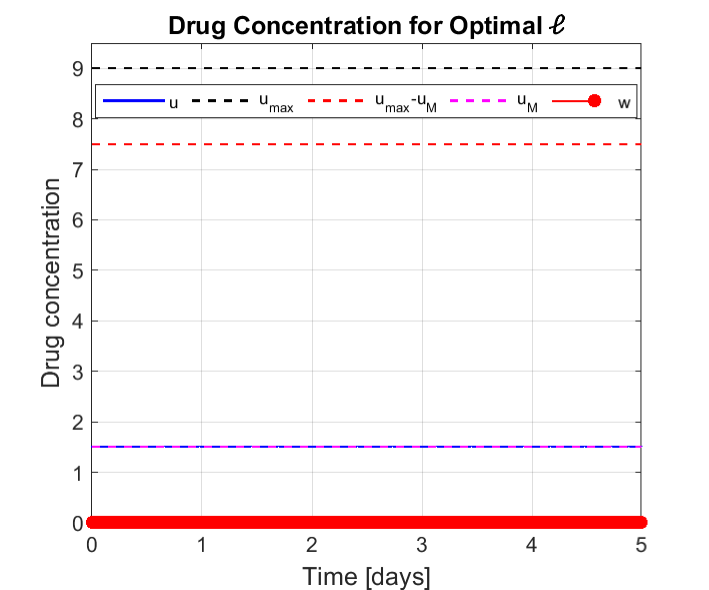}
        \includegraphics[scale = 0.5]{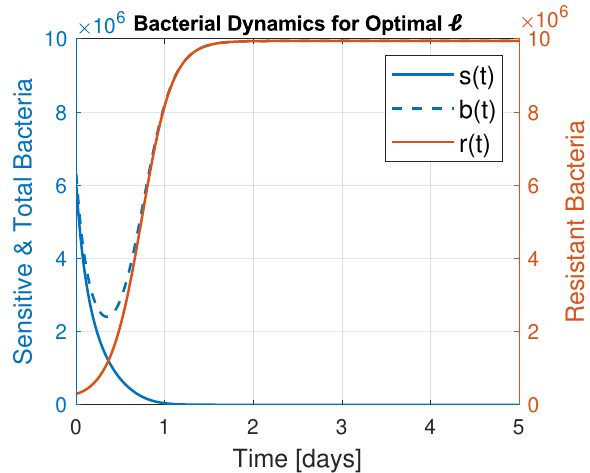}
    \caption{Optimal control problem \eqref{eq:optimalcontrolV_SP} with $\zeta_1=\zeta_2=1$ and $\zeta_3=0$ and initial condition outside the success set ($s_0= 6\cdot 10^6$ and $r_0 = 3\cdot10^5$): computation of the cost $J$ as a function of $\ell$ (left) and then, with the selected optimal treatment, time evolution of the drug concentration and administered drug doses (middle), and time evolution of the state variables $b$, $s$ and of $r=b-s$ (right). The parameters are as listed in Table~\ref{tb:parameters}, apart from $\beta = 10$. The therapy is not effective, and $b \to b_2^*$.}
    \label{fig:suppl_OCP11}
\end{figure}

\begin{figure}[h!]
    \centering
        \centering
        \includegraphics[scale = 0.5]{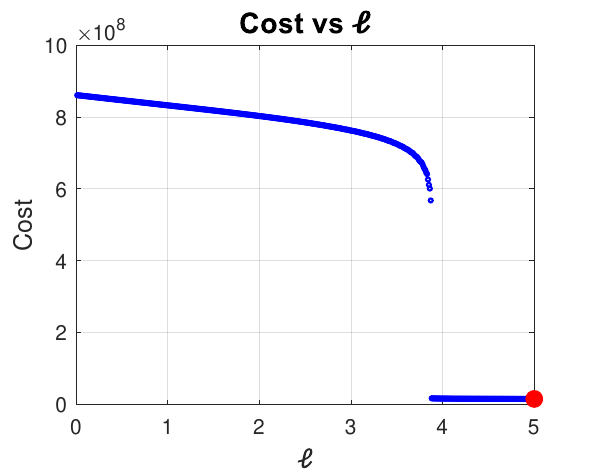}
        \includegraphics[scale = 0.4]{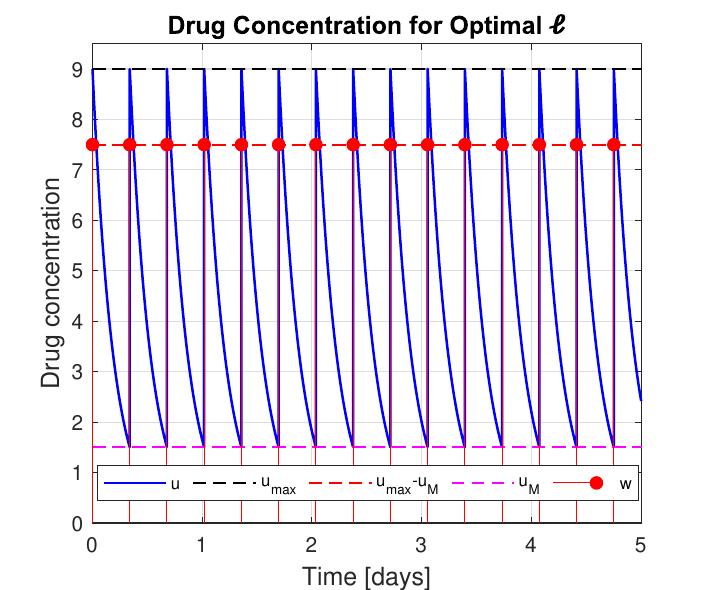}
        \includegraphics[scale = 0.5]{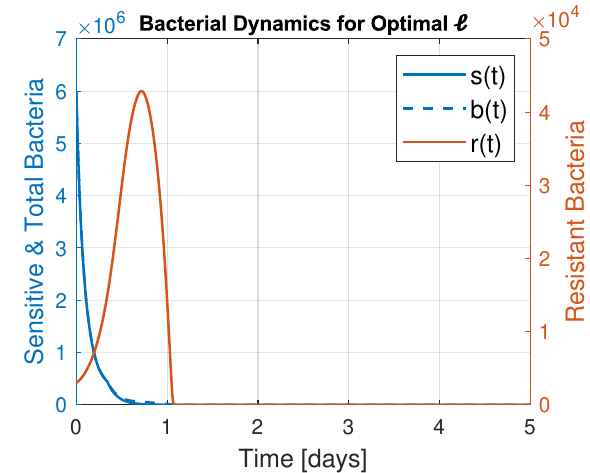}
    \caption{Optimal control problem \eqref{eq:optimalcontrolV_SP} with $\zeta_1=\zeta_2=\zeta_3=1$ and initial condition IC1 in the success set ($s_0= 6\cdot 10^6$ and $r_0 = 3\cdot10^3$): computation of the cost $J$ as a function of $\ell$ (left) and then, with the selected optimal treatment, time evolution of the drug concentration and administered drug doses (middle), and time evolution of the state variables $b$, $s$ and of $r=b-s$ (right). The parameters are as listed in Table~\ref{tb:parameters}, apart from $\beta = 10$. The optimal solution is obtained for $\ell = 5$.}
    \label{fig:suppl_OCP7}
\end{figure}

\begin{figure}[h!]
    \centering
        \centering
        \includegraphics[scale = 0.5]{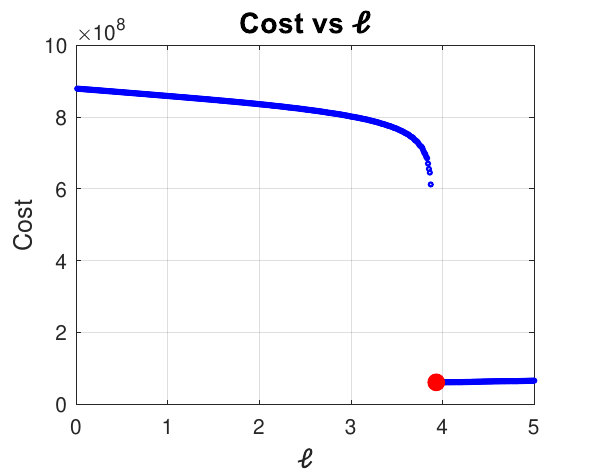}
        \includegraphics[scale = 0.4]{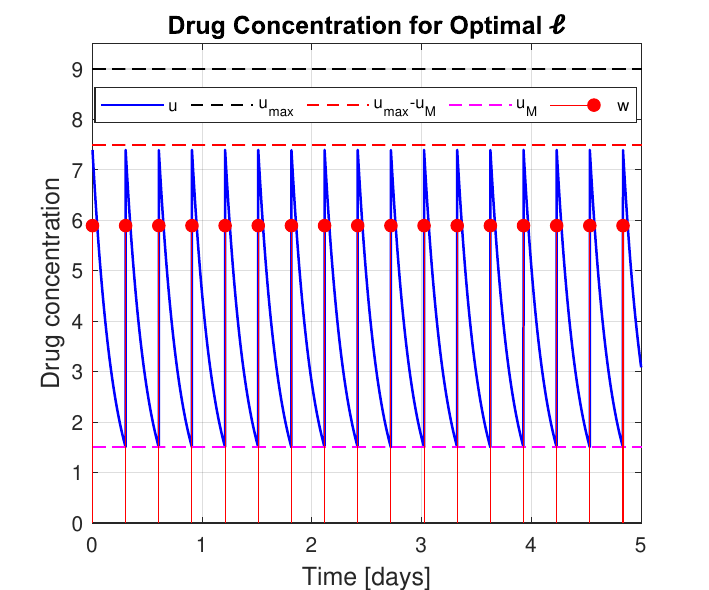}
        \includegraphics[scale = 0.5]{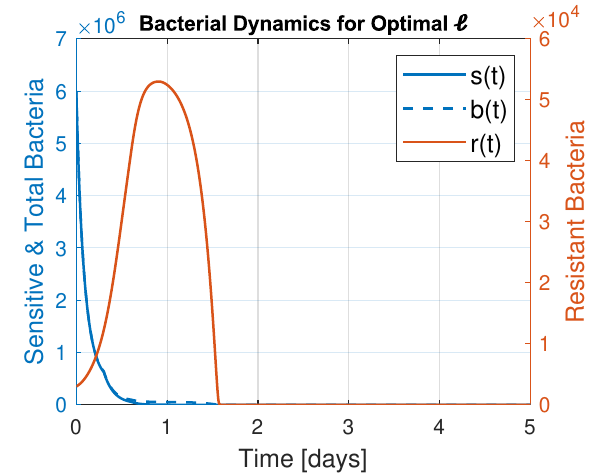}
    \caption{Optimal control problem \eqref{eq:optimalcontrolV_SP} with $\zeta_1=10^5$ and $\zeta_2=\zeta_3=1$ and initial condition IC1 in the success set ($s_0= 6\cdot 10^6$ and $r_0 = 3\cdot10^3$): computation of the cost $J$ as a function of $\ell$ (left) and then, with the selected optimal treatment, time evolution of the drug concentration and administered drug doses (middle), and time evolution of the state variables $b$, $s$ and of $r=b-s$ (right). The parameters are as listed in Table~\ref{tb:parameters}, apart from $\beta = 10$. The optimal solution is obtained for $\ell = 3.93$.}
    \label{fig:suppl_OCP8}
\end{figure}

\begin{figure}[h!]
    \centering
        \centering
        \includegraphics[scale = 0.5]{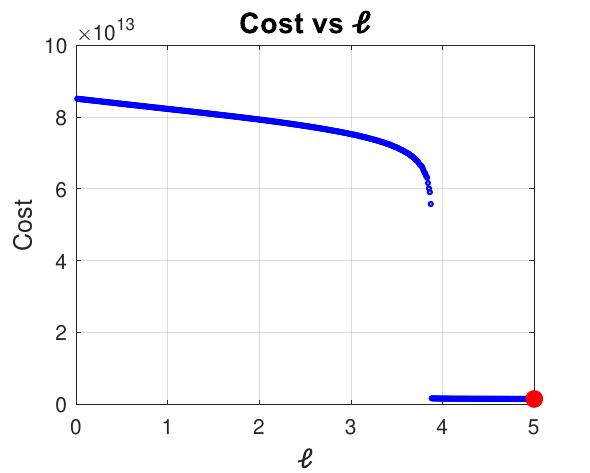}
        \includegraphics[scale = 0.4]{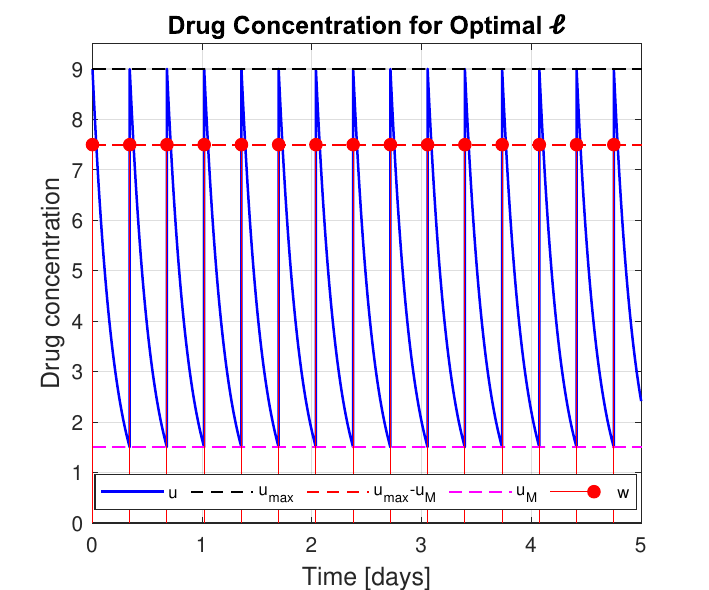}
        \includegraphics[scale = 0.5]{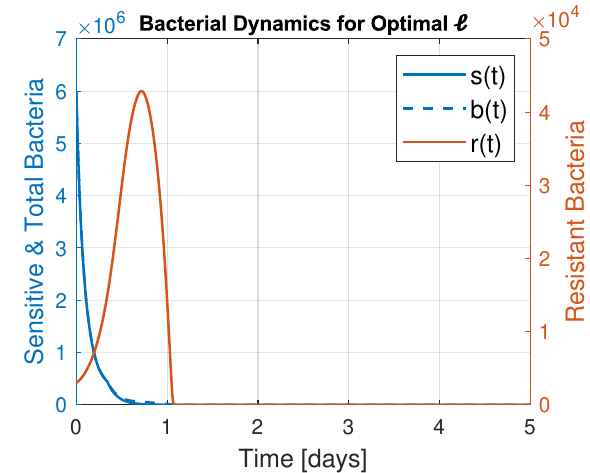}
    \caption{Optimal control problem \eqref{eq:optimalcontrolV_SP} with $\zeta_1= 10^5$, $\zeta_2 =1$ and $\zeta_3 = 10^5$ and initial condition IC1 in the success set ($s_0= 6\cdot 10^6$ and $r_0 = 3\cdot10^3$): computation of the cost $J$ as a function of $\ell$ (left) and then, with the selected optimal treatment, time evolution of the drug concentration and administered drug doses (middle), and time evolution of the state variables $b$, $s$ and of $r=b-s$ (right). The parameters are as listed in Table~\ref{tb:parameters}, apart from $\beta = 10$. The optimal solution is obtained for $\ell = 5$.}
    \label{fig:suppl_OCP9}
\end{figure}

\begin{figure}[h!]
    \centering
        \centering
        \includegraphics[scale = 0.5]{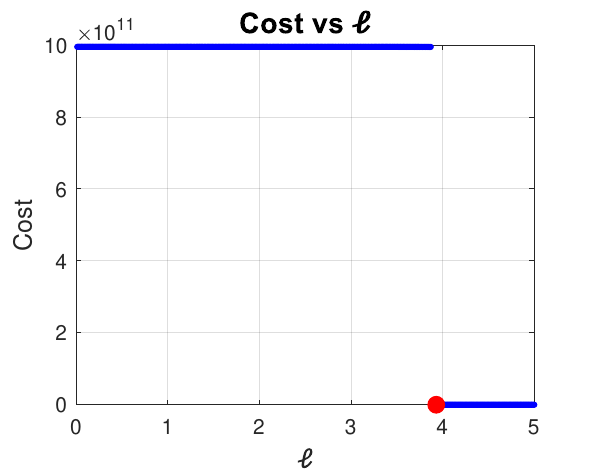}
        \includegraphics[scale = 0.4]{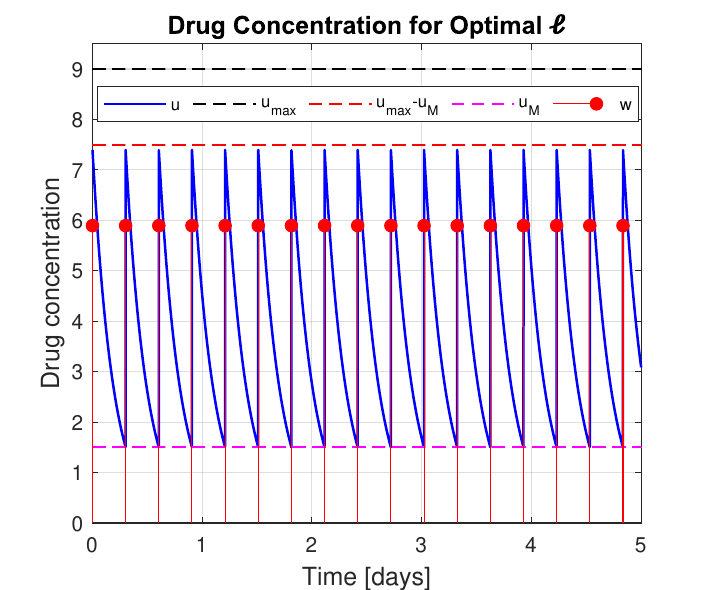}
        \includegraphics[scale = 0.5]{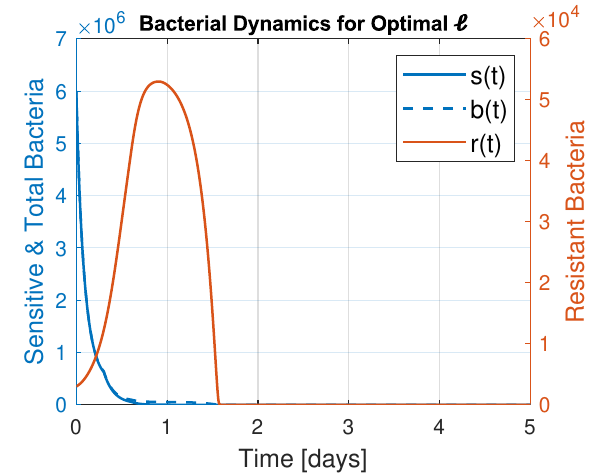}
    \caption{Optimal control problem \eqref{eq:optimalcontrolV_SP} with $\zeta_1= 10^5$, $\zeta_2 =10^5$ and $\zeta_3 = 1$ and initial condition IC1 in the success set ($s_0= 6\cdot 10^6$ and $r_0 = 3\cdot10^3$): computation of the cost $J$ as a function of $\ell$ (left) and then, with the selected optimal treatment, time evolution of the drug concentration and administered drug doses (middle), and time evolution of the state variables $b$, $s$ and of $r=b-s$ (right). The parameters are as listed in Table~\ref{tb:parameters}, apart from $\beta = 10$. The optimal solution is obtained for $\ell = 3.93$.}
    \label{fig:suppl_OCP10}
\end{figure}

\begin{figure}[h!]
    \centering
        \centering
        \includegraphics[scale = 0.5]{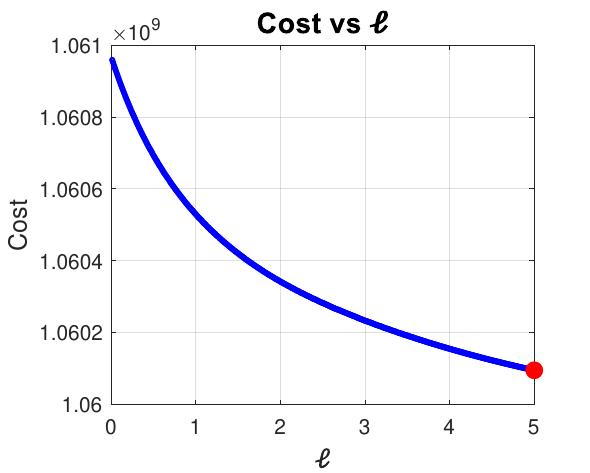}
        \includegraphics[scale = 0.4]{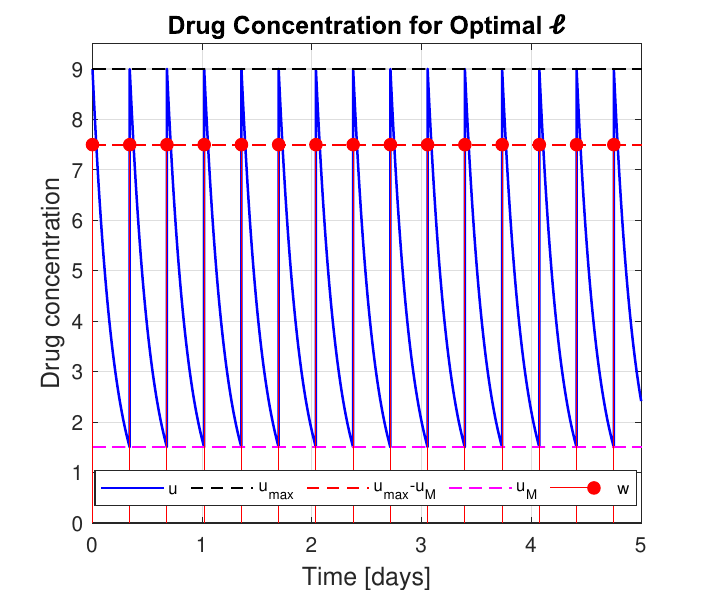}
        \includegraphics[scale = 0.5]{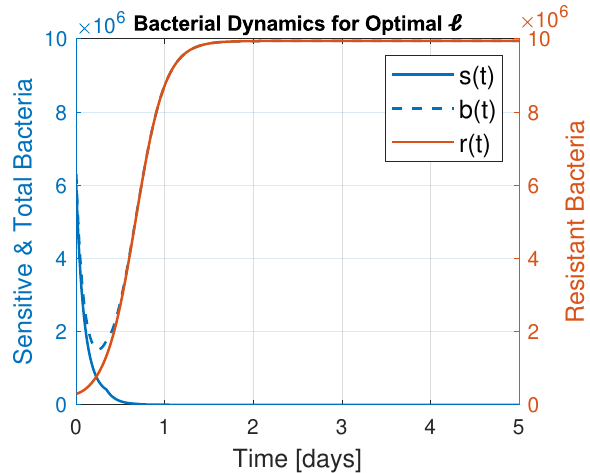}
    \caption{Optimal control problem \eqref{eq:optimalcontrolV_SP} with $\zeta_1=\zeta_2=\zeta_3=1$ and initial condition outside the success set ($s_0= 6\cdot 10^6$ and $r_0 = 3\cdot10^5$): computation of the cost $J$ as a function of $\ell$ (left) and then, with the selected optimal treatment, time evolution of the drug concentration and administered drug doses (middle), and time evolution of the state variables $b$, $s$ and of $r=b-s$ (right). The parameters are as listed in Table~\ref{tb:parameters}, apart from $\beta = 10$. The therapy is not effective, and $b \to b_2^*$.}
    \label{fig:suppl_OCP12}
\end{figure}

\begin{figure}[t]
    \centering
    \includegraphics[scale=0.5]{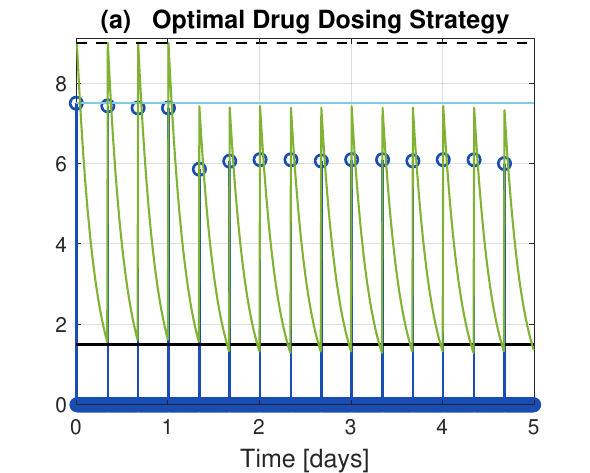}
    \includegraphics[scale=0.5]{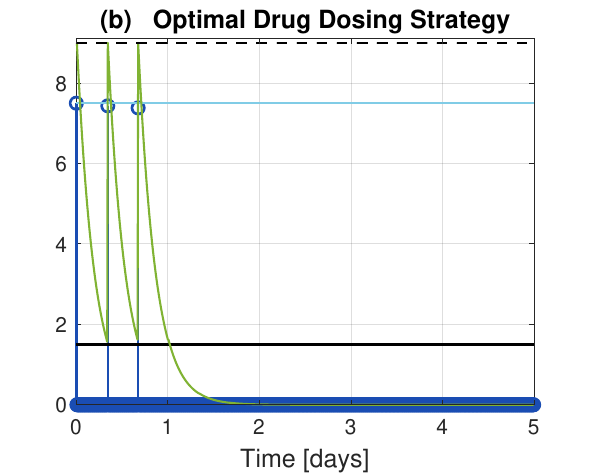}
    \includegraphics[scale=0.5]{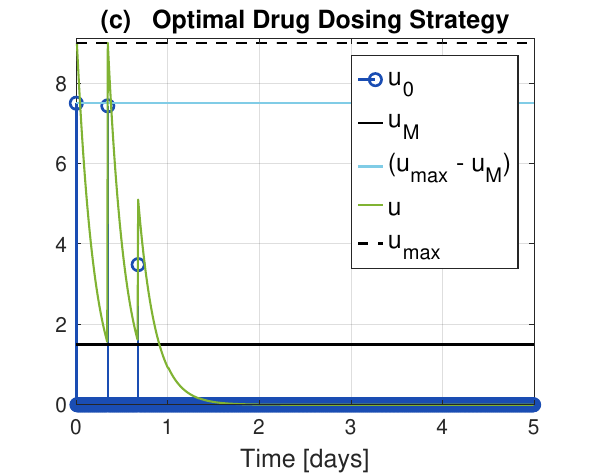}
    \includegraphics[scale=0.5]{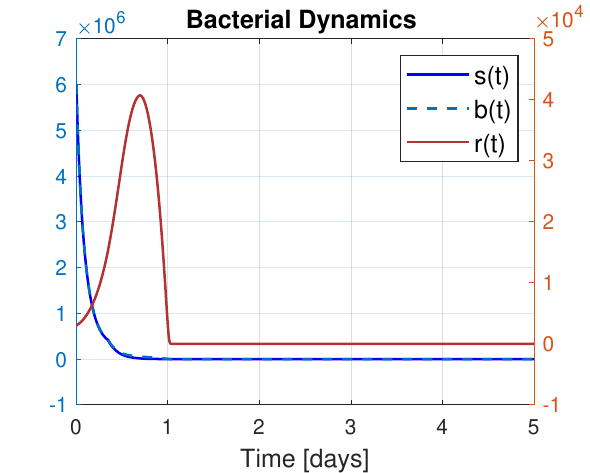} 
    \includegraphics[scale=0.5]{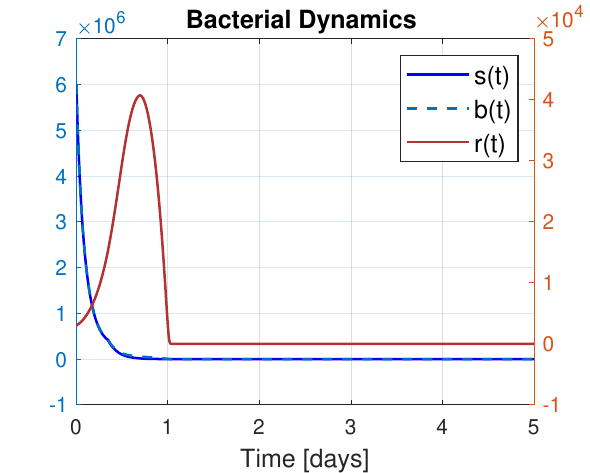}
    \includegraphics[scale=0.5]{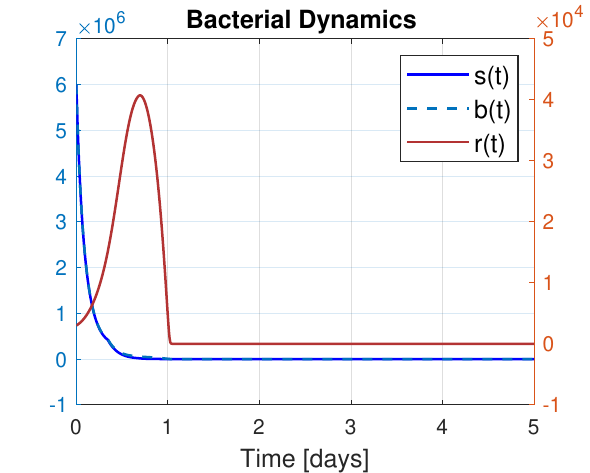}
    \caption{Optimal control problem \eqref{eq:ocp} with (a) $\zeta_3=10^3$ and $\zeta_1=1$, (b) $\zeta_3=\zeta_1=1$, (c) $\zeta_3=1$ and $\zeta_1=10^3$, and with initial condition IC1 in the success set ($s(0)=6\cdot 10^6$ and $r(0)=3 \cdot 10^3$): time evolution of the drug concentration and administered drug doses (first row) and time evolution of the state variables $b$, $s$ and of $r=b-s$ (second row). The parameters are as listed in Table~\ref{tb:parameters}, apart from $\beta = 10$.}
    \label{fig:ocp_test}
\end{figure}

\end{document}